\newcommand{\sketch}[1]{}
\newtheorem{thm}{Theorem}[section]
\newtheorem{obs}[thm]{Observation}
\newtheorem{lem}[thm]{Lemma}
\newtheorem{lemma}[thm]{Lemma}
\newtheorem{cor}[thm]{Corollary}
\newtheorem{assumption}[thm]{Assumption}
\newtheorem{claim}[thm]{Claim}
\newtheorem{rem}[thm]{Remark}
\def \RR   {{\mathbb R}}
\def \OPT  {\mbox{\rm OPT}}
\date{}
\newcommand{\ox}{\ensuremath{\overline{x}}\xspace}
\newcommand{\oy}{\ensuremath{\overline{y}}\xspace}
\newcommand{\trans}{\intercal}
\newcommand{\removefromfile}[1]{}
\newcommand{\ignore}[1]{}
\newenvironment{proofof}[1]{

\noindent{\bf Proof of {#1}:}}
{\hfill$\square$

}
\def\E{\mathbb{E}}
\def\oy{\overline{y}}
\def\ox{\overline{x}}
\def\oF{\overline{F}}
\def\cS{\mathcal{S}}
\def\sse{\subseteq}
\begin{document}
\title{Online Packing and Covering Framework with Convex Objectives}
\date{}
\author{Niv Buchbinder\thanks{Statistics and Operations Research Dept., Tel Aviv University, Research supported in part by ISF grant
954/11 and by BSF grant 2010426.}
\and
Shahar Chen\thanks{Technion - Israel Institute of Technology, Haifa, Israel. Work
supported by ISF grant 954/11 and BSF grant 2010426.}
\and
Anupam Gupta\thanks{Computer Science Department, Carnegie Mellon
    University, Pittsburgh, PA 15213, USA. Research partly supported by
    NSF awards CCF-1016799 and CCF-1319811.}
\and
Viswanath Nagarajan\thanks{Department of Industrial and Operations
  Engineering, University of Michigan, Ann Arbor, MI 48109.}
\and
Joseph (Seffi) Naor$^\dagger$}

\maketitle
\begin{abstract}
  We consider online fractional covering problems with a convex
  objective, where the covering constraints arrive over time. Formally,
  we want to solve
  \[ \min\,\{ f(x) \mid Ax\ge \mathbf{1},\, x\ge 0\},\] where the
  objective function $f:\RR^n\rightarrow \RR$ is convex, and the
  constraint matrix $A_{m\times n}$ is non-negative. The rows of $A$
  arrive online over time, and we wish to maintain a feasible solution
  $x$ at all times while only increasing coordinates of $x$.  We also
  consider packing problems of the form
  \[ \max\,\{ c^\trans y - g(\mu) \mid A^\trans y \le \mu,\,
  y\ge 0\},\] where $g$ is a convex function. In the online setting,
  variables $y$ and columns of $A^\trans$ arrive over time, and we wish
  to maintain a non-decreasing solution $(y,\mu)$. These problems are dual to each other when $g = f^\star$ the Fenchel
  dual of $f$.

  \medskip We provide an online primal-dual framework for both classes
  of problems with competitive ratio depending on certain
  ``monotonicity'' and ``smoothness'' parameters of $f$; our results
  match or improve on guarantees for some special classes of
  functions $f$ considered previously.


  \medskip Using this fractional solver  with problem-dependent
  randomized rounding procedures, we obtain competitive algorithms for
  the following problems: online covering LPs minimizing $\ell_p$-norms
  of arbitrary packing constraints, set cover with multiple cost
  functions, capacity constrained facility location, capacitated
  multicast problem, set cover with set requests, and 
  profit maximization with non-separable production costs. Some of these
  results are new and others provide a unified view of previous results,
  with matching or slightly worse competitive ratios.
\end{abstract}

\newpage

\section{Introduction}

We consider the following class of fractional covering problems:
\begin{equation}\label{eq:cnx-cov}
  \min\,\{ f(x)\,\, :\,\, Ax\ge 1,\, x\ge 0\}.
\end{equation}
Above, $f:\RR^n\rightarrow \RR$ is a non-decreasing convex function and
$A_{m\times n}$ is non-negative. (Observe that we can transform the more
general constraints $Ax \geq b$ with all non-negative entries into this
form by scaling the constraints.) The covering constraints $a_i^\trans
x\ge 1$ arrive online over time, and must be satisfied upon arrival. We
want to design an online algorithm that maintains a feasible fractional
solution $x$, where $x$ is required to be non-decreasing over time.

We also consider the Fenchel dual of~\eqref{eq:cnx-cov} which is the
following packing problem:
\begin{equation}\label{eq:cnx-pack}
  \max\,\{ \mathbf{1}^\trans y - f^\star(\mu)\,\, :\,\, A^\trans  y \le
  \mu,\, y\ge 0\}.
\end{equation}
Here, the variables $y_i$ along with columns of $A^\trans$ (or,
alternatively, rows of $A$) arrive over time, and the Fenchel dual is
formally defined in~(\ref{eq:fenchel}); see, e.g.,~\cite{Rock} for
background and properties.
Let $d$ denote the {\em row sparsity} of the matrix $A$, i.e., the maximum number of non-zeroes in any row, and let $\nabla_\ell f(z)$ be the $\ell^{th}$ coordinate of the gradient of $f$ at point $z \in \RR^n$.

This paper gives an online primal-dual algorithm for this pair of convex
programs~\eqref{eq:cnx-cov} and~\eqref{eq:cnx-pack}. This extends the
widely-used online primal-dual framework for linear objective functions to the
convex case. The competitive ratio is given as the ratio between the primal and dual objective functions\footnote{However, for clarity of exposition we provide the ratio as
Dual/Primal and not vice versa.}. It depends on certain
``smoothness'' parameters of the function $f$. We provide two general
algorithms:
\begin{itemize}
\item In the first algorithm, the primal variables $x$ and dual variables $\mu$ are monotonically non-decreasing, while the dual variables $y$ are allowed to both increase and decrease over time. The competitive ratio of this algorithm is:
  \begin{equation}\label{intro:gen-apx}
    \frac{{\rm Dual}}{{\rm Primal}}\,\,\ge \,\, \max_{c > 0} \,\,
    \left[\min_{z} \left( \frac{1}{8\log (1+d ) }\min_{\ell=1}^{n}\left\{\frac{\nabla_\ell
          f(z)}{\nabla_\ell f(cz)}\right\}\right) -
    \max_{z} \left(\frac{z^\trans \nabla f(z) -
        f(z) }{f(c z)}\right)\right].
  \end{equation}


\item In the second algorithm, all variables---primal variables $x$
  as well as dual variables $y,\mu$---are required to be monotonically non-decreasing. The
  competitive ratio is slightly worse in this case, given by:
  \begin{equation}\label{intro:gen-apx-mon}
    \frac{{\rm Dual}}{{\rm Primal}}\,\,\ge \,\, \max_{c > 0} \,\,
    \left[\min_{z} \left( \frac{1}{2 \log (1+d\rho)}\min_{\ell=1}^{n}\left\{\frac{\nabla_\ell
          f(z)}{\nabla_\ell f(cz)}\right\}\right) -
    \max_{z} \left(\frac{z^\trans \nabla f(z) -
        f(z) }{f(c z)}\right)\right].
  \end{equation}
  Observe that the difference from~(\ref{intro:gen-apx}) is the additional
  parameter $\rho$, which is defined to be an upper bound on the maximum-to-minimum ratio of positive
  entries in any column of $A$.
\end{itemize}
The above expressions are difficult to parse because of their generality, so
the first special case of interest is that of linear objectives. In this
case $z^\trans \nabla f(z) = f(z)$, and also $\nabla f(z) = \nabla
f(cz)$, hence the competitive ratios are $O(\log d)$ for monotone
primals, and $O(\log (d\rho))$ for monotone primals and duals. Both of
these competitive ratios are known to be best
possible~\cite{BN-MOR,GN12-mor}.

\def\b{b}  

The applicability of our framework extends to a number of settings,
most of which have been studied before in different works. We now
outline some of these connections.
\begin{itemize}
\item {\em Mixed Covering and Packing LPs}. In this problem, covering constraints $Ax\ge 1$ arrive online. There are also $K$ ``packing constraints'' $\sum_{j=1}^n \b_{kj}\cdot x_j \le \lambda_k$, for $k\in [K]$, that are given up-front. The right hand sides $\lambda_k$ of these packing constraints are themselves variables, and the objective is to minimize the $\ell_p$-norm $(\sum_{k=1}^K \lambda_k^p)^{1/p}$ of the ``load vector'' $\lambda = (\lambda_1, \ldots, \lambda_K)$. All entries $a_{ij}$ and $\b_{kj}$ are non-negative.
    Clearly, the objective function is a monotonically non-decreasing convex function.


  We obtain an $O(p\log d)$-competitive algorithm for this problem,
  where $d\le n$ is the row-sparsity of matrix $A$. Prior to our work,
  \cite{ABFP13} gave an $O(\log K\cdot \log (d\kappa
  \gamma))$-competitive algorithm for the special case of $p=\log K$
  (corresponding to $\| \lambda \|_\infty$, the makespan of the
  loads); here $\gamma$ and $\kappa$ are the maximum-to-minimum ratio of the
  entries in the covering and packing constraints.

\item {\em Set Cover with Multiple Costs}.
Here the offline input is a
  collection of $n$ sets $\{S_j\}_{j=1}^n$ over a universe $U$, and
  $K$ different linear  cost functions $B_k:[n]\rightarrow \mathbb{R}_+$ for
  $k\in [K]$. Elements from $U$ arrive online and must be covered by
  some set upon arrival, where the decision to select a set into the
  solution is irrevocable. The goal is to maintain a set-cover that
  minimizes the $\ell_p$ norm of the $K$ cost functions. Combining our
  framework with a simple randomized rounding scheme gives an
  $O(\frac{p^3}{\log p}\log d\log |U|)$-competitive
  randomized online
  algorithm; here $d$ is the maximum number of sets containing any
  element. The special case of $K=1$ (when $p=1$ without loss of generality)
  is the online set-cover problem~\cite{AAABN03}, for which the
  resulting $O(\log d \log |U|)$-competitive bound is tight, at least for
  randomized polynomial-time online algorithms~\cite{Korman05}.

\item {\em Capacity Constrained Facility Location} (CCFL). Here we are
  given $m$ potential facility locations, each with an opening cost $c_i$
  and a capacity $u_i$. Now, $n$ clients arrive online, each client $j\in
  [n]$ having an assignment cost $a_{ij}$ and a demand/load $\b_{ij}$ for
  each facility $i\in[m]$. The online algorithm must open facilities
  (paying the opening costs $c_i$) and assign each arriving client $j$
  to some open facility $i$ (paying the assignment cost $a_{ij}$, and
  incurring a load $p_{ij}$ on facility $i$). The \emph{makespan} of an
  assignment is the maximum load on any facility. The objective in CCFL
  is to minimize the sum of opening costs, assignment costs and the
  makespan. Using our framework, we obtain an $O(\log^2m)$-competitive
  fractional solution to a convex relaxation of CCFL. This is then
  rounded online to get an $O(\log^2m\,\log mn)$-competitive randomized
  online algorithm. This competitive ratio is worse by a logarithmic
  factor than the best result~\cite{ABFP13}, but it follows easily from
  our general framework.

\item {\em Capacitated Multicast Problem} (CMC).  This is a common
  generalization of CCFL and the \emph{online multicast
    problem}~\cite{AAABN-talg06}. There are $m$ edge-disjoint rooted
  trees $T_1,\cdots,T_m$ corresponding to multicast trees in some
  network. Each tree $T_i$ has a {\em capacity} $u_i$, and each edge
  $e\in \cup_{i=1}^m T_i$ has an opening cost $c_e$. A sequence of $n$
  clients arrive online, and each must be assigned to one of these
  trees. Each client $j$ has a tree-dependent load of $p_{ij}$ for tree
  $T_i$, and is connected to exactly one vertex $\pi_{ij}$ in tree
  $T_i$. Thus, if client $j$ is assigned to tree $T_i$ then the load of
  $T_i$ increases by $p_{ij}$, and all edges on the path in $T_i$ from
  $\pi_{ij}$ to its root must be opened. The objective is to minimize
  the total cost of opening the edges, subject to the capacity
  constraints that the total load on tree $T_i$ is at most $u_i$.
  Solving a natural fractional convex relaxation, and then applying a
  suitable randomized rounding to it, we get an $O(\log^2m\,\log
  mn)$-competitive randomized online algorithm that violates each
  capacity by an $O((d+\log^2m)\log mn)$ factor; here $d$ is the maximum
  depth of the trees $\{T_i\}_{i=1}^m$. The capacitated multicast
  problem with depth $d=2$ trees generalizes the CCFL problem, in which
  case we recover the above result for CCFL.

\item {\em Online Set Cover with Set Requests} (SCSR).  We are given a
  universe $U$ of $n$ \emph{resources}, and a collection of $m$
  \emph{facilities}, where each facility $i\in[m]$ is specified by (i) a
  subset $S_i\sse U$ of resources (ii) opening cost $c_i$ and (iii)
  capacity $u_i$. The resources and facilities are given up-front. Now,
  a sequence of $k$ {\em requests} arrive over time. Each request
  $j\in[k]$ requires some subset $R_j\sse U$ of resources. The request
  has to be served by assigning it to some collection $F_j \sse [m]$ of
  facilities whose sets collectively cover $R_j$, i.e., $R_j\sse
  \cup_{i\in F_j} S_i$. Note that these facilities have to be open, and
  we incur the cost of these facilities. Moreover, if a facility $i$ is
  used to serve client $j$, this contributes to the load of facility
  $i$, and this total load must be at most the capacity $u_i$.  This
  problem was considered recently by Bhawalkar et al.~\cite{BGP14}.

  Using an approach identical to that for the CCFL problem, we get an
  $O(\log^2m\,\log mnk)$-competitive randomized online algorithm that
  violates each capacity by an $O(\log^2m\,\log mnk)$ factor. Again this
  factor is weaker than the best result by a logarithmic factor, but
  directly follows from our general framework.

\item {\em Profit Maximization with Production Costs} (PMPC).
This is an application of the dual packing problem~\eqref{eq:cnx-pack}, in contrast to the above applications which are all applications of the primal covering problem.

  Consider a seller with $m$ items that can be produced and sold.  The seller has a \emph{production cost function} $g:\RR_+^m\rightarrow \RR_+$ which is monotone, convex and satisfies some other technical conditions; the total cost incurred by the seller to produce $\mu_j$ units of every item $j\in [m]$ is given by $g(\mu)$.\footnote{An important difference from prior work on such problems~\cite{BGMS11,HK15}: in these works, each item $j$ had a separate production cost function $g_j(\mu_j)$, and $g(\mu) := \sum_j g_j(\mu_j)$. We call this the \emph{separable} case. Our techniques allow the production cost to be {\em non-separable} over items---e.g., we can handle $g(\mu)=(\sum_{j=1}^m \mu_j)^2$.}  There are $n$ buyers who arrive online. Each buyer $i\in[n]$ is interested in subsets of items (bundles) that belong to a set family $\cS_i \sse 2^{[m]}$. The value of buyer $i$ for subset $S \in \cS_i$ is given by $v_i(S)$, where $v_i:\cS_i\rightarrow \RR_+$ is her \emph{valuation function}. If buyer $i$ is allocated a bundle $T\in \cS_i$, she pays the seller her valuation $v_i(T)$. (Observe: this is not an auction setting.) The goal in the PMPC problem is to produce items and allocate subsets to buyers so as to maximize the profit $\sum_{i=1}^n v_i(T_i) - g(\mu)$, where $T_i\in \cS_i$ denotes the subset allocated to buyer $i$ and $\mu\in \RR^m$ is the total quantity of all items produced.  As mentioned above, we consider a non-strategic setting, where the valuation of each buyer is known to the seller.

  Our main result here is for the fractional version of the problem where the allocation to each buyer $i$ is allowed to be any point in the convex hull of the set family $\cS_i$. We show that for a large class of valuation functions (e.g., supermodular, or weighted rank-functions of matroids) and production cost functions, our framework provides a polynomial time online algorithm: the precise competitive ratio is given by expression ~\eqref{intro:gen-apx-mon} with $f=g^\star$. As a concrete example, suppose the production cost function is $g(\mu)=( \sum_{j=1}^m \mu_j)^p$ for some $p > 1$. In this case, we get an $O(q\log \beta)^q$-competitive algorithm, where $q>1$ satisfies $\frac1q+\frac1p=1$, and $\beta$ is the maximum-to-minimum ratio of the valuation functions $\{v_i\}$.
\end{itemize}

As the above list indicates, the framework to solve fractional convex
programs is fairly versatile and gives good fractional results for a
variety of problems. In some cases, solving the particular relaxation we
consider and then rounding ends up being weaker than the best known
results for that specific problems (by a logarithmic factor); we hope
that further investigation into this problem will help close this gap.

\paragraph{Bibliographic Note:} In independent and concurrent work,
Azar et al.~\cite{ACP14} consider online covering problems with convex
objectives---i.e., problem~\eqref{eq:cnx-cov}. They also obtain a
competitive ratio that depends on properties of the function $f$, but
their parameterization is somewhat different from ours. As an example,
for online covering LPs minimizing the $\ell_p$-norm of packing
constraints, they obtain an $O(p\log (d\kappa \gamma))$-competitive
algorithm, whereas we obtain a tighter $O(p\log d)$ ratio.

\subsection{Techniques and Paper Outline}
\label{sec:outline}

In \S\ref{sec:alg-description}, we give the first general algorithm for the
convex covering problem~(\ref{eq:cnx-cov}) maintaining monotone primal
variables (but allowing dual variables to decrease). The main
observation is simple, yet powerful: convex
optimization problems with a function $f$ can be reduced to linear optimization using the
gradient of the convex function $f$. In the process we end up also
giving a cleaner algorithm and proof for linear optimization problems
as well, significantly simplifying the previous algorithm
from~\cite{GN12-mor}.  The resulting algorithm performs multiplicative
increases on the primal variables; for the dual, it does an initial
increase followed by a linear decrease after some point.

In \S\ref{sec:monotone} we give the second general algorithm, which is
simpler. The primal updates are the same as above but we skip the dual
decreases. This results in a worse competitive ratio, but the loss is
necessary for any monotone primal-dual
algorithm~\cite{BN-MOR}.

In \S\ref{sec:applications} and \S~\ref{sec:profit-max} we deal with the
various applications of our framework. The high-level idea in all of these
is to suitably cast each application in the form of either ~\eqref{eq:cnx-cov}
or~\eqref{eq:cnx-pack}. All, but the applications in
\S\ref{sec:applications}, are for the convex covering problem~\eqref{eq:cnx-cov}. Some
comments on the main ideas to watch out for:
\begin{itemize}
\item For applications to combinatorial problems we have to define the
  convex relaxation with some care in order to avoid bad integrality
  gaps. Moreover, some of our convex relaxations are motivated by the
  particular constraints we want to enforce when
  subsequently rounding.

\item For some of the problems our convex relaxations have an
  exponential number of constraints. To get a polynomial running time,
  we use the natural ``separation oracle'' approach. Moreover, we relax
  the constraints by a constant factor, so that each call to the
  separation oracle gives us a ``big'' improvement, and hence there are
  only a few updates per request.

\item For capacity constrained facility location (in \S\ref{sec:CCFL}),
  capacitated multicast problem (in \S\ref{sec:multicast}), and set
  cover with set requests (in \S\ref{sec:sc-set-requests}), na\"{\i}ve
  randomized rounding is bad, and hence the rounding schemes introduces
  correlations between opening facilities and assigning clients. These
  correlations also motivate the specific convex relaxations we consider
  for the problems.
\end{itemize}

In \S\ref{sec:profit-max} we consider the problem of profit maximization
with production costs, which after some simplifications can be cast as a
convex packing program as in~\eqref{eq:cnx-pack}.  We want allocations
to be non-decreasing over time, so we use our second general primal-dual
algorithm, which maintains monotone solutions. We also show how this
problem can be solved efficiently for some special classes of valuation
functions: supermodular and matroid-rank-functions. This convex program
can also be (randomly) rounded online to get integral allocations with the
same multiplicative competitive ratio, but with an extra additive term. The
additive term depends only on the number $m$ of items and the cost
function $g$; in particular it does not depend on $n$, the number of
buyers. We note that such an additive loss is necessary for our approach
due to an integrality gap of the convex relaxation.

\subsection{Related Work}
\label{sec:related-work}

This paper adds to the body of work in online primal-dual algorithms;
see~\cite{BN-mono} for a survey of this area. This approach has been
applied successfully to a large class of online problems: set
cover~\cite{AAABN03}, graph connectivity and cuts~\cite{AAABN-talg06},
caching~\cite{BBN-focs07-paging}, auctions~\cite{HK15}, scheduling~\cite{DH14},
etc. Below we discuss in more detail only work that is directly
relevant to us.

Online packing and covering {\em linear programs} were first considered by
Buchbinder and Naor~\cite{BN-MOR}, where they obtained an $O(\log n)$-competitive algorithm for covering and an $O(\log
(n\frac{a_{max}}{a_{min}}))$-competitive algorithm for packing. The competitive ratio for covering linear
programs was improved to $O(\log d)$ by Gupta and
Nagarajan~\cite{GN12-mor}, where $d\le n$ is the maximum number of
non-zero entries in any row.

Azar, Bhaskar, Fleischer, and Panigrahi~\cite{ABFP13} gave the first
algorithm for online {\em mixed packing and covering} LPs, where the
packing constraints are given upfront and covering constraints arrive
online; the objective is to minimize the maximum violation of the
packing constraints. Their algorithm had a competitive ratio of $O(\log
K\cdot \log (d\kappa \gamma))$, where $K$ is the number of packing
constraints and $\gamma$ (resp. $\kappa$) denotes the maximum-to-minimum
ratio of entries in the covering (resp. packing) constraints. Using our
framework, this bound can be improved to $O(\log K\cdot \log d)$. This
is also best possible as shown in~\cite{ABFP13}.

The capacity constrained facility location problem was also introduced
by Azar, Bhaskar, Fleischer, and Panigrahi~\cite{ABFP13}, who gave an
$O(\log m\log mn)$-competitive algorithm. Our result for this
problem is worse by a log-factor, but has the advantage of following
directly from our general framework. Moreover, our approach can be
extended to the capacitated multicast problem, which is a generalization
of CCFL to multi-level facility costs. The online multicast problem
(without capacities) was considered by Alon et al.~\cite{AAABN-talg06}
where they obtained an $O(\log m\cdot \log n)$-competitive randomized
algorithm.

The online set cover problem with set requests was considered recently
by Bhawalkar, Gollapudi, and Panigrahi~\cite{BGP14} who obtained an
$O(\log m\log mnk)$-competitive algorithm where capacities are violated
by an $O(\log^2 m\log mnk)$ factor. The competitive ratio obtained
through our approach is worse by a logarithmic factor in the cost
guarantee. Still, we think this is useful, since it follows with almost
no additional effort, given our online fractional framework and the CCFL
rounding scheme. Our approach is also likely to be useful in other such
generalizations.

The class of online maximization problems with production costs was
introduced by Blum, Gupta, Mansour, and Sharma~\cite{BGMS11} and
extended by Huang and Kim~\cite{HK15}. The key differences from our
setting are: (i) these papers deal with an {\em auction} setting where
the seller is not aware of the valuations of the buyers, whereas our
setting is not strategic, and (ii) these papers are restricted to
separable production costs, whereas we can handle much more general
(non-separable) cost functions.

\section{The General Framework}
\label{sec:general}

Let $f:\RR^n\rightarrow \RR$ be a non-negative non-decreasing convex
function. We assume that the function $f$ is continuous and
differentiable, and satisfies the following \emph{monotonicity
  condition}:
\begin{equation}
  \forall x \geq x' \in \RR^n,  \qquad \nabla f(x) \geq \nabla
  f(x') \label{prop-grad}
\end{equation}
 Here, $x \geq x'$ means $x_i \geq x'_i$ for all $i \in [n]$.

We consider the online fractional covering problem~\eqref{eq:cnx-cov}
where the constraints in $A$ arrive online. Our algorithm is a
primal-dual algorithm, which works with the following pair of convex
programs:
$$\begin{array}{lll|lll}
(P): &  \min &f(x) \qquad\qquad &\qquad  (D): &  \max &\sum_{i=1}^{m} y_i - f^\star(\mu) \\
&&Ax\ge 1 & & & y^\trans A \leq \mu^\trans \\
&& x\ge 0.  & &&  y\ge 0.
\end{array}
$$
Here $f^\star$ is the Fenchel dual of $f$, which is defined
as
\begin{gather}
  f^\star(\mu) = \sup_z \{ \mu^\trans z - f(z)\}. \label{eq:fenchel}
\end{gather}
 (Observe that by
scaling the rows of $A$ appropriately, we can transform any convering LP
of the form $Ax \geq b$ into the form above.) The following duality is
standard.

\begin{lemma}[Weak duality]
Let $x, (y,\mu)$ be feasible primal and dual solutions to $(P)$ and $(D)$ respectively. Then,
\begin{equation}
\mbox{\rm Primal objective }= f(x) \geq \sum_{i=1}^{m}y_i - f^\star(\mu)= \mbox{\rm Dual objective}.
\end{equation}
\end{lemma}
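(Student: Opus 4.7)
The plan is to chain together two standard inequalities: one coming from the Fenchel-dual definition of $f^\star$, and one coming from the LP-style bilinear pairing $y^\trans A x$, which can be bounded from above using dual feasibility and from below using primal feasibility.

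First, I would invoke the definition of the Fenchel dual~\eqref{eq:fenchel}. Since $f^\star(\mu) = \sup_z \{\mu^\trans z - f(z)\}$, in particular taking $z = x$ gives $f^\star(\mu) \geq \mu^\trans x - f(x)$, i.e.,
\[
f(x) \;\geq\; \mu^\trans x \;-\; f^\star(\mu).
\]
This is just the Fenchel--Young inequality and requires no assumptions other than that $f^\star$ is defined by the supremum formula above.

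Second, I would sandwich the quantity $y^\trans A x$. By dual feasibility, $y^\trans A \leq \mu^\trans$, and since $x \geq 0$ by primal feasibility, we may multiply on the right by $x$ to obtain $y^\trans A x \leq \mu^\trans x$. On the other side, primal feasibility gives $Ax \geq \mathbf{1}$, and since $y \geq 0$ by dual feasibility, multiplying on the left by $y^\trans$ yields $y^\trans A x \geq y^\trans \mathbf{1} = \sum_{i=1}^m y_i$. Chaining these,
\[
\mu^\trans x \;\geq\; y^\trans A x \;\geq\; \sum_{i=1}^{m} y_i.
\]

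Combining the two displays gives $f(x) \geq \mu^\trans x - f^\star(\mu) \geq \sum_{i=1}^m y_i - f^\star(\mu)$, which is exactly weak duality. There is no real obstacle here: the argument is the standard convex-analysis generalization of LP weak duality, with the only subtlety being the direction of the Fenchel inequality (one must remember that $f^\star(\mu) + f(x) \geq \mu^\trans x$, not the reverse), and the fact that non-negativity of both $x$ and $y$ is what licenses multiplying the matrix inequalities in the correct directions.
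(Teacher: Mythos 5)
Your proof is correct and follows essentially the same chain of inequalities as the paper's one-line argument: $\sum_i y_i = y^\trans\mathbf{1} \le y^\trans Ax \le \mu^\trans x \le f^\star(\mu)+f(x)$, merely split into the Fenchel--Young step and the bilinear sandwich step before recombining. No substantive difference.
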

\begin{proof}
$$\sum_{i=1}^m y_i =   y^\trans \mathbf{1} \,\,  \le  \,\, y^\trans Ax  \,\, \le  \,\, \mu^\trans  x  \,\, =  \,\, \left( \mu^\trans  x - f(x)\right) + f(x) \,\, \le  \,\, f^\star(\mu) + f(x).$$
Rearranging we get the desired.
\end{proof}

\subsection{The Algorithm}
\label{sec:alg-description}

The algorithm maintains a feasible primal $x$ and a feasible dual solution $y$ at each time.

\vspace{0.25cm}
\noindent \framebox[1.05\width][l]{
\begin{minipage}{0.94\linewidth}
\vspace{0.1in}
{\bf Fractional Algorithm:} At round $t$:
\begin{itemize}
\item Let $\tau$ be a continuous variable denoting the current time.
\item While the new constraint is unsatisfied, i.e., $\sum_{j=1}^{n}a_{t j} x_j<1$, increase $\tau$ at rate $1$ and:
\item {\bf Change of primal variables:}
\begin{itemize}

\item For each $j$ with $a_{t j}>0$, increase each $x_j$ at rate
\begin{gather}
    \frac{\partial x_j}{\partial \tau}=  \frac{a_{t j}\,x_j + \frac{1}{d}}{\nabla_j f(x)}. \label{eq:1}
  \end{gather}
Here $d$ is an upper bound on the row sparsity of the matrix. $\nabla_j f(x)$ is the $j^{th}$-coordinate of the gradient $\nabla f(x)$.
\end{itemize}
\item {\bf Change in dual variables:}
\begin{itemize}
\item Set $\mu  = \nabla f(\delta x)$, where $\delta>0$ is determined later.
\item Increase $y_t$ at rate $r = \frac{1}{\log\left(1+ 2d^2\right)}\cdot \min_{\ell=1}^{n}\left\{\frac{\nabla_\ell f(\delta x)}{\nabla_\ell f(x)}\right\}$.
\item  If the dual constraint of variable $x_j$ is tight, that is, $\sum_{i=1}^{t} a_{ij}y_i = \mu_j$, then,
    \begin{itemize}
    \item Let $m^\star_j = \arg\max_{i=1}^{t}\{a_{ij} | y_i >0\}$.
    \item Increase $y_{m^\star_j}$ at rate $-\frac{a_{t j}}{a_{m^\star_j j}}\cdot r$.
\\(Note that this change occurs only if $a_{t j}$ is strictly positive.)
    \end{itemize}
\end{itemize}
\end{itemize}
\vspace{0.05in}
\end{minipage}}
\vspace{0.35cm}

We emphasize that the primal algorithm does not depend on the value $\delta$. The last step in the algorithm decreases certain dual variables; all other steps only increase primal and dual variables. For the analysis, we denote $x^{\tau},y^{\tau},\mu^{\tau},r^{\tau}$ as the value of $x,y,\mu,r$ at time $\tau$, respectively.

\begin{obs}
For any $\delta>0$, the following are maintained.
\begin{itemize}
\item The algorithm maintains a feasible monotonically non-decreasing primal solution.
\item The algorithm maintains a feasible dual solution with non-decreasing $\mu_j$.
\end{itemize}
\end{obs}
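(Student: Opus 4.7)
My plan is to check each of the four properties---primal monotonicity, primal feasibility, monotonicity of $\mu$, and dual feasibility---by direct inspection of the update rules. The first three are essentially structural observations; the real content lies in verifying that the dual packing constraint $y^\trans A \le \mu^\trans$ is preserved throughout.

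For primal monotonicity, I would note that the rate in~(\ref{eq:1}) is a ratio of non-negative quantities ($a_{tj}\ge 0$, $x_j\ge 0$, $1/d>0$ in the numerator, and $\nabla_j f(x)\ge 0$ in the denominator since $f$ is non-decreasing), so every $x_j$ is non-decreasing. Primal feasibility of the current constraint $a_t^\trans x \ge 1$ holds because the while-loop increases $\tau$ until it is satisfied, and earlier constraints remain satisfied because $x$ only grows. For the monotonicity of $\mu_j = \nabla_j f(\delta x)$, I would apply the monotone-gradient assumption~(\ref{prop-grad}) to the fact that $\delta x$ is non-decreasing. This takes care of three of the four bullets.

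The main substance is the packing-type dual constraint $\sum_i a_{ij} y_i \le \mu_j$ for every coordinate $j$. I would fix a coordinate $j$ and argue as follows. If the constraint is strictly slack, the algorithm only increases $y_t$, adding at most $a_{tj}\cdot r$ to the LHS while the RHS $\mu_j$ can only grow; the constraint can never be violated instantaneously, only reach equality. Once the constraint becomes tight, the ``balancing'' rule kicks in: $y_t$ grows at rate $r$ (contributing $+a_{tj}\, r$ to the LHS) and $y_{m^\star_j}$ decreases at rate $\frac{a_{tj}}{a_{m^\star_j j}}\,r$ (contributing $-a_{m^\star_j j}\cdot \frac{a_{tj}}{a_{m^\star_j j}}\,r=-a_{tj}\,r$). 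So the net rate of change of $\sum_i a_{ij}y_i$ is exactly zero, and since $\mu_j$ is non-decreasing, the constraint continues to hold. Summing these contributions across all tight $j$'s just means $y_t$ and the various $y_{m^\star_j}$'s move at their respective combined rates, but for each individual coordinate $j$ the same cancellation holds.

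The subtle point---and the one I'd flag as the main obstacle---is making sure this balancing is actually well-defined: one has to argue that when $\sum_i a_{ij} y_i = \mu_j > 0$, there exists some $i\le t$ with $a_{ij}>0$ and $y_i>0$, so $m^\star_j$ is well-defined, and that $y_{m^\star_j}$ does not drift negative. Non-negativity is immediate from the definition of $m^\star_j$ restricted to indices with $y_i>0$: if $y_{m^\star_j}$ hits zero, the arg-max set simply shifts to the next available positive coordinate (and if $a_{tj}>0$, the increasing $y_t$ itself always eventually becomes a candidate). Existence of $m^\star_j$ when the constraint is tight follows because $\mu_j>0$ (since $f$ is non-decreasing and non-negative, and $\delta x \ge 0$ with positive components once any constraint involving $j$ has acted), forcing at least one $a_{ij}y_i$ to be strictly positive. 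After handling this case analysis, feasibility of $(y,\mu)$ and monotonicity of $\mu$ are established, completing the observation.
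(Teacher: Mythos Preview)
Your proposal is correct and follows essentially the same approach as the paper: direct inspection of the update rules, with the key computation being that once the $j$th dual constraint is tight the net rate of change of $\sum_i a_{ij}y_i$ is exactly $a_{tj}\,r - a_{m^\star_j j}\cdot\frac{a_{tj}}{a_{m^\star_j j}}\,r = 0$. You actually go a bit further than the paper by explicitly addressing well-definedness of $m^\star_j$ and non-negativity of $y$, which the paper leaves implicit.
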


\begin{proof}
The first property follows by construction, since we only increase $x$ till reaching a feasible solution.
For the second property, we observe that the dual variables $\mu$ are non-decreasing since $\nabla f(x)$ is non-decreasing.
We prove that $y,\mu$ is feasible  by induction over the execution of the algorithm.
While processing constraint $t$, if $\sum_{i=1}^{t}a_{ij}y^{\tau}_i < \mu^{\tau}_j$ for column $j$ we are trivially satisfied. Suppose that during the processing of constraint $t$, we have $\sum_{i=1}^{t}a_{ij}y^{\tau}_i = \mu^{\tau}_j$ for some dual constraint $j$ and time $\tau$. Now the dual decrease part of the algorithm kicks in, and the rate of change in the left-hand side of the dual constraint is:
\[ \frac{d}{d \tau} \left(\sum_{i=1}^{t}a_{ij}y^{\tau}_i\right) = a_{t j} \cdot r^{\tau} - a_{m^\star_j j} \cdot \frac{a_{t j}}{a_{m^\star_j j}}\cdot r^{\tau}=0
 \]
\end{proof}

Before analyzing the competitive factor, let us first prove the following claim.

\begin{claim}\label{x_lower}
For a variable $x_j$, let $T_j=\{i | a_{ij}>0\}$ and let $S_j$ be any subset of $T_j$. Then,
\begin{equation}x^{\tau}_j \geq \frac{1}{\max_{i\in S_j}\{a_{ij}\}\cdot d}\left(\exp\left(\frac{\ln\left(1+2d^2\right)}{\mu_j^{\tau}}\sum_{i\in S_j}a_{ij}y^{\tau}_i\right)-1\right)\end{equation}
\end{claim}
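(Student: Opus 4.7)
The plan is to take logarithms and prove the equivalent inequality
\[ \mu_j^\tau \,\ln\!\bigl(a^*\, d\, x_j^\tau + 1\bigr) \;\ge\; \ln(1+2d^2) \sum_{i\in S_j} a_{ij}\, y_i^\tau, \]
where $a^* := \max_{i \in S_j} a_{ij}$. I would define the potential $\Phi(\tau)$ as the left-hand side minus the right-hand side and show $\Phi(\tau)\ge 0$ by verifying (a) $\Phi(0)=0$, since $x$ and $y$ start at zero, and (b) $d\Phi/d\tau \ge 0$ at every time $\tau$. Thus the claim reduces to a local rate inequality while some constraint $t$ is being processed.

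I would then case-split on the simultaneous actions the algorithm performs. \emph{Dual decreases}, triggered whenever a dual constraint of some $x_{j'}$ becomes tight, only reduce certain $y_i$'s and leave $\mu_j^\tau$ and $x_j^\tau$ untouched; they can only increase $\Phi$, so they are harmless. The \emph{dual increase} of $y_t$ at rate $r$ affects the right-hand side only when $t \in S_j$; in particular, whenever $a_{tj}=0$ one has $t \notin T_j \supseteq S_j$ and there is nothing to check. The \emph{primal increase} of $x_j$ happens exactly when $a_{tj}>0$; by property~\eqref{prop-grad}, $\mu_j^\tau = \nabla_j f(\delta x)$ is non-decreasing, so the non-negative contribution $(d\mu_j^\tau/d\tau)\,\ln(a^* d\, x_j^\tau+1)$ to $d\Phi/d\tau$ can be discarded, and it suffices to track the guaranteed growth $\mu_j^\tau \cdot \frac{a^* d\, (\partial x_j/\partial\tau)}{a^* d\, x_j^\tau + 1}$ of the left-hand side.

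The core step is the combined case $t\in S_j$ (so $a_{tj}\le a^*$), in which the right-hand side genuinely grows. Substituting the primal update~\eqref{eq:1} into the expression above and using the elementary inequality $a_{tj}(a^* d\, x_j + 1) \le a^*(a_{tj} d\, x_j + 1)$---which is just $a_{tj}\le a^*$ in disguise---the left-hand side grows at rate at least $a_{tj}\mu_j^\tau / \nabla_j f(x)$. The right-hand side grows at rate $\ln(1+2d^2)\cdot a_{tj}\cdot r$, which by the definition of $r$ and $\mu_j^\tau = \nabla_j f(\delta x)$ equals $a_{tj}\cdot \min_\ell \tfrac{\nabla_\ell f(\delta x)}{\nabla_\ell f(x)} \le a_{tj}\cdot \tfrac{\nabla_j f(\delta x)}{\nabla_j f(x)} = a_{tj}\mu_j^\tau/\nabla_j f(x)$. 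The two bounds match, giving $d\Phi/d\tau \ge 0$.

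The main subtlety I anticipate is the case $t\notin S_j$ with $a_{tj}>0$: $x_j$, and hence the left-hand side, still grows, but the right-hand side need not grow in lock-step, which would be awkward were it not for the fact that $t\notin S_j$ kills the $y_t$-contribution entirely and simultaneous dual decreases only lower the right-hand side further. This also explains why the lemma is stated for an arbitrary subset $S_j\subseteq T_j$ rather than $T_j$ itself: the extra freedom will let the eventual competitive-ratio analysis focus on precisely those dual indices whose values have not been cannibalised by later dual decreases.
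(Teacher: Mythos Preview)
Your proof is correct and differs from the paper's in organisation rather than in essential content. The paper first rewrites the primal rate~\eqref{eq:1} as a differential inequality in $y_i$ rather than $\tau$, namely $\partial x_j/\partial y_i \ge \ln(1+2d^2)\,(a_{ij}x_j + 1/d)/\nabla_j f(\delta x)$, integrates this separable ODE across each round $i$ to obtain a per-round bound on $\bigl(x_j^{\tau(i+1)}+1/(a_{ij}d)\bigr)\big/\bigl(x_j^{\tau(i)}+1/(a_{ij}d)\bigr)$, then takes the product over $i\in S_j$, replaces each additive constant $1/(a_{ij}d)$ by the uniform smaller value $1/(a^*d)$, and telescopes. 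Your potential $\Phi$ collapses the ``solve per round, then telescope'' machinery into a single global differential inequality; the underlying ingredients---monotonicity of $\mu_j^\tau=\nabla_j f(\delta x)$, the bound on $r$ through the $\min_\ell$ ratio, the elementary inequality $a_{tj}(a^*d\,x_j+1)\le a^*(a_{tj}d\,x_j+1)$, and the observation that dual decreases only help---are the same in both arguments. The paper's presentation makes the round-by-round growth explicit and avoids ever differentiating $\mu_j^\tau$; yours is shorter and sidesteps the mild awkwardness of swapping $1/(a_{ij}d)$ for $1/(a^*d)$ inside a product.
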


\begin{proof}
Let $\tau(i)$ denote the value of $\tau$ at the arrival of the $i$th primal constraint. We first note that the increase in the primal variables at any time $\tau(i) \leq \tau \leq \tau(i+1)$ can be alternatively formulated by the following differential equation.
\begin{equation} \label{differential:xy}
\frac{\partial x_j}{\partial y_i}=  \frac{\log\left(1+ 2d^2\right)}{\min_{\ell=1}^{n}\left\{\frac{\nabla_\ell f(\delta x)}{\nabla_\ell f(x)}\right\}}\cdot\frac{a_{i j}\,x_j + \frac{1}{d}}{\nabla_j f(x)} \geq \log\left(1+ 2d^2\right)\cdot\frac{a_{i j}\,x_j + \frac{1}{d}}{\nabla_j f(\delta x)}.
\end{equation}
By solving the latter equation we get for any $\tau(i) \leq \tau \leq \tau(i+1)$,
\begin{eqnarray}
\frac{x_j^{\tau} + \frac{1}{a_{ij}d}}{x_j^{\tau(i)} + \frac{1}{a_{ij}d}} & \geq & \exp\left(\frac{\ln\left(1+2d^2\right)}{\nabla_j f(\delta x^\tau)}\cdot a_{ij}y^{\tau}_i\right) \label{x_lower:ineq1},
\end{eqnarray}
where we use the fact that $\nabla_j f(\delta x)$ is monotonically non-decreasing. Note that Inequality (\ref{x_lower:ineq1}) is satisfied even when no decrease is performed on the dual variables, and such a decrease only effects the right handside of the inequality.
For convenience, let us denote $\tau(t+1)=\tau$ (the actual value of $\tau(t+1)$ has not been revealed by the algorithm yet). Multiplying over all indices in $S_j$ we get,
\begin{eqnarray}
\lefteqn{\exp\left(\frac{\ln\left(1+2d^2\right)}{\mu_j^{\tau}} \sum_{i\in S_j} a_{ij}y^{\tau}_i\right) \leq \exp\left( \sum_{i\in S_j} \frac{\ln\left(1+2d^2\right)}{\nabla_j f(\delta x^{\tau(i+1)})} \cdot a_{ij}y^{\tau(i+1)}_i\right)} \label{x_lower:ineq2}\\
& \leq & \prod_{i\in S_j} \frac{x_j^{\tau(i+1)} + \frac{1}{a_{ij}d}}{x_j^{\tau(i)} + \frac{1}{a_{ij}d}}
  \leq   \prod_{i\in S_j} \frac{x_j^{\tau(i+1)} + \frac{1}{\max_{i\in S_j} \{a_{ij}\}\cdot d}}{x_j^{\tau(i)} + \frac{1}{\max_{i\in S_j} \{a_{ij}\}\cdot d}}\label{x_lower:ineq3} \\
& \leq & \prod_{i \in T_j} \frac{x_j^{\tau(i+1)} + \frac{1}{\max_{i\in S} \{a_{ij}\}\cdot d}}{x_j^{\tau(i)} + \frac{1}{\max_{i\in S_j} \{a_{ij}\}\cdot d}} = \frac{x_j^{\tau} + \frac{1}{\max_{i\in S_j} \{a_{ij}\}\cdot d}}{\frac{1}{\max_{i\in S_j} \{a_{ij}\}\cdot d}}\label{x_lower:ineq4} .
\end{eqnarray}
Inequality (\ref{x_lower:ineq2}) follows as $\mu_j^{\tau}=\nabla_j
f(\delta x^{\tau})$ and the value of $\nabla_j f(\delta x)$
monotonically non-decreases in time. Inequality (\ref{x_lower:ineq3})
follows by substituting (\ref{x_lower:ineq1}) into~(\ref{x_lower:ineq2}). Inequality (\ref{x_lower:ineq4}) follows as the value of $x_j$ monotonically non-decreases in time. Finally, the last equality is obtained using a telescopic sum and the fact that $x_j$ increases only in rounds with $a_{tj}>0$.
\end{proof}

\begin{thm}
\label{thm:cr-gen}
The competitive ratio of the algorithm is:
\begin{equation}
\min_{z} \left(\frac{\min_{\ell=1}^{n}\left\{\frac{\nabla_\ell f(\delta z)}{\nabla_\ell f(z)}\right\}}{4\ln(1+2d^2)}\right) -
\max_{z} \left(\frac{(\delta z)^\trans \nabla f(\delta z) - f(\delta z) }{f(z)}\right),
\end{equation}
where $\delta > 0$ is the parameter chosen in the algorithm.
\end{thm}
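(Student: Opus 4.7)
The plan is to decompose the claimed bound into two pieces and handle them separately. Since $\mathrm{Dual} = \sum_i y_i - f^\star(\mu)$ and the competitive ratio in the statement is a difference of two terms, I would split the task into showing
\[
\sum_i y_i \;\geq\; \min_z\!\left(\tfrac{\min_\ell \nabla_\ell f(\delta z)/\nabla_\ell f(z)}{4\ln(1+2d^2)}\right) f(x)
\qquad\text{and}\qquad
f^\star(\mu) \;\leq\; \max_z\!\left(\tfrac{(\delta z)^\trans \nabla f(\delta z) - f(\delta z)}{f(z)}\right) f(x).
\]
Subtracting these two inequalities immediately yields Dual $\geq$ CR$\cdot f(x)$, which is the statement.

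The $f^\star$ inequality is the easy half. Because the algorithm sets $\mu = \nabla f(\delta x)$, the Fenchel-Young inequality holds with equality: $f^\star(\mu) = (\delta x)^\trans \nabla f(\delta x) - f(\delta x)$. Substituting $z = x$ into the outer $\max$ in the definition of the second term of the CR bounds this exactly by the stated quantity times $f(x)$.

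The harder half is showing $\sum_i y_i \geq \alpha f(x)$ with $\alpha$ the first term of the CR. My approach is to compare instantaneous rates. First, while the $t$-th constraint is being processed, the primal rate is
\[
\tfrac{d f(x)}{d\tau} \;=\; \sum_j \nabla_j f(x) \cdot \tfrac{\partial x_j}{\partial \tau} \;=\; \sum_{j:\,a_{tj}>0} (a_{tj} x_j + 1/d) \;<\; 2,
\]
using that $\sum_j a_{tj}x_j < 1$ while the constraint is unsatisfied, and that only $d$ coordinates contribute $1/d$. The dual rate, ignoring decreases, is $r$, which is exactly $\ln(1+2d^2)$ times $\min_\ell \nabla_\ell f(\delta x)/\nabla_\ell f(x)$ — giving the factor of $\tfrac{1}{\ln(1+2d^2)}$. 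The remaining factor of $1/4$ must come from simultaneously absorbing (i) the factor of $2$ in $\tfrac{d f(x)}{d\tau}\leq 2$, and (ii) the dual decreases.

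The main obstacle is (ii): the net dual rate is $r\bigl(1 - \sum_{j:\text{tight}} a_{tj}/a_{m_j^\star j}\bigr)$, and in general there is no pointwise constant bound on the decrease term. To circumvent this, I would not compare rates pointwise but instead use the integrated form provided by Claim~\ref{x_lower}, choosing $S_j := \{i : a_{ij}y_i > 0\}$ at the final time, so the claim becomes a relation involving only the surviving positive duals (i.e., the $y_i$'s contributing to $\sum_i y_i$ at the end). Rearranging Claim~\ref{x_lower} in logarithmic form yields $\sum_i a_{ij} y_i \leq \tfrac{\mu_j}{\ln(1+2d^2)} \ln\!\bigl(1 + d\,x_j\max_{i\in S_j} a_{ij}\bigr)$, which already accounts for decreases because indices killed to $0$ are dropped from $S_j$. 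I would then combine this with $f(x)\leq x^\trans \nabla f(x)$ (convexity, assuming $f(0)=0$) and the conversion $\nabla_j f(x) \leq \nabla_j f(\delta x)/\min_\ell \tfrac{\nabla_\ell f(\delta x)}{\nabla_\ell f(x)}$ to pass from $f(x)$ to $\sum_j x_j \mu_j$, and finally use primal feasibility $\sum_j a_{ij}x_j\geq 1$ on arrived constraints (so $\sum_i y_i \leq \sum_j x_j \sum_i a_{ij}y_i$) together with the Claim to tie everything together up to the constant $4$ absorbing both the primal-rate factor of $2$ and the $\ln(1+d x_j\max a)$ vs. $\ln(1+2d^2)$ discrepancy. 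The choice $m^\star_j = \arg\max_i \{a_{ij} : y_i > 0\}$ is precisely what guarantees the $\max_{i \in S_j} a_{ij}$ scaling on which the Claim is built, so the decreases are accounted for ``automatically'' through the Claim rather than by a direct pointwise bound.
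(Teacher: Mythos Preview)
Your decomposition into the $f^\star(\mu)$ piece and the $\sum_i y_i$ piece is exactly right, and your treatment of $f^\star(\mu)$ via $\mu=\nabla f(\delta x)$ and Fenchel--Young equality matches the paper. Your bound $\frac{df(x)}{d\tau}\le 2$ on the primal rate is also the paper's bound.

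The gap is in how you handle the dual decreases. You assert that ``in general there is no pointwise constant bound on the decrease term'' $\sum_{j\text{ tight}} a_{tj}/a_{m_j^\star j}$, and this drives you toward an integrated argument using Claim~\ref{x_lower} only at the final time. But the paper's point is precisely that there \emph{is} a pointwise bound, and it comes from applying Claim~\ref{x_lower} at the current time $\tau$. If the $j$-th dual constraint is tight at time $\tau$, then with $S_j=\{i:a_{ij}>0,\,y_i^\tau>0\}$ one has $\sum_{i\in S_j}a_{ij}y_i^\tau=\mu_j^\tau$, so the exponent in the Claim equals $\ln(1+2d^2)$ and the Claim gives $x_j^\tau\ge \frac{2d}{\max_{i\in S_j}a_{ij}}$. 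Combined with $x_j^\tau<1/a_{tj}$ (the $t$-th constraint is still unsatisfied), this yields $a_{tj}/a_{m_j^\star j}\le 1/(2d)$. Since at most $d$ dual constraints can be tight (only coordinates with $a_{tj}>0$), the total decrease is at most $r^\tau/2$, and the net dual rate is $\ge r^\tau/2$. Dividing by the primal rate $\le 2$ gives the factor $1/4$ cleanly.

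Your proposed workaround also has its inequalities going the wrong way: from Claim~\ref{x_lower} you extract an \emph{upper} bound on $\sum_i a_{ij}y_i$, and from primal feasibility you derive $\sum_i y_i \le \sum_j x_j\sum_i a_{ij}y_i$, another upper bound. Chaining upper bounds cannot produce the required lower bound $\sum_i y_i\ge \alpha f(x)$. The fix is not to avoid the pointwise comparison but to realize that Claim~\ref{x_lower} is exactly what makes the pointwise comparison work.
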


\begin{proof}
Consider the update when primal constraint $t$ arrives and $\tau$ is the current time.
Let $U(\tau)$ denote the set of tight dual constraints at time $\tau$. That is, for every $j\in U(\tau)$ we have $a_{tj}>0$ and $\sum_{i=1}^{t}a_{ij}y^{\tau}_i = \mu^{\tau}_j$. So $|U(\tau)|\le d$ the row-sparsity of $A$. Moreover, let us define for every $j\in U(\tau)$,  $S_j=\{i | a_{ij}>0, y^\tau_{i}>0\}$.
Clearly, $\sum_{i\in S_j}a_{ij}y^{\tau}_i = \sum_{i=1}^{t}a_{ij}y^{\tau}_i = \mu^{\tau}_j$, hence by Claim \ref{x_lower} and the fact that $\sum_j a_{t j}x_{j}^{\tau} < 1$, we get for every $j\in U(\tau)$,
$$ \frac{1}{a_{t j}} > x_{j}^{\tau} \geq \frac{1}{\max_{i\in S_j}\{a_{ij}\}\cdot d}\left(\exp\left(\ln(1+2d^2)\right)-1\right) ,$$
and after simplifying we get $ \frac{a_{t j}}{a_{m^\star_j j}} = \frac{a_{t j}}{\max_{i\in S_j}\{a_{i j}\}} \leq \frac{1}{2d}$.
As a result, we can bound the rate of change in the dual expression $\sum_{i=1}^{t}y_i$ at any time $\tau$:
\begin{align}
\frac{d \left(\sum_{i=1}^{t}y_i\right)}{d \tau} \geq r^{\tau} - \sum_{j \in U(\tau)}\frac{a_{t j}}{a_{m^\star_j j}} \cdot r^{\tau} \geq r^{\tau} \left(1 - \sum_{j \in U(\tau)}\frac{1}{2d}\right) \geq \frac{1}{2}r^{\tau} \label{d_dual},
\end{align}
where the last inequality follows as $|U(\tau)|\leq d$.

On the other hand, when processing constraint $t$ during the execution of the algorithm, the rate of increase of the primal objective $f$ is:
\begin{align}
\frac{d f(x^{\tau})}{d\tau} & = \sum_{j} \nabla_j f(x^{\tau}) \frac{\partial x^{\tau}_j}{\partial \tau}  =  \sum_{j | a_{tj}>0}\nabla_j f(x^{\tau}) \left(\frac{a_{tj}x^{\tau}_j + \frac{1}{d}}{\nabla_j f(x^{\tau})}\right) =  \sum_{j | a_{tj}>0}\left( a_{tj}x^{\tau}_j + \frac{1}{d}\right) \leq 2 \label{d_primal}.
\end{align}
The final inequality uses the fact that the covering constraint is unsatisfied, and that $d$ is at least the number of non-zeroes in the vector $a_t$.
From (\ref{d_dual}) and (\ref{d_primal}) we can now bound the following primal-dual ratio:
\begin{align}
\frac{d \left(\sum_{i=1}^{t}y^{\tau}_i\right)}{d f(x^{\tau})} \geq \frac{r^{\tau}}{4} = \frac{\min_{\ell=1}^{n}\left\{\frac{\nabla_\ell f(\delta x^{\tau})}{\nabla_\ell f(x^{\tau})}\right\}}{4\ln\left(1+ 2d^2\right)} \label{ineq:dy_increase}.
\end{align}
Thus, if $\ox$ and $\oy$ are the final primal and dual solutions we get,
\begin{align}
\sum_{i=1}^{m}\oy_i \geq \frac{\min_{x'}\min_{\ell=1}^{n}\left\{\frac{\nabla_\ell f(\delta x')}{\nabla_\ell f(x')}\right\}}{4\ln\left(1+ 2d^2\right)}\cdot f(\ox) \label{ineq:y_increase}.
\end{align}

To complete the proof of Theorem~\ref{thm:cr-gen}, we use the following standard claim. 
\begin{claim}
  \label{clm:dual-fact}
  For any $a\in \mathbb{R}^n$, we have $f^\star(\nabla f(a)) = a^\trans \nabla f(a) - f(a)$.
\end{claim}
\begin{proof}
By definition, $f^\star(\nabla f(a)) = \sup_x \{ x^\trans  \nabla f(a) -
f(x) \}$. Note that $x^\trans  \nabla f(a) - f(x)$ is concave as a function of $x$. So a necessary and sufficient condition for optimality is:
$$\nabla_i f(x) \,= \, \nabla_i f(a),\quad \forall i\in[n].$$
Thus setting $x=a$, we have $f^\star(\nabla f(a)) = a^\trans \nabla f(a) - f(a)$.
\end{proof}

Finally, we can attain the competitive ratio by a simple application of Claim \ref{clm:dual-fact} and Inequality (\ref{ineq:y_increase}) to the definition of the dual. Indeed,
\begin{align*}
    \mbox{Dual} & = \sum_{i=1}^{m}y_i - f^\star(\mu) \geq  \left(\frac{\min_{x'}\min_{\ell=1}^{n}\left\{\frac{\nabla_\ell f(\delta x')}{\nabla_\ell f(x')}\right\}}{4\ln(1+2d^2)} - \frac{f^\star(\nabla f(\delta\ox)) }{f(\ox)}\right) \cdot f(\ox) \\
\intertext{by Inequality~(\ref{ineq:y_increase}), and using Claim~\ref{clm:dual-fact} (with $a=\delta \ox$), we get}
    &= \left(\frac{\min_{x'}\min_{\ell=1}^{n}\left\{\frac{\nabla_\ell f(\delta x')}{\nabla_\ell f(x')}\right\}}{4\ln(1+2d^2)} - \frac{(\delta\ox)^\trans\nabla f(\delta \ox) - f(\delta \ox) }{f(\ox)}\right) \cdot f(\ox)
\\
    & \geq \left[\min_{z} \left(\frac{\min_{\ell=1}^{n}\left\{\frac{\nabla_\ell f(\delta z)}{\nabla_\ell f(z)}\right\}}{4\ln(1+2d^2)}\right) - \max_{z}\left(\frac{(\delta z)^\trans \nabla f(\delta z) - f(\delta z) }{f(z)}\right)\right] \cdot \mbox{Primal}
  \end{align*}
  Hence the proof.
\end{proof}

How to choose the value of $\delta$? If we  set $c = 1/\delta$ and optimize over $c$, the competitive ratio is:
\begin{equation}\label{eq:gen-apx}
  \frac{{\rm Dual}}{{\rm Primal}}\,\,\ge \,\, \max_{c > 0} \,\,
  \left(\min_{z} \frac{\min_{\ell=1}^{n}\left\{\frac{\nabla_\ell f(z)}{\nabla_\ell f(cz)}\right\}}{4\ln(1+2d^2)} -
  \max_{z} \frac{z^\trans \nabla f(z) - f(z) }{f(c z)}\right).
\end{equation}
This expression looks quite formidable, however it simply captures how
sharply the function $f$ changes locally. For special cases it gives
us very simple expressions; e.g., for linear cost functions $f(x) =
c^\trans x$ it gives us $Dual \geq Primal/O(\log d)$. See
\S\ref{sec:applications} for several such examples of applications using
this framework.

\subsubsection{Online Minimization}
\label{sec:minimization}

In the general framework above, we maintained both the primal and dual
solutions simultaneously. If our goal is to solve~(\ref{eq:cnx-cov})
online, i.e., to minimize the convex function $f(x)$ subject to covering
constraints arriving online, then the dual values can be determined with
hindsight once the final value of the primal variables $\ox$ has been
computed. In particular, we set $\mu = \nabla f(\delta \ox)$ once and
for all, and increase $y$ at a constant rate
\begin{gather*}
  \overline{r} = \frac{\min_{\ell=1}^{n}\left\{\frac{\nabla_\ell
        f(\delta \ox)}{\nabla_\ell f(\ox)}\right\}}{\log\left(1+
      2d^2\right)}.
\end{gather*}
These modifications can be easily plugged into the analysis above,
allowing us to omit the minimization over $x'$ in the competitive ratio.
(Observe that the update for the primal variables remains the same).

\begin{cor}
For online minimization, the competitive ratio of the algorithm is:
\begin{equation}
\max_{c>0} \quad \min_{z} \left(\frac{\min_{\ell=1}^{n}\left\{\frac{\nabla_\ell f( z)}{\nabla_\ell f(cz)}\right\}}{4\ln(1+2d^2)} - \frac{ z^\trans \nabla f( z) - f( z) }{f(cz)}\right)
\end{equation}
\end{cor}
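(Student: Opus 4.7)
My plan is to replay the proof of Theorem~\ref{thm:cr-gen} almost verbatim, but with the time-varying dual rate $r^\tau$ replaced by the hindsight-determined constant $\overline{r} = \frac{\min_\ell \nabla_\ell f(\delta\ox)/\nabla_\ell f(\ox)}{\log(1+2d^2)}$. Since the primal update rule~\eqref{eq:1} is $\delta$-independent, the primal is still produced online in the same way and yields the same $\ox$; only the dual variables $(y,\mu)$ are assigned retrospectively once $\ox$ is known, with $\mu = \nabla f(\delta \ox)$ and $y_i$ increased at the constant rate $\overline{r}$ during round $i$ (the dual-decrease mechanism remains as in the original algorithm).

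The one nontrivial step is to check that Claim~\ref{x_lower} still yields the right lower bound on $x_j^\tau$. With the constant rate $\overline{r}$, the differential equation~\eqref{differential:xy} becomes $\partial x_j/\partial y_i = (a_{ij}x_j + 1/d)/(\overline{r}\,\nabla_j f(x))$, and the analogue of the desired inequality reduces to
\[
\frac{\nabla_j f(\delta\ox)}{\nabla_j f(x)} \;\geq\; \min_{\ell}\frac{\nabla_\ell f(\delta\ox)}{\nabla_\ell f(\ox)}.
\]
I would obtain this by combining primal monotonicity ($x \leq \ox$) with the gradient monotonicity assumption~\eqref{prop-grad}: then $\nabla_j f(x) \leq \nabla_j f(\ox)$, so $\nabla_j f(\delta\ox)/\nabla_j f(x) \geq \nabla_j f(\delta\ox)/\nabla_j f(\ox)$, which is at least the minimum over $\ell$. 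This is the one place the argument differs from Theorem~\ref{thm:cr-gen}, and it is the step I expect to require most care, since the original proof's corresponding inequality was trivial by coordinatewise comparison at the current iterate, whereas here we must exploit monotonicity across time.

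With this inequality in hand, the rest of Claim~\ref{x_lower} goes through unchanged, with $\mu_j^\tau$ replaced by the constant $\nabla_j f(\delta\ox)$. In particular, the bound $a_{tj}/a_{m^\star_j j}\leq 1/(2d)$ at each tight dual constraint is derived exactly as in Theorem~\ref{thm:cr-gen}, and~\eqref{d_dual}--\eqref{d_primal} yield $d(\sum_i y_i)/d f(x^\tau) \geq \overline{r}/4$. Integrating gives $\sum_i \oy_i \geq (\overline{r}/4)\, f(\ox)$; the key qualitative change relative to~\eqref{ineq:y_increase} is that the outer $\min_{x'}$ no longer appears, because $\overline{r}$ is a fixed quantity pinned to $\ox$ rather than the transient $x^\tau$. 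Applying Claim~\ref{clm:dual-fact} with $a=\delta\ox$ to rewrite $f^\star(\mu)$, substituting $z = \delta\ox$ and $c = 1/\delta$ so that $\ox = cz$, and finally taking the minimum over $z$ (to obtain a bound that holds for every instance) and the maximum over $c > 0$, yields the stated competitive ratio.
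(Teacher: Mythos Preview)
Your proposal is correct and follows exactly the approach the paper sketches in \S\ref{sec:minimization}: keep the primal updates unchanged, set $\mu=\nabla f(\delta\ox)$ and the constant dual rate $\overline{r}$ in hindsight, and replay the analysis of Theorem~\ref{thm:cr-gen}. In fact you supply more detail than the paper does, since the paper merely asserts that ``these modifications can be easily plugged into the analysis above''; your verification that Claim~\ref{x_lower} still holds via the inequality $\nabla_j f(\delta\ox)/\nabla_j f(x)\ge \nabla_j f(\delta\ox)/\nabla_j f(\ox)\ge \min_\ell \nabla_\ell f(\delta\ox)/\nabla_\ell f(\ox)$ (using primal and gradient monotonicity) is precisely the one non-automatic step, and you have it right.
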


\subsection{Monotone Online Maximization}
\label{sec:monotone}

If our goal is to solve~(\ref{eq:cnx-pack}) and maximize a dual
objective function subject to packing constraints, then indeed the above
framework increases the dual variables $\mu$, however the dual variables
$y$ can both increase and decrease. (Moreover, this potential decrease is
essential for the competitive ratio to be independent of the magnitude
of entries in the matrix $A$~\cite{GN12-mor}). In settings where
decrease in dual variables is not allowed, we need to slightly modify
(and simplify) the online dual update in the algorithm by setting
\begin{gather*}
  r = \frac{\min_{\ell=1}^{n}\left\{\frac{\nabla_\ell f(\delta
        x)}{\nabla_\ell f(x)}\right\}}{\log\left(1+ d\rho\right)},
\end{gather*}
where $\rho$ is an upper bound on $\frac{\max_t \{a_{tj}\}}{\min_{t,
    a_{tj}>0}\{a_{tj}\}} $ for all $1\leq j \leq n$.  And we skip the last step which decreases duals.
        Here,
application of Claim \ref{x_lower} at any round $t$ and time $\tau(t)
\leq \tau \leq \tau(t+1)$ yields
\begin{equation}
  \frac{1}{a_{tj}} \quad \geq \quad x^{\tau}_j \quad \geq\quad
  \frac{1}{\max_{i=1}^{t}\{a_{ij}\}\cdot d}
  \left(\exp\left(\frac{\ln\left(1 +
          d\rho\right)}{\mu^{\tau}_j} \sum_{i=1}^{t}a_{ij}y_i\right) -
    1\right),
\end{equation}
which implies $\ln\left(1+d\cdot\frac{\max_{i=1}^{t}\{a_{ij}\}}{a_{tj}}\right) / \ln\left(1+d\rho\right) \geq \frac{\sum_{i=1}^{t}a_{ij}y_i}{\mu^{\tau}_j}$, and thus guarantees $\sum_{i=1}^{t}a_{ij}y_i \leq \mu^{\tau}_j$.

\begin{cor}
For online maximization, when decreasing dual variables is not allowed, the adjusted  algorithm obtains the following competitive ratio:
\begin{equation}
\max_{c>0}\quad \min_{z} \left(\frac{\min_{\ell=1}^{n}\left\{\frac{\nabla_\ell f( z)}{\nabla_\ell f(cz)}\right\}}{2\ln(1+\rho d)}\right) - \max_{z} \left(\frac{ z^\trans \nabla f( z) - f( z)}{f(cz)}\right) \label{eq:mono-cr}
\end{equation}
\end{cor}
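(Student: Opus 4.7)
My plan is to mirror the proof of Theorem~\ref{thm:cr-gen} step by step, inserting the two simplifications that define the adjusted algorithm: the modified update rate $r$ involving $\ln(1+d\rho)$ in place of $\ln(1+2d^2)$, and the absence of the dual-decrease step. The most delicate part, and what I would attack first, is verifying dual feasibility, because it was previously maintained by construction via the decrease rule; here it must instead be deduced from the primal trajectory.

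First I would rerun the argument of Claim~\ref{x_lower} verbatim, except that the constant $\ln(1+2d^2)$ becomes $\ln(1+d\rho)$ (because the growth rate in the differential equation~\eqref{differential:xy} is now scaled by $\ln(1+d\rho)$). With $S_j = T_j$ this gives the displayed lower bound
\[
x_j^\tau \;\geq\; \frac{1}{\max_{i=1}^{t}\{a_{ij}\}\cdot d}\Bigl(\exp\bigl(\tfrac{\ln(1+d\rho)}{\mu_j^\tau}\sum_{i=1}^t a_{ij}y_i\bigr)-1\Bigr).
\]
The unsatisfied constraint at time $\tau$ obeys $\sum_j a_{tj}x_j^\tau<1$, hence $x_j^\tau\le 1/a_{tj}$ for each $j$ with $a_{tj}>0$. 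Plugging this upper bound in, rearranging, and using the definition $\rho\ge \max_i a_{ij}/a_{tj}$ yields $\exp\bigl(\tfrac{\ln(1+d\rho)}{\mu_j^\tau}\sum_i a_{ij}y_i\bigr)\le 1+d\rho$, which after taking logarithms gives $\sum_i a_{ij}y_i \leq \mu_j^\tau$. Feasibility for the other $j$ (those with $a_{tj}=0$ throughout the active rounds) is immediate by monotonicity of $\mu$ and $y$. This establishes dual feasibility at every time $\tau$, which is the step I expect to be the main subtlety since we are no longer ``pushing'' tight constraints back by hand.

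Next I would redo the primal-dual comparison. The primal rate calculation~\eqref{d_primal} is unchanged: the row sparsity bound on $a_t$ combined with the unsatisfied-constraint assumption gives $df(x^\tau)/d\tau \le 2$. The dual rate calculation simplifies: since we no longer decrease any $y$-variable, $d\bigl(\sum_{i=1}^t y_i\bigr)/d\tau = r^\tau$ (there is no factor $1/2$ loss as in~\eqref{d_dual} that came from compensating the decrease). Therefore
\[
\frac{d(\sum_i y_i^\tau)}{df(x^\tau)} \;\geq\; \frac{r^\tau}{2} \;=\; \frac{\min_\ell\{\nabla_\ell f(\delta x^\tau)/\nabla_\ell f(x^\tau)\}}{2\ln(1+d\rho)}.
\]
Integrating yields $\sum_i \oy_i \ge \tfrac{\min_{x'}\min_\ell\{\nabla_\ell f(\delta x')/\nabla_\ell f(x')\}}{2\ln(1+d\rho)}\cdot f(\ox)$.

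Finally I would apply Claim~\ref{clm:dual-fact} with $a=\delta\ox$ exactly as in the proof of Theorem~\ref{thm:cr-gen}, obtaining $f^\star(\mu) = (\delta\ox)^\trans\nabla f(\delta\ox)-f(\delta\ox)$, and substitute into $\text{Dual}=\sum_i y_i - f^\star(\mu)$. Dividing by $f(\ox)=\text{Primal}$, reparametrizing via $c=1/\delta$, and taking the maximum over $c$ and the worst-case $z$ on each term produces exactly expression~\eqref{eq:mono-cr}. The only places the derivation differs from Theorem~\ref{thm:cr-gen} are the replacement of $\ln(1+2d^2)$ by $\ln(1+d\rho)$ in the dual rate and the replacement of $1/4$ by $1/2$ coming from having no dual-decrease compensation, both of which are visible in the stated bound.
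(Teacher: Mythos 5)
Your proof is correct and follows essentially the same route as the paper: rerun Claim~\ref{x_lower} with $\ln(1+d\rho)$ in place of $\ln(1+2d^2)$, deduce dual feasibility from the primal bound $x_j^\tau < 1/a_{tj}$ together with the definition of $\rho$ (which is indeed the new subtlety, since feasibility is no longer enforced by the decrease rule), and then observe that dropping the dual-decrease step removes one factor of $2$ from the rate comparison, turning $r^\tau/4$ into $r^\tau/2$. The paper leaves most of these steps implicit, but your reconstruction matches it exactly.
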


This results in a worse competitive ratio, but having monotone duals is
useful for two reasons: (a) in some settings we need monotone duals, as
in the profit maximization application in Section~\ref{sec:profit-max}), and
 (b) we get a simpler algorithm since we skip the third step of the
online dual update (involving the dual decrease).

\section{Applications}
\label{sec:applications}

We show how the general framework above can be used to give algorithms
for several previously-studied as well as new problems. In contrast to previous papers
where a primal-dual algorithm had to be tailored to each of these
problems, we use the framework above to solve the underlying convex
program, and then apply a suitable rounding algorithm to the fractional
solution.

\subsection{$\ell_p$-norm of Packing Constraints}
\label{sec:Lp-norm-packing}

We consider the problem of solving a mixed packing-covering linear
program online, as defined by Azar et al.~\cite{ABFP13}. The covering
constraints $Ax\ge 1$ arrive online, as in the above setting. There are
also $K$ ``packing constraints'' $\sum_{j=1}^n \b_{kj}\cdot x_j \le
\lambda_k$ for $k\in [K]$ that are given up-front. The right sides
$\lambda_k$ of these packing constraints are themselves variables, and
the objective is to minimize $\sum_{k=1}^K \lambda_k^p$  or alternatively, $\|\lambda\|_p=\sqrt[p]{\sum_{k=1}^K \lambda_k^p}$. All the entries
in the constraint matrices $A = (a_{ij})$ and $B=(\b_{kj})$ are
non-negative.

\begin{thm}
  \label{thm:p-norm}
  There is an $O(p\log d)$-competitive online algorithm for fractional
  covering with the objective of minimizing $\ell_p$-norm of multiple
  packing constraints.
\end{thm}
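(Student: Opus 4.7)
The plan is to cast the problem in the form of the convex covering program~\eqref{eq:cnx-cov} and then instantiate Theorem~\ref{thm:cr-gen}. Since the packing right-hand sides $\lambda_k$ appear only in the objective and can always be set tight, I substitute $\lambda_k = \sum_j b_{kj} x_j$ and work with the objective
\begin{equation*}
  f(x) \;=\; \sum_{k=1}^K \Bigl(\sum_{j=1}^n b_{kj}\, x_j\Bigr)^{p},
\end{equation*}
subject to $Ax\ge 1,\, x\ge 0$. Minimizing $f(x)$ is equivalent (up to the $p$-th root) to minimizing $\|\lambda\|_p$. With all $b_{kj}\ge 0$, $f$ is non-negative, non-decreasing, convex, differentiable, and has monotone non-decreasing gradient, so the assumption~(\ref{prop-grad}) is satisfied and the framework applies.

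Next I compute the quantities that appear in the competitive ratio~\eqref{eq:gen-apx}. A direct calculation gives
\begin{equation*}
  \nabla_\ell f(x) \;=\; p\sum_{k=1}^K b_{k\ell}\Bigl(\sum_j b_{kj} x_j\Bigr)^{p-1},
\end{equation*}
from which two homogeneity identities fall out: $\nabla_\ell f(cz)=c^{p-1}\nabla_\ell f(z)$ (so $\nabla_\ell f(z)/\nabla_\ell f(cz)=c^{-(p-1)}$ uniformly in $\ell$ and $z$), and $z^\trans \nabla f(z) = p\,f(z)$ by Euler's identity for the $p$-homogeneous function $f$. Therefore $z^\trans\nabla f(z)-f(z)=(p-1)f(z)$, and since $f(cz)=c^{p}f(z)$ we also get
\begin{equation*}
  \frac{z^\trans \nabla f(z) - f(z)}{f(cz)} \;=\; \frac{p-1}{c^{p}}.
\end{equation*}

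Plugging these into~\eqref{eq:gen-apx}, the Dual/Primal bound becomes
\begin{equation*}
  \max_{c>0}\;\Bigl[\, \frac{1}{4\ln(1+2d^2)\cdot c^{p-1}} \;-\; \frac{p-1}{c^{p}}\,\Bigr].
\end{equation*}
Elementary calculus shows the maximum is attained at $c=4p\ln(1+2d^2)$, where the bracketed quantity evaluates to $1/(pL)^p$ with $L=4\ln(1+2d^2)$. Thus the framework produces a fractional solution $x$ with $f(x)\le (O(p\log d))^{p}\cdot f(x^*)$, i.e.\ $\sum_k \lambda_k^p$ is within a factor $(O(p\log d))^p$ of optimal. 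Taking $p$-th roots yields an $\ell_p$-norm guarantee of $O(p\log d)$, as claimed.

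The only conceptual step deserving care is the reformulation that folds the $\lambda_k$ into the objective: because the packing constraints are up-front and the $\lambda_k$ are unconstrained positive variables, setting $\lambda_k=\sum_j b_{kj} x_j$ is without loss of generality both offline and online, and this is what makes the problem fit into~\eqref{eq:cnx-cov}. After that, the analysis is a routine application of Theorem~\ref{thm:cr-gen} using the $p$-homogeneity of $f$; no new rounding is required since the theorem only concerns the fractional problem.
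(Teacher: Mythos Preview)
Your proposal is correct and follows essentially the same approach as the paper: you cast the problem as minimizing the $p$-homogeneous function $f(x)=\sum_k(\sum_j b_{kj}x_j)^p$, exploit the homogeneity to evaluate the two ratios in~\eqref{eq:gen-apx} exactly (the paper computes these directly rather than invoking Euler's identity, but the content is identical), optimize over $c$ to obtain the $1/(pL)^p$ bound with $L=4\ln(1+2d^2)$, and then take the $p$-th root. The only cosmetic differences are that the paper carries a $\tfrac{1}{p}$ normalization in $f$ and optimizes over $\delta=1/c$ instead of $c$, neither of which affects the argument.
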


\begin{proof}
  In order to apply our framework to this problem, we seek to minimize
  the convex function
  \begin{gather*}
    f(x) ~~=~~ \frac1p \| Bx \|_p^p \,\,=\,\,\frac{1}{p}\sum_{k=1}^{K}\left(B_k
      x\right)^p \,\,=\,\, \frac{1}{p}\sum_{k=1}^{K}\left( \sum_{j=1}^n
      \b_{kj}\cdot x_j
    \right)^p .
  \end{gather*}
  This is the $p$-power of the original objective; above
  $B_k=(\b_{k1},\cdots,\b_{kn})$ is the $k^{th}$ packing constraint.

  To obtain the competitive ratio, observe that $\nabla_j f(x) =
  \sum_{k=1}^K \b_{kj}\cdot \left(P_k x\right)^{p-1}$.  Thus, we
  have for all $c>0$, $x\in \RR^n_+$ and $1\leq j \leq n$:
  \begin{align*}
    \frac{f(z)}{f(cz)} & = (1/c)^p \\
    \frac{\nabla_j f(z)}{\nabla_j f(c z)} &= (1/c)^{p-1} \\
    \frac{\sum_{j=1}^{n}z_j \cdot \nabla f(z)_j}{f(cz)} &=
    \frac{\sum_{j=1}^{n}z_j\sum_{k=1}^K \b_{kj }\cdot \left(B_k z\right)^{p-1}}{f(cz)}=     \frac{p\cdot f(z)}{f(cz)}  = p(1/c)^{p}.
     \end{align*}
Substituting $\delta = 1/c$ and plugging into~\eqref{eq:gen-apx} we get:
  \begin{equation}
    {\rm Dual} \geq \left(\frac{\delta^{p-1}}{4 \ln(1 + 2d^2)} - p \delta^p +
      \delta^p\right) \cdot {\rm Primal}
  \end{equation}
  So the primal-dual ratio (as a function of $\delta$) is
  $\frac{{\rm Dual}}{{\rm Primal}} \ge \delta^{p-1}/L - (p-1)\delta^p$ where
  $L= 4 \ln(1 + 2d^2)$. This quantity is maximized when $\delta=\frac{1}{pL}$,
  leading to a primal-dual ratio of $1/(pL)^p$. Taking the $p^{th}$ root
  of this quantity gives us that the $\ell_p$-norm of the primal is at
  most $pL = O(p \log d)$ times the optimum.
\end{proof}

When $p=\Theta(\log m)$, the $\ell_p$ and $\ell_\infty$ norms are within
constant factors of each other, we obtain the online mixed
packing-covering LP (OMPC) problem studied by Azar et al.~\cite{ABFP13}.
For this setting this gives an improved $O(\log d\cdot \log
m)$-competitive ratio, where $d$ is the row-sparsity of the matrix $A$,
and $m$ is the number of packing constraints. This competitive ratio is
known to be tight~\cite[Theorem~1.2]{ABFP13}.

\begin{rem}\label{rem:diff-p} The above result also holds if function $f$ is the sum of distinct powers of linear functions, i.e. $f(x) = \sum_{k=1}^K (B_k x)^{p_k}$ where $p_1,\cdots,p_K\ge 1$ may be non-uniform. For this case, we obtain an $O(p\log d)$-competitive algorithm where $p=\max_{k=1}^K p_k$.
\end{rem}

\subsection{Online Set Cover with Multiple Costs}
\label{sec:set-cover}

Consider the online set-cover problem~\cite{AAABN03} with $n$ sets
$\{S_j\}_{j=1}^n$ over some ground set $U$. Apart from the set system, we
are also given $K$ cost functions $B_k:[n]\rightarrow \mathbb{R}_+$ for
$k\in [K]$. Elements from $U$ arrive online and must be covered by some
set upon arrival; the decision to select a set into the solution is
irrevocable. The goal is to maintain a set-cover that minimizes the
$\ell_p$ norm of the $K$ cost functions. We use Theorem~\ref{thm:p-norm}
along with a rounding scheme (similar to~\cite{GKP11-waoa}) to obtain:
\begin{thm}
  \label{thm:p-norm:int}
  There is an $O\left(\frac{p^3}{\log p} \log d \log
    r\right)$-competitive randomized online algorithm for set cover
  minimizing the $\ell_p$-norm of multiple cost-functions. Here $d$ is
  the maximum number of sets containing any element, and $r=|U|$ is the
  number of elements.
\end{thm}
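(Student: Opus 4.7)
The plan is to cast the problem as an instance of Theorem~\ref{thm:p-norm}. For each arriving element $e\in U$ we get the covering constraint $\sum_{j\,:\,e\in S_j} x_j\ge 1$, and the packing structure comes from the $K$ cost functions, namely $\sum_j B_k(j)\,x_j \le \lambda_k$ for $k\in[K]$. The objective is to minimize the $\ell_p$-norm $\|\lambda\|_p$. The row-sparsity of the covering matrix equals the maximum multiplicity $d$ of any element in the sets $\{S_j\}$. Theorem~\ref{thm:p-norm} therefore directly yields an online monotone fractional solution $x$ satisfying $\|(Bx)\|_p = O(p\log d)\cdot\|\lambda^\star\|_p$, where $\lambda^\star$ is the vector of costs attained by an optimal integer solution.

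\textbf{Online randomized rounding.} To turn $x$ into an integer solution, I would use the standard threshold scheme: before any elements arrive, for each set $S_j$ independently draw $T_j\sim\text{Uniform}[0,1]$; whenever the fractional value $x_j$ first exceeds $T_j/\alpha$, for a boosting parameter $\alpha=\Theta(p\log r)$, commit to $S_j$. This is monotone and hence implementable online, and gives $\Pr[S_j\text{ chosen}]=\min\{1,\alpha x_j\}$.

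\textbf{Coverage and cost analysis.} For each arrived element $e$, the probability it remains uncovered is at most $\exp\!\bigl(-\alpha\sum_{j\,:\,e\in S_j} x_j\bigr)\le \exp(-\alpha)=r^{-\Theta(p)}$; a union bound over the $r$ elements makes global coverage hold with probability $1-r^{-\Theta(p)}$, and the rare failure case is patched deterministically by the cheapest covering set for the uncovered element, contributing negligibly to any $p$-th moment. For the cost, let $\Lambda_k=\sum_j B_k(j) X_j$, with independent Bernoullis $X_j$ of mean at most $\alpha x_j$. Rosenthal's inequality for non-negative independent summands,
\begin{equation*}
\bigl(\E[\Lambda_k^p]\bigr)^{1/p}\;\le\; C(p)\left(\sum_j B_k(j)\,\E X_j \;+\; \Bigl(\sum_j B_k(j)^p\,\E X_j\Bigr)^{1/p}\right),\qquad C(p)=O(p/\log p),
\end{equation*}
bounds the first term by $O(p\log r/\log p)\cdot\lambda_k$. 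The second term requires the standard preprocessing step $B_k(j)\le \|\lambda^\star\|_p$ (obtained by a doubling guess on $\mathrm{OPT}$ and truncating any set whose $k$-th cost exceeds this value, which is lossless since such sets cannot lie in an optimal solution); under this assumption $\sum_j B_k(j)^p\,x_j\le \|\lambda^\star\|_p^{\,p-1}\lambda_k$, making the second term dominated by the first. Summing the $p$-th moments over $k$ and taking the $p$-th root gives $\bigl(\E[\|\Lambda\|_p^p]\bigr)^{1/p} = O(p^2\log r/\log p)\cdot\|\lambda\|_p$.

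\textbf{Putting it together.} Multiplying the fractional loss $O(p\log d)$ of Theorem~\ref{thm:p-norm} by the rounding blow-up $O(p^2\log r/\log p)$ yields the promised $O\!\bigl(\tfrac{p^3}{\log p}\log d\,\log r\bigr)$-competitive ratio. The main obstacle in the above sketch is the second Rosenthal term: without \emph{some} a priori bound on the magnitude of individual $B_k(j)$'s, this term cannot be absorbed into the first moment and the calculation breaks. The doubling-on-OPT reduction is what makes it go through, and it is also the reason the overall ratio depends only on $p,d,r$ and not on the aspect ratio of the cost functions.
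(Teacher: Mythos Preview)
Your overall plan---solve the fractional problem via Theorem~\ref{thm:p-norm} and round with independent thresholds, analyzing the $p$-th moment by a Rosenthal/Latala inequality---is exactly the paper's strategy. The difference, and the place where your argument has a real gap, is in how the second Rosenthal term $\bigl(\sum_j B_k(j)^p\,\E X_j\bigr)^{1/p}$ is controlled.

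The paper does \emph{not} use doubling and truncation. Instead it changes the fractional objective to
\[
g(x)\;=\;\sum_{k=1}^K\Bigl(\sum_j b_{kj}x_j\Bigr)^p \;+\; \sum_{j}\Bigl(\sum_k b_{kj}^p\Bigr)x_j,
\]
i.e.\ it \emph{adds the second Rosenthal term as a linear piece of the objective}. For any integral cover this extra term is at most $\sum_k(\lambda_k^\star)^p=C^\star$ (by superadditivity of $t\mapsto t^p$), so the relaxation's optimum at most doubles; by Remark~\ref{rem:diff-p} the framework still gives an $O(p\log d)^p$ guarantee on $g(x)$. Then the Latala bound reads
\[
\E[C_k^p]\;\le\;K_p\Bigl((4p\log r)^p\bigl(\textstyle\sum_j b_{kj}x_j\bigr)^p \;+\; 4p\log r\,\textstyle\sum_j b_{kj}^p x_j\Bigr),
\]
and summing over $k$ the right-hand side is exactly $K_p(4p\log r)^p\,g(x)$---no loose ends.

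Your alternative, truncating so that $B_k(j)\le\|\lambda^\star\|_p$, only yields $\sum_j B_k(j)^p x_j\le \|\lambda^\star\|_p^{\,p-1}\lambda_k$. This does \emph{not} make the second term dominated by the first: after taking $p$-th powers and summing over $k$ you get $\alpha\,\|\lambda^\star\|_p^{\,p-1}\|\lambda\|_1$ against $\alpha^p\|\lambda\|_p^p$, and $\|\lambda\|_1$ can be as large as $K^{1-1/p}\|\lambda\|_p$. Concretely, take all $K$ cost functions equal; then your bound on the second term carries a hidden $K^{1/p}$ factor that your final ratio does not allow. (Truncating at $\lambda_k^\star$ coordinate-wise would fix this, but that requires $K$ separate guesses, not a single guess on $\|\lambda^\star\|_p$.) So the step ``making the second term dominated by the first'' is the missing idea; the paper's fix is to bake that term into the relaxation rather than to bound it after the fact.
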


\begin{proof}
  We use the following convex relaxation. There is a variable $x_j$ for
  each set $j\in [n]$ which denotes whether this set is chosen.
  \begin{alignat*}{2}
    \min & \quad g(x) \,\,=\,\, \sum_{k=1}^{K}\bigg( \sum_{j=1}^n
    \b_{kj}\cdot x_j \bigg)^p
    \,+\, \sum_{j=1}^n \bigg(\sum_{k=1}^K \b_{kj}^p\bigg)\cdot x_j \\
    s.t. & \quad \sum_{j : e\in S_j} x_j  \geq 1, \qquad \forall e \in U\\
    &\quad x \geq 0.
  \end{alignat*}
  We can use our framework to solve this fractional convex covering
  problem online. Although the objective has a linear term in addition
  to the $p$-powers, we obtain an $O(p\log d)^p$-competitive algorithm as noted in Remark~\ref{rem:diff-p}.

Let $C^\star$ denote the $p^{th}$
  power of the optimal objective of the given set cover instance. Then
  it is clear that the optimal objective of the above fractional
  relaxation is at most $2C^\star$. Thus the objective of our fractional online
  solution $g(x)= O(p\log d)^p\cdot C^\star$.

\def\oe{{\overline{e}}}

  To get an integer solution, we use a simple online randomized rounding
  algorithm. For each set $j\in [n]$, define $X_j$ to be a
  $\{0,1\}$-random variable with $\Pr[X_j=1]=\min\{4p\log r\cdot x_j,\,
  1\}$. This can easily be implemented online. It is easy to see by a
  Chernoff bound that for each element $e$, it is not covered with
  probability at most $\frac{1}{r^{2p}}$. 
If an element $e$ is not covered by this rounding, we choose the set minimizing $\min_{j=1}^n \{\sum_{k=1}^K \b_{kj}^p : e\in S_j\}$; let $\oe\in [n]$ index this set and $C_\oe=\sum_{k=1}^K \b_{k\oe}^p$. Observe that $C_\oe\le C^\star$ for all $e\in U$.

  To bound the $\ell_p$-norm of the cost, let $C_k=\sum_{j=1}^n
  \b_{kj}\cdot X_j$ be the cost of the randomly rounded solution under the
  $k^{th}$ cost function, and let $C:=\sum_{k=1}^K C_k^p$. Also for each element $e\in U$, define:
\begin{itemize}
\item $D_{ek}=\b_{k\oe}$ for all $k\in[K]$ and $D_e=C_\oe$ if $e$ is not covered by the rounding.
\item  $D_{ek}=0$ for all $k\in[K]$ and $D_e=0$ otherwise.
\end{itemize}
Note that $D_e=\sum_{k=1}^K D_{ek}^p$.  The
  $p^{th}$ power of the objective function is:
\begin{eqnarray}
\overline{C} & =& \sum_{k=1}^K  \left(C_k +\sum_{e\in U} D_{ek}\right)^p \quad \le \quad 2^p \sum_{k=1}^K  C_k^p + 2^p \sum_{k=1}^K  \left(\sum_{e\in U} D_{ek}\right)^p \quad \le \quad  2^p \cdot C + 2^p \sum_{k=1}^K  r^p \sum_{e\in U} D_{ek}^p \notag \\
&=& 2^p \cdot C+ (2r)^p \sum_{e\in U} D_{e}  \label{eq:multSC-1}
\end{eqnarray}
We now bound $\E[\overline{C}]$ using~\eqref{eq:multSC-1}. Observe that $\E[C_k]\le
  4p\log r\cdot \sum_{j=1}^n \b_{kj}\cdot x_j$. Since each $C_k$ is the
  sum of independent non-negative random variables, we can bound $\E[C_k^p]$ using a
  concentration inequality involving $p^{th}$ moments~\cite{Latala}:
$$\E[C_k^p]  \le  K_p \cdot \bigg( \E[C_k]^p + \sum_{j=1}^n \E[\b_{kj}^p
\cdot X_j^p] \bigg) \le K_p \cdot \bigg( (4p\log r)^p \;
\bigg(\sum_{j=1}^n \b_{kj}\cdot x_j\bigg)^p + 4p\log r \sum_{j=1}^n
\b_{kj}^p \cdot x_j \bigg) .$$ Above $K_p=O(p/\log p)^p$. By linearity of
expectation,
$$\E[C] = \sum_{k=1}^K \E[C_k^p] \le K_p (4p\log r)^p \sum_{k=1}^K \bigg( \bigg(\sum_{j=1}^n \b_{kj}\cdot x_j\bigg)^p + \sum_{j=1}^n \b_{kj}^p \cdot x_j  \bigg)  = K_p (4p\log r)^p \cdot g(x).$$
Thus we have $\E[C] = O\left(\frac{p^3}{\log p}\cdot \log d\cdot \log
  r\right)^p\cdot C^\star$.

Observe that $\E\left[ \sum_{e\in U} D_{e}  \right] = \sum_{e\in U} \Pr[e\mbox{ uncovered}]\cdot C_\oe \le r^{-2p} \cdot \sum_{e\in U} C^\star = r^{1-2p}\cdot C^\star$. Using these bounds in~\eqref{eq:multSC-1}, we have $\E[\overline{C}] \le
2^p \cdot \E[C]+ (2r)^p \sum_{e\in U} \E[D_{e}]  = O\left(\frac{p^3}{\log p}\cdot \log d\cdot \log
  r\right)^p\cdot C^\star$. \end{proof}

\subsection{Capacity-constrained Facility Location (CCFL)}
\label{sec:CCFL}

In the Capacity-constrained Facility Location (CCFL) problem, there are
$m$ potential facility locations each with an opening cost $c_i$ and a
capacity $u_i$ that are given up-front. There are $n$ clients which
arrive online. Each client $j\in [n]$ has, for each facility $i\in[m]$,
an assignment cost $a_{ij}$ and a demand/load $p_{ij}$. The online
algorithm needs to open facilities (paying the opening costs) and assign
each arriving client $j$ to some open facility $i$ (paying the
assignment cost $a_{ij}$, and incurring a load $p_{ij}$ on $i$). The
\emph{makespan} of an assignment is the maximum load on any
facility. The objective in CCFL is to minimize the sum of opening costs,
assignment costs and the makespan. An integer programming formulation for
this problem is the following:
\begin{alignat*}{2}
  \min & \sum_{i=1}^{m}c_i x_i + \sum_{i,j}a_{ij}y_{ij} + \max_{i=1}^{m}\sum_{j=1}^{n}p_{ij} \cdot y_{ij}\\
  s.t. & \quad \sum_{i\in S} x_i + \sum_{i\notin S}y_{ij} \geq 1, \qquad
  \forall j \in [n], \forall S \subseteq [m] \\
  &\quad y, x \in \{0,1\}.
\end{alignat*}
In order to apply our framework to CCFL, we allow the variables to be
fractional, and use the following objective function with $p=\Theta(\log
m)$.
\[ f(x,y) \,\,=\,\, \bigg(\sum_{i=1}^{m}c_i x_i \bigg)^p \,+\, \bigg(
\sum_{i,j}a_{ij}y_{ij}\bigg)^p \, + \,
\sum_{i=1}^{m}\bigg(\sum_{j=1}^{n}p_{ij} \cdot y_{ij}\bigg)^p. \] Note
that $f(x,y)^{1/p}$ is within a constant factor of the original
objective. We refer to the above convex program as the fractional CCFL
problem.

\begin{thm}\label{thm:f-CCFL}
There is an $O(\log^2 m)$-competitive online algorithm for fractional CCFL.
\end{thm}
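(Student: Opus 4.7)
The plan is to apply the general online covering framework of Theorem~\ref{thm:cr-gen} directly to the convex objective $f(x,y)$ above, in a manner exactly parallel to the $\ell_p$-norm result (Theorem~\ref{thm:p-norm}). The objective $f$ decomposes as a sum of $m+2$ non-negative summands, each of which is the $p$-th power of a non-negative linear combination of the variables: the opening-cost power $(\sum_i c_i x_i)^p$, the assignment-cost power $(\sum_{i,j} a_{ij} y_{ij})^p$, and one load power $(\sum_j p_{ij} y_{ij})^p$ per facility $i$. As in the proof of Theorem~\ref{thm:p-norm}, one checks that each summand (and hence the whole sum) satisfies $\nabla_\ell f(z)/\nabla_\ell f(cz) = c^{-(p-1)}$ and $z^\trans \nabla f(z) = p \cdot f(z)$, so the optimized choice of $c$ in~\eqref{eq:gen-apx} yields a primal-dual ratio of $O(p\log d)^p$ in terms of $f$, equivalently $O(p\log d)$ in terms of $f^{1/p}$.

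Each covering constraint $\sum_{i\in S} x_i + \sum_{i\notin S} y_{ij} \ge 1$ has exactly $m$ non-zero coefficients (one per facility), so the row sparsity is $d=m$. Taking $p=\Theta(\log m)$ therefore yields $f(\ox,\oy)^{1/p} = O(\log^2 m) \cdot (f^\star)^{1/p}$, where $f^\star$ is the optimal value of the convex relaxation. Since $f$ is a sum of $m+2$ non-negative $p$-th powers and $(m+2)^{1/p}=O(1)$ for $p=\Theta(\log m)$, the quantity $f^{1/p}$ is within an $O(1)$ factor of $\max\bigl(\sum_i c_i x_i,\; \sum_{i,j} a_{ij} y_{ij},\; \max_i \sum_j p_{ij} y_{ij}\bigr)$, which in turn is within a factor of $3$ of the original CCFL objective. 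Applying this equivalence both to the algorithm's solution and to any optimal fractional solution delivers the claimed $O(\log^2 m)$ competitive ratio.

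The principal obstacle is that for each arriving client $j$ there are exponentially many covering constraints, one per subset $S\subseteq [m]$, so the framework cannot be invoked naively in polynomial time. I would resolve this via the separation-oracle approach outlined in Section~\ref{sec:outline}: given the current iterates $(x, y_{\cdot j})$, the most violated constraint for client $j$ is obtained by placing $i\in S$ precisely when $x_i \le y_{ij}$, which is trivial to identify. To keep the number of separation calls per client polynomial, the algorithm treats a constraint as violated only while its left-hand side is below a fixed constant such as $\tfrac{1}{2}$, rather than strictly below $1$; a standard potential argument (each separation call then forces a multiplicative increase of at least a constant in some primal variable) bounds the number of iterations per client by a polynomial in $m$ and $n$. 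Relaxing the feasibility threshold by a constant factor only inflates the final ratio by a constant, preserving the $O(\log^2 m)$ guarantee.
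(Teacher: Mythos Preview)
Your proposal is correct and follows essentially the same approach as the paper: invoke the $\ell_p$-norm result (Theorem~\ref{thm:p-norm}) with $p=\Theta(\log m)$ and row sparsity $d=m$, and handle the exponential constraint family via the same separation oracle (set $S=\{i:x_i<y_{ij}\}$) with the threshold relaxed to $\tfrac12$. The only minor imprecision is your iteration bound: the paper's argument is \emph{additive}, observing that the potential $\sum_{i=1}^m(\min\{x_i,1\}+\min\{y_{ij},1\})\in[0,2m]$ increases by at least $\tfrac12$ per separation call, giving at most $4m$ iterations per client; your ``multiplicative increase in some primal variable'' phrasing does not by itself yield a polynomial bound without specifying the variable range.
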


\begin{proof}
  To apply our framework to solving fractional CCFL, we need a few
  observations. Firstly, although the function $f$ is not fully known in
  advance, we know at any time the parts of $f$ that correspond to
  variables appearing in constraints revealed until then. It is easy to
  check that this suffices for our framework to apply.

  Another issue is that there are an exponential number of covering
  constraints. This does not affect the $O(p\log d) = O(\log^2 m)$
  competitive ratio we obtain through Theorem~\ref{thm:p-norm}, since it
  is independent of the number of covering constraints. However, the
  running time will be exponential in the straightforward
  implementation. In order to obtain a polynomial running time, we relax
  the covering constraints to $\frac12$ (instead of one). Upon arrival
  of client $j$, we add covering constraints based on the following
  procedure.
  \begin{quote}
    While there is some $S\subseteq [m]$ with $\left(\sum_{i\in S} x_i +
      \sum_{i\notin S}y_{ij} < \frac12\right)$, do: \\
    $~~~~~$Add constraint $\sum_{i\in S} x_i + \sum_{i\notin S}y_{ij}
    \ge
    1$, and \\
    $~~~~~~~~$update solution $(x,y)$ according to the algorithm of
    Theorem~\ref{thm:p-norm}.
  \end{quote}
  Note that given a current solution $(x,y)$, the set $S$ that minimizes
  $\sum_{i\in S} x_i + \sum_{i\notin S}y_{ij}$ is $S=\{i\in[m] \mid x_i
  < y_{ij}\}$; comparing this to $1/2$ gives us the desired separation
  oracle. The number of iterations of the new procedure (per client
  arrival) is at most $4m$, because $\sum_{i=1}^m
  \left(\min\{x_i,1\}+\min\{y_{ij},1\} \right)$ increases by at least
  $\frac12$ in each iteration, and this sum is always between $0$ and
  $2m$. Hence, at any time $(2 x, 2y)$ is a feasible fractional
  solution, which satisfies all constraints.
\end{proof}

\subsubsection{Rounding the Fractional Solution Online}
\label{sec:ccfl-round}

The online fractional solution can be rounded in an online fashion to
obtain a randomized $O(\log^2 m\cdot \log mn)$-competitive
algorithm. While this is worse by a $\log m$ factor than the result
in~\cite{ABFP13}, it follows directly from our general algorithm.

We use a ``guess and double'' approach in the rounding. Let $M$ denote
some upper bound on the optimal offline value. Upon arrival of a new
client, our algorithm will succeed if $M$ is a correct upper bound. If
the algorithm fails then $M$ is doubled and we repeat the updates. We
start with $M$ being some known lower bound. A {\em phase} is a sequence
of client arrivals for which $M$ remains the same. At any point in the
algorithm, the only allowed facilities are $\{i \in [m] \mid c_i\le
M\}$ and the only allowed assignments are $\{ (i,j) \mid i \in [m],
j\in[n], p_{ij}\le M\}$. We denote by $I_M$ the restricted instance
which consists only of the clients that arrive in this phase and the
above facilities and allowed assignments. When we progress from one
phase to the next (i.e. $M$ is doubled), we reset all the $x,y$
variables to zero.

Define a modified objective as follows:
$$g(x,y) \,\,=\,\, \bigg(\sum_{i}^{}c_i \bigg(x_i + \frac{\sum_j p_{ij} \cdot y_{ij}}{M}\bigg)\bigg)^p \,+\, \bigg( \sum_{i,j}a_{ij}y_{ij}\bigg)^p \, + \,  \sum_{i}^{}\bigg(\sum_{j}^{ }p_{ij} \cdot y_{ij}\bigg)^p.$$
Note that this depends on the guess $M$ and is fixed for a single
phase. Below we focus on the restricted instance $I_M$. Unless specified
otherwise, clients $j$ and facilities $i$ are only from $I_M$. Consider
the following convex program:
\begin{align}
  \min & \quad g(x,y) \label{eq:gLP} \\
  s.t. & \quad \sum_{i\in S} x_i + \sum_{i\notin S}y_{ij} \geq 1, \qquad
  \forall j\in [n], S \subseteq [m] \notag\\
  &\quad y, x \geq 0. \notag
\end{align}
When a new client $h$ arrives, the algorithm first updates the
fractional solution to ensure the covering-constraints of client $h$ up
to a factor $2$, as in Theorem~\ref{thm:f-CCFL}. Now we have to do the
rounding. To do this, first define the following modified variables:
\begin{gather*}
\overline{y}_{ij}=\min\{y_{ij},x_i\}, \quad \forall i,j, \text{ and} \\
\ox_i = \max\left\{ x_i, \frac{\sum_j p_{ij} \cdot y_{ij}}{M} \right\},\quad \forall i.
\end{gather*}

By construction, the variables $(\ox,\oy)$ clearly satisfy:
\begin{alignat}{2}
  \textstyle \sum_j p_{ij}\cdot \oy_{ij} &\le M\cdot \ox_i & \qquad\qquad &\forall
  i \label{eq:ccfl:1} \\
  \textstyle \sum_{i=1}^m \oy_{ij} &\ge \frac12    &  &\forall j
  \label{eq:ccfl:2} \\
  \oy_{ij} &\le \ox_i &  &\forall i, j
  \label{eq:ccfl:3}
\end{alignat}

\begin{claim}
  \label{cl:ccfl-LB}
  Suppose there exists an integral solution to the current CCFL instance
  having cost at most $M$. Then the following inequalities hold, where
  $\alpha=O(\log^2m)$ is the competitive ratio in
  Theorem~\ref{thm:f-CCFL}:
  \begin{alignat}{2}
    \sum_i c_i\cdot \ox_i &\le 4\alpha \cdot M & &
    \label{eq:ccfl:4} \\
    \sum_{i,j}a_{ij}\cdot \oy_{ij}  &\le 4\alpha\cdot M & &
    \label{eq:ccfl:5} \\
    \sum_j p_{ij}\cdot \oy_{ij} &\le 4\alpha \cdot M &\qquad\qquad &\forall i.
    \label{eq:ccfl:6}
  \end{alignat}
\end{claim}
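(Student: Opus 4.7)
\begin{proofof}{Claim~\ref{cl:ccfl-LB} (sketch/plan)}
The plan is to bound $g(x,y)^{1/p}$ for the fractional online solution $(x,y)$ by $O(\alpha)\cdot M$, and then read off the three inequalities from the three summands of $g$ using the definitions of $\ox,\oy$. The overall chain is: (i) the hypothetical integer solution $(x^\star,y^\star)$ of CCFL-cost at most $M$ is feasible for the convex program~\eqref{eq:gLP}; (ii) $g(x^\star,y^\star)$ is only $O(M^p)$; (iii) the framework of Theorem~\ref{thm:p-norm} (applied exactly as in the proof of Theorem~\ref{thm:f-CCFL}, including the factor-$2$ constraint relaxation used in the separation-oracle step) guarantees $g(x,y) \le O(\alpha)^p \cdot g(x^\star,y^\star)$; (iv) the three terms of $g(x,y)$ translate, via the definitions of $\ox,\oy$, to the desired inequalities.

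Bounding $g(x^\star,y^\star)$ is the step requiring care. The opening cost $\sum_i c_i x^\star_i \le M$ and the assignment cost $\sum_{i,j} a_{ij} y^\star_{ij}\le M$ are immediate. For the ``coupling'' piece inside the first term of $g$, the key observation is that $y^\star_{ij}>0$ only if $x^\star_i=1$ (an integer solution only assigns to open facilities), and in that case $\sum_j p_{ij}y^\star_{ij}\le M$ by the makespan bound. Therefore
\[ \sum_i c_i\cdot\frac{\sum_j p_{ij}y^\star_{ij}}{M} \,\,\le\,\, \sum_{i:x^\star_i=1} c_i \,\,=\,\, \sum_i c_i x^\star_i \,\,\le\,\, M, \]
so the first term of $g(x^\star,y^\star)$ is at most $(2M)^p$. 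The third term is $\sum_i(\sum_j p_{ij}y^\star_{ij})^p \le m\cdot M^p$, which is the ``obstacle'' one must absorb. This is exactly where the choice $p=\Theta(\log m)$ pays off: taking $p$-th roots, $m^{1/p}=O(1)$, so $g(x^\star,y^\star)^{1/p}=O(M)$.

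Combining with the competitive guarantee $g(x,y)^{1/p}\le O(\alpha)\cdot g(x^\star,y^\star)^{1/p}$ (the factor of $2$ from the relaxed covering constraints is absorbed into the $O(\alpha)$), each of the three linear quantities appearing inside the $p$-th powers in $g$ is at most $O(\alpha)\cdot M$. Concretely,
\[ \sum_i c_i\Bigl(x_i+\tfrac{\sum_j p_{ij}y_{ij}}{M}\Bigr),\quad \sum_{i,j}a_{ij}y_{ij},\quad \sum_j p_{ij}y_{ij}\ (\forall i) \,\,\le\,\, 4\alpha\cdot M, \]
where a small tightening of constants in the chain above yields the stated $4$.

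Finally, translate to $(\ox,\oy)$. Inequality~\eqref{eq:ccfl:5} follows from $\oy_{ij}\le y_{ij}$, and~\eqref{eq:ccfl:6} similarly. For~\eqref{eq:ccfl:4}, use $\ox_i \le x_i + \frac{\sum_j p_{ij}y_{ij}}{M}$ (since $\max\{a,b\}\le a+b$ for nonnegative $a,b$), and then the resulting bound is exactly the first linear quantity above. The only genuine subtlety, as noted, is handling the unbounded sum in the makespan term of $g$, which is why the relaxation is designed with $p=\Theta(\log m)$ in the first place.
\end{proofof}
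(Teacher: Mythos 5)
Your proof is correct and takes essentially the same route as the paper: bound $g$ at the hypothetical integer solution by $(2M)^p + M^p + m\cdot M^p$, use $\alpha$-competitiveness of the fractional algorithm to get $g(x,y)^{1/p} \le 4\alpha M$ (absorbing $m^{1/p}=O(1)$ via $p=\Theta(\log m)$), and then read each of the three linear quantities off $g^{1/p}$ using $\ox_i \le x_i + \frac{\sum_j p_{ij}y_{ij}}{M}$ and $\oy_{ij}\le y_{ij}$. The only difference is cosmetic: you spell out why the coupling term $\sum_i c_i\cdot\frac{\sum_j p_{ij}y^\star_{ij}}{M}\le M$ holds for an integer solution (via $y^\star_{ij}>0\Rightarrow x^\star_i=1$), a step the paper leaves implicit; your parenthetical about ``absorbing the factor 2 from the relaxed constraints into $O(\alpha)$'' is not actually needed here, since the competitive bound on $g(x,y)$ holds directly for the computed solution regardless of the $\frac12$-relaxation, but it does no harm.
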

\begin{proof}
  Since the optimal integral value of the current CCFL instance is at
  most $M$, the optimal CCFL value of the restricted instance $I_M$ is
  also at most $M$. That is, there is an integral assignment with
  opening cost $\le M$, assignment cost $\le M$, and maximum load $\le
  M$. So the optimal fractional value of program~(\ref{eq:gLP}) is at
  most $(2M)^p+M^p+m\cdot M^p\le m(3M)^p$. Since the fractional
  algorithm in Theorem~\ref{thm:f-CCFL} is $\alpha$-competitive, we
  have:
  \[ g(x,y)\le \alpha^p\cdot m(3M)^p \leq (4\alpha M)^p, \]
  since $m \leq (4/3)^p$ for $p \geq \log_{4/3} m$.
  This implies:
  \begin{gather*}
    \sum_i c_i\cdot \ox_i \le \sum_{i} c_i \left(x_i + \frac{\sum_j
        p_{ij} \cdot y_{ij}}{M}\right) \le g(x,y)^{1/p} \\
    \sum_{i,j}a_{ij}\cdot \oy_{ij} \le \sum_{i,j}a_{ij}y_{ij} \le
    g(x,y)^{1/p}\\
    \sum_j p_{ij} \cdot \oy_{ij} \le \sum_j p_{ij} \cdot y_{ij} \le
    g(x,y)^{1/p}
  \end{gather*}
  and $g(x,y)^{1/p} \leq 4\alpha M$ proves all three claims.
\end{proof}

Hence, after the fractional updates, we check whether the conditions
in~\eqref{eq:ccfl:4}-\eqref{eq:ccfl:6} are satisfied; if not, we end the
phase and double $M$ (knowing by Claim~\ref{cl:ccfl-LB} that $M$ is a
lower bound on the CCFL instance so far), and start the next phase with
the new client $h$ and the new value of $M$. So assume that after
fractionally assigning $h$, all the inequalities
\eqref{eq:ccfl:1}-\eqref{eq:ccfl:6} hold for the current value $M$. Now
we perform randomized rounding as follows.

\begin{itemize}
\item For each $i$, set $X_i$ to $1$ with probability $\min\{4\log (mn)
  \cdot \ox_i, 1\}$. Let $F_f =\{i : \ox_{i} \geq \frac{1}{4\log (mn)}\}$
  denote the set of {\em fixed} facilities for which $\Pr[X_i=1]=1$.

\item For each $i,j$, define $Z_{ij}$ as follows:
  \begin{gather*}
    \Pr[ Z_{ij}=1] =
      \begin{cases}
        \min\{4\log mn\cdot \oy_{ij}, 1\} & \mbox{ if }i\in F_f, \\
        \frac{\oy_{ij}}{\ox_i} & \mbox{ otherwise.}
      \end{cases}
  \end{gather*}
\end{itemize}
All the above random variables are independent. Each client $j$ is
assigned to some facility $i$ with $X_i \cdot Z_{ij}=1$; if there are
multiple possible assignments, the algorithm breaks ties arbitrarily.
(For the sake of analysis, we may imagine that the client is assigned to
\emph{all} facilities such that $X_i \cdot Z_{ij}=1$.) If client $j$ is unassigned, we open the facility
corresponding to $\min_{i=1}^m (c_i+a_{ij}+p_{ij})$ and assign $j$ to it (note that this minimum value is at most $M$); we will show that this event happens with low probability, so the effect on the objective will be small.
We now analyze this rounding.

\begin{claim}
  For any client $j$, $\Pr[j\mbox{ not assigned}] = \Pr[ \sum_i X_i\cdot
  Z_{ij}=0] <1/n^2$.
\end{claim}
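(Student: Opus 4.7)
The plan is a direct independent-Chernoff-style argument using constraint \eqref{eq:ccfl:2}, namely $\sum_i \oy_{ij} \ge \tfrac12$, together with the specific form of the rounding probabilities. The key observation is that for each facility $i$, the event $\{X_i Z_{ij}=1\}$ depends only on $X_i$ and $Z_{ij}$, so these events are mutually independent across $i$.

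First I would compute $\Pr[X_i Z_{ij}=1]$ in each case. If $i\in F_f$, then $X_i=1$ deterministically and $\Pr[X_i Z_{ij}=1]=\min\{4\log(mn)\cdot \oy_{ij},\,1\}$. If $i\notin F_f$, then by independence $\Pr[X_i Z_{ij}=1]=\Pr[X_i=1]\cdot \Pr[Z_{ij}=1]=\bigl(4\log(mn)\cdot \ox_i\bigr)\cdot \bigl(\oy_{ij}/\ox_i\bigr)=4\log(mn)\cdot \oy_{ij}$ (when $\ox_i=0$ we also have $\oy_{ij}=0$ by \eqref{eq:ccfl:3}, so this term is vacuous). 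Hence in every case,
\[
\Pr[X_i Z_{ij}=1]\;\ge\;\min\bigl\{4\log(mn)\cdot \oy_{ij},\,1\bigr\}.
\]

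Next I would split into two cases. If there exists an $i$ with $\oy_{ij}\ge 1/(4\log mn)$, then by \eqref{eq:ccfl:3} we have $\ox_i\ge \oy_{ij}\ge 1/(4\log mn)$, so $i\in F_f$ and $\Pr[X_i Z_{ij}=1]=1$; thus $j$ is assigned with probability $1$. Otherwise, $4\log(mn)\cdot \oy_{ij}<1$ for every $i$, and by independence across $i$,
\[
\Pr[\,\textstyle\sum_i X_iZ_{ij}=0\,]=\prod_{i=1}^m\bigl(1-4\log(mn)\cdot \oy_{ij}\bigr)\le \exp\!\Bigl(-4\log(mn)\sum_i \oy_{ij}\Bigr).
\]
Applying \eqref{eq:ccfl:2} gives $\sum_i \oy_{ij}\ge 1/2$, so the bound becomes $\exp(-2\log mn)=1/(mn)^2<1/n^2$, as required.

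The only subtlety is keeping the two rounding regimes straight and noting that it is precisely constraint \eqref{eq:ccfl:3} (together with the definition of $F_f$) that prevents any ``capping'' of the product from weakening the exponent: whenever $\oy_{ij}$ is large enough to trigger the $\min$ in the $F_f$ case, we actually get certainty that $j$ is assigned. Past that observation the argument is a one-line Chernoff-style inequality using $1-x\le e^{-x}$; no further calculation is needed.
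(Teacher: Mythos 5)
Your proof is correct and follows essentially the same approach as the paper: compute $\Pr[X_iZ_{ij}=1]$ in the two rounding regimes, apply independence and $1-x\le e^{-x}$, and invoke \eqref{eq:ccfl:2}. Your explicit case-split on whether any $\oy_{ij}\ge 1/(4\log mn)$ (which forces $i\in F_f$ via \eqref{eq:ccfl:3} and certain assignment) is a slightly cleaner way to dispose of the $\min$-cap, and keeping the exponent at $4\log mn$ rather than the paper's looser $4\log n$ yields the strict inequality $1/(mn)^2<1/n^2$ directly.
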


\begin{proof}
  If $i\in F_f$, then $E[X_i Z_{ij}] =E[Z_{ij}] = \min\{4\log mn\cdot
  \oy_{ij},1\}$. Else, $E[X_i Z_{ij}] = 4\log mn \cdot \oy_{ij} \ge
  4\log n\cdot \oy_{ij}$. In either case,
  \begin{gather*}
    \Pr[j\mbox{ not assigned}] = \Pr[ \sum_i X_i Z_{ij}=0] = \prod_i (1-E[
    X_i Z_{ij}]) \le \exp\bigg(-4\log n\sum_i \oy_{ij}\bigg)<1/n^2,
  \end{gather*}
  where the last inequality is by~\eqref{eq:ccfl:2}.
\end{proof}

\begin{claim}
  For any facility $i\in F_f$, we have $\Pr[load > 32\alpha \log mn\cdot
  M]\le 1/m^2$.
\end{claim}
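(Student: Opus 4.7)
The plan is to bound the load on facility $i \in F_f$ by a standard Chernoff tail bound applied to a sum of independent bounded random variables. First, by definition of $F_f$ we have $X_i = 1$ deterministically, so the total load on $i$ (under the analysis-time convention that assigns client $j$ to \emph{every} facility with $X_i Z_{ij} = 1$) is precisely $L_i = \sum_{j} p_{ij}\, Z_{ij}$, a sum of independent weighted Bernoullis with $\Pr[Z_{ij}=1] \le 4\log mn \cdot \oy_{ij}$. Crucially, since we are in the restricted instance $I_M$, every allowed assignment satisfies $p_{ij} \le M$, so $W_{ij} := p_{ij}/M \in [0,1]$ and $L_i / M = \sum_j W_{ij} Z_{ij}$ is a sum of independent $[0,1]$-valued random variables.

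Next, I would upper bound the expectation: by linearity, together with \eqref{eq:ccfl:6} (which holds just before the rounding step, since otherwise the algorithm would have terminated the phase and doubled $M$),
\[ \E[L_i] \;\le\; 4\log mn \cdot \sum_j p_{ij}\, \oy_{ij} \;\le\; 16\alpha \log mn \cdot M. \]
Setting $\mu^{\star} := 16\alpha \log mn$, we therefore have $\E[L_i/M] \le \mu^{\star}$.

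Finally, I would invoke the multiplicative Chernoff bound for a sum $S$ of independent $[0,1]$-valued random variables, in the form $\Pr[S \ge 2\mu^{\star}] \le \exp(-\mu^{\star}/3)$ whenever $\E[S] \le \mu^{\star}$ (this upper-bound-on-the-mean version follows from the usual statement by coupling $S$ with an augmented sum of mean exactly $\mu^{\star}$). Applied to $S = L_i/M$, this yields
\[ \Pr\!\left[L_i > 32\alpha \log mn \cdot M\right] \;\le\; \exp\!\left(-\tfrac{16\alpha \log mn}{3}\right) \;\le\; \frac{1}{m^2}, \]
since $\alpha \ge 1$ (in fact $\alpha = \Theta(\log^2 m)$, so the exponent is comfortably more than $2\log m$).

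There is no deep obstacle. The one point that genuinely needs the instance restriction is the normalization $p_{ij}/M \in [0,1]$, which is exactly why the algorithm works within the sub-instance $I_M$; without this truncation a single huge $p_{ij}$ could defeat any multiplicative tail bound. Analogous Chernoff arguments on the opening-cost and assignment-cost sums (using \eqref{eq:ccfl:4} and \eqref{eq:ccfl:5}), together with a similar but slightly more delicate argument for the non-fixed facilities $i \notin F_f$ (where the product $X_i Z_{ij}$ must be handled jointly), will close out the $O(\log^2 m \cdot \log mn)$ competitive guarantee for CCFL.
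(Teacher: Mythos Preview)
Your proposal is correct and is essentially the paper's own proof, just with the Chernoff-bound details spelled out more carefully (normalizing by $M$ and quoting the explicit $\exp(-\mu^\star/3)$ tail). The paper's argument is identical in structure: it observes the load on a fixed facility is $\sum_j p_{ij} Z_{ij}$, a sum of independent $[0,M]$-bounded random variables with mean at most $16\alpha\log mn\cdot M$ by \eqref{eq:ccfl:6}, and then invokes a Chernoff bound.
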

\begin{proof}
  For facility $i\in F_f$, the load assigned to it is $\sum_j p_{ij}\cdot
  Z_{ij}$. This is a sum of independent $[0,M]$-bounded random variables
  (by definition of the restricted instance $I_M$), with expectation at
  most $4\log mn \sum_j p_{ij}\cdot \oy_{ij}$, which
  by~\eqref{eq:ccfl:6} is at most $16\alpha\log mn\cdot M$. The claim
  now follows by a Chernoff bound.
\end{proof}

\begin{claim}
  For any facility $i\not\in F_f$, we have $\Pr[load > 4\log mn\cdot M
  \mid X_i=1]\le 1/m^2$.
\end{claim}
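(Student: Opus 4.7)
The plan is to apply a standard Chernoff bound to the conditional load on facility $i$, exploiting the fact that the indicator $X_i$ and the assignment indicators $\{Z_{ij}\}_j$ are mutually independent in the rounding. Conditioning on $X_i = 1$ therefore leaves each $Z_{ij}$ as an independent Bernoulli with parameter $\overline{y}_{ij}/\ox_i$ (the regime $i \notin F_f$ makes this expression at most $1$, so the probability is well-defined). Hence the load $L_i := \sum_j p_{ij}\cdot Z_{ij}$ conditioned on $X_i=1$ is a sum of independent random variables, each of which is supported on $[0,M]$ because we are working inside the restricted instance $I_M$, where only assignments with $p_{ij}\le M$ are allowed.

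Next, I would bound the conditional expectation using~\eqref{eq:ccfl:1}:
\[
\E[L_i \mid X_i=1] \,=\, \sum_j p_{ij}\cdot \frac{\overline{y}_{ij}}{\ox_i} \,\le\, \frac{M\cdot \ox_i}{\ox_i} \,=\, M.
\]
Rescaling by $M$, the variables $Y_j := p_{ij}\cdot Z_{ij}/M$ are independent, take values in $[0,1]$, and satisfy $\sum_j \E[Y_j \mid X_i=1]\le 1$. The event of interest becomes $\{\sum_j Y_j > 4\log(mn)\}$.

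Finally, I would invoke the standard multiplicative Chernoff bound for sums of independent $[0,1]$-valued random variables: for any $R\ge \mu := \sum_j \E[Y_j\mid X_i=1]$,
\[
\Pr\!\left[\textstyle\sum_j Y_j \ge R \;\Big|\; X_i=1\right] \,\le\, \left(\frac{e\mu}{R}\right)^{\!R} \,\le\, \left(\frac{e}{R}\right)^{\!R}.
\]
With $R = 4\log(mn)$ this is at most $(e/(4\log mn))^{4\log mn}$, which is comfortably below $1/m^2$ for all $m\ge 2$. The main potential subtlety is simply verifying the independence of $X_i$ and $\{Z_{ij}\}_j$ and the $[0,M]$ boundedness of the summands, both of which are immediate from the algorithm description and the definition of $I_M$; there is no technical obstacle beyond choosing the right form of Chernoff.
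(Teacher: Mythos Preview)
Your proposal is correct and follows essentially the same argument as the paper's proof: condition on $X_i=1$, use independence of $X_i$ and the $Z_{ij}$'s, bound each summand by $M$ via the restricted instance $I_M$, bound the conditional mean by $M$ using~\eqref{eq:ccfl:1}, and finish with a Chernoff bound. You simply make the Chernoff step more explicit than the paper does.
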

\begin{proof}
  Fix $i\not\in F_f$ and condition on $X_i=1$. The load assigned to $i$
  is $\sum_j p_{ij}\cdot (Z_{ij}|X_i=1)$, which is a sum of independent
  $[0,M]$-bounded random variables (again by definition of the
  restricted instance). The expectation is at most $\sum_j p_{ij}\cdot
  \frac{\oy_{ij}}{\ox_i} \le M$, by~\eqref{eq:ccfl:1}. The claim again
  follows by a Chernoff bound.
\end{proof}

\begin{claim}
  $\Pr[\mbox{opening cost }> 32\alpha \log mn\cdot M] < 1/n^2$.
\end{claim}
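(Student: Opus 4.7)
The plan is to apply a multiplicative Chernoff bound to the random opening cost $C_o := \sum_{i} c_i X_i$ produced by the rounding step, using inequality~\eqref{eq:ccfl:4} to bound its expectation and the definition of the restricted instance $I_M$ to bound the individual summands.

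First I would compute the expectation. By linearity and the definition of $X_i$,
\[
\E[C_o] \;=\; \sum_{i} c_i \cdot \min\{4\log(mn)\cdot \ox_i,\,1\} \;\le\; 4\log(mn)\cdot \sum_i c_i\,\ox_i \;\le\; 16\alpha\log(mn)\cdot M,
\]
where the last step uses~\eqref{eq:ccfl:4} from Claim~\ref{cl:ccfl-LB}. Next I would observe that the $X_i$ are mutually independent and that, by the definition of the restricted instance $I_M$, every allowed facility satisfies $c_i \le M$, so each summand $c_i X_i$ lies in $[0,M]$. Normalizing by $M$, the quantity $S := C_o/M$ is a sum of independent $[0,1]$-valued random variables with $\E[S] \le 16\alpha\log(mn)$.

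Finally I would invoke the standard multiplicative Chernoff bound in the regime where the deviation is at least the mean: writing $32\alpha\log(mn) = (1+\delta)\,\E[S]$ with $\delta\ge 1$, we get
\[
\Pr\bigl[C_o > 32\alpha\log(mn)\cdot M\bigr] \;=\; \Pr[S > 32\alpha\log(mn)] \;\le\; \exp\!\bigl(-\tfrac{1}{3}(32\alpha\log(mn) - \E[S])\bigr) \;\le\; \exp\!\bigl(-\tfrac{16}{3}\alpha\log(mn)\bigr),
\]
which is much smaller than $1/n^2$ since $\alpha = \Omega(1)$ (indeed $\alpha = \Theta(\log^2 m)$).

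The main subtlety, and the step I would be careful about, is the quantifier: the bound must hold uniformly even when $\E[S]$ is much smaller than $16\alpha\log(mn)$, so one needs the version of Chernoff that gives $\Pr[S > t] \le \exp(-(t-\E[S])/3)$ whenever $t \ge 2\E[S]$, rather than a naive $(1+\delta)\E[S]$ form relative to the true mean. A secondary point worth noting is that ``opening cost'' here refers to the cost $\sum_i c_i X_i$ coming from the rounding; the emergency openings triggered by unassigned clients contribute at most $M$ per client and are controlled separately via the earlier per-client bound $\Pr[j\text{ unassigned}] < 1/n^2$ together with a union bound.
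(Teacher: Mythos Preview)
Your proposal is correct and follows exactly the paper's own argument: the paper's proof simply notes that the opening cost is a sum of independent $[0,M]$-bounded variables with expectation at most $4\log(mn)\sum_i c_i\ox_i \le 16\alpha\log(mn)\cdot M$ by~\eqref{eq:ccfl:4}, and then invokes a Chernoff bound. Your write-up fills in the Chernoff computation and flags the mild subtlety about using an upper bound on the mean, which the paper leaves implicit.
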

\begin{proof}
  The opening cost is $\sum_i c_i\cdot X_i$ which is a sum of
  independent $[0,M]$-bounded random variables, whose expectation is at
  most $4\log mn \sum_i c_{i}\cdot \ox_i \le 16\alpha \log mn \cdot M$
  by~\eqref{eq:ccfl:4}. The claim now follows by a Chernoff bound.
\end{proof}

\begin{claim}
  $E[\mbox{assignment cost}] \le  16\alpha\log mn \cdot M$.
\end{claim}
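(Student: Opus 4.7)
My plan is to write the assignment cost as a sum of two contributions: the cost from the randomized rounding itself (where a client $j$ gets assigned to any facility $i$ with $X_i \cdot Z_{ij} = 1$) and the backup cost incurred only when no such facility exists for $j$. For the analysis I will invoke the paper's convention of upper-bounding the rounding cost by $\sum_{i,j} a_{ij}\, X_i Z_{ij}$, which accounts for assigning $j$ to every $i$ with $X_i Z_{ij} = 1$.

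The main step is computing $E[X_i Z_{ij}]$ in the two cases. For $i \in F_f$ we have $X_i \equiv 1$ and hence $E[X_i Z_{ij}] = \min\{4\log(mn)\cdot \oy_{ij}, 1\} \le 4\log(mn) \cdot \oy_{ij}$. For $i \notin F_f$, the variables $X_i$ and $Z_{ij}$ are independent, so $E[X_i Z_{ij}] = \Pr[X_i=1] \cdot \frac{\oy_{ij}}{\ox_i} = 4\log(mn)\cdot \ox_i \cdot \frac{\oy_{ij}}{\ox_i} = 4\log(mn)\cdot \oy_{ij}$. In both cases the bound $E[X_i Z_{ij}] \le 4\log(mn)\cdot \oy_{ij}$ holds, so by linearity of expectation and inequality~\eqref{eq:ccfl:5},
\[
E\!\left[\sum_{i,j} a_{ij}\, X_i Z_{ij}\right] \le 4\log(mn)\sum_{i,j} a_{ij}\,\oy_{ij} \le 16\alpha \log(mn)\cdot M.
\]

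For the backup cost, whenever client $j$ fails to get assigned (which by the earlier claim happens with probability $<1/n^2$), the algorithm assigns it to the facility minimizing $c_i + a_{ij} + p_{ij}$; since the optimal CCFL solution has value at most $M$, this minimum (and in particular the backup $a_{ij}$ term) is at most $M$. Thus the expected backup assignment cost is at most $n \cdot (1/n^2) \cdot M = M/n$, which is absorbed into the $16\alpha\log(mn)\cdot M$ bound (up to the small constants already hidden). Combining the two contributions proves the claim.

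The only subtlety is ensuring that the $M$-upper-bound on the backup term is valid: this relies on the fact that the phase has not ended, meaning Claim~\ref{cl:ccfl-LB} applies and the restricted instance $I_M$ is feasible with cost at most $M$, so a single $(c_i, a_{ij}, p_{ij})$ triple of total at most $M$ must exist. There is no real obstacle here beyond bookkeeping; the argument is a direct linearity-of-expectation computation on top of~\eqref{eq:ccfl:5}.
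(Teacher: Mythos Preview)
Your proof is correct and takes essentially the same approach as the paper: bound $E[X_i Z_{ij}] \le 4\log(mn)\cdot \oy_{ij}$ in both cases and then apply~\eqref{eq:ccfl:5}. The paper's own proof of this claim does not include the backup-cost term at all (that contribution is absorbed later in the proof of Lemma~\ref{lem:i-ccfl}), so your extra paragraph on unassigned clients is unnecessary here but harmless.
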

\begin{proof}
  The assignment cost is $\sum_i\sum_j a_{ij}\cdot X_iZ_{ij}$ which has
  mean at most $4\log mn \sum_{ij}a_{ij} \oy_{ij}\le 16\alpha\log mn
  \cdot M$ by \eqref{eq:ccfl:5}.
\end{proof}

Combining the above claims, and using the fact that each element is uncovered with probability less than $\frac1{n^2}$,  we get:
\begin{lem}\label{lem:i-ccfl}
  The expected sum of opening and assignment costs and makespan is
  $O(\alpha\log mn)\cdot M$.
\end{lem}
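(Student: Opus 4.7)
The plan is to combine the five preceding claims (one per cost component plus the unassigned-client claim) by bounding each contribution in expectation and summing up.

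The opening and assignment costs are immediate from linearity of expectation, and in fact give stronger bounds than the tail claims stated above. First, $E[\sum_i c_i X_i] \le 4\log(mn) \sum_i c_i \ox_i \le 16\alpha\log(mn)\cdot M$ using~\eqref{eq:ccfl:4}. For the assignment cost, the key observation is that regardless of whether $i \in F_f$ or not, $E[X_i Z_{ij}] \le 4\log(mn)\cdot \oy_{ij}$: in the fixed case because $X_i = 1$ deterministically, in the non-fixed case because $\Pr[X_i=1]\cdot \oy_{ij}/\ox_i = 4\log(mn)\cdot \oy_{ij}$. Combined with~\eqref{eq:ccfl:5}, this yields $E[\sum_{i,j}a_{ij}X_i Z_{ij}] \le 16\alpha\log(mn)\cdot M$.

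The makespan is the main obstacle, because we need $E[\max_i \text{load}_i]$ rather than a per-facility bound. I would use the per-facility load claims together with a union bound over the $m$ facilities, handling the two cases ($i \in F_f$ and $i \notin F_f$) separately: for $i \notin F_f$ the conditional event $\{\text{load}_i > 32\alpha\log(mn)\cdot M \mid X_i=1\}$ is equivalent to the joint event once $X_i=0$ contributes no load. To turn the resulting high-probability statement into an expectation bound, I would apply a sharper form of Chernoff than the $1/m^2$ stated in the claims: since each load is a sum of independent $[0,M]$-bounded variables of mean at most $16\alpha\log(mn)\cdot M$, the probability of exceeding twice the mean is at most $(e/4)^{16\alpha\log(mn)} \le (mn)^{-\Omega(\alpha)}$. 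Using $E[\max_i \text{load}_i] = \int_0^\infty \Pr[\max_i\text{load}_i > t]\,dt$, splitting the integral at $32\alpha\log(mn)\cdot M$, and bounding the tail via union bound plus the exponential Chernoff estimate, gives $E[\max_i \text{load}_i] \le 32\alpha\log(mn)\cdot M + O(M)$.

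Finally, I would absorb the unassigned-client exception. By the unassigned-client claim, $\Pr[j \text{ unassigned}] < 1/n^2$; when this happens the algorithm incurs at most $c_i + a_{ij} + p_{ij} \le 3M$ in additional opening cost, assignment cost, and extra load on a single facility (all three quantities are at most $M$ in the restricted instance $I_M$). Summing over $j \in [n]$, the total expected contribution is at most $3M/n$, which is dominated by the other terms. Adding the three expectation bounds yields $E[\text{opening} + \text{assignment} + \text{makespan}] = O(\alpha\log(mn))\cdot M$, proving the lemma. The main technical subtlety, as noted, is ensuring that the per-facility tail bound is strong enough to dominate the trivial worst-case load $nM$ in the low-probability event; this requires invoking Chernoff at its full exponential strength rather than the weaker $1/m^2$ stated in the intermediate claim.
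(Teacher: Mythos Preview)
Your proposal is correct and follows the paper's approach, which merely states ``Combining the above claims, and using the fact that each element is uncovered with probability less than $\frac{1}{n^2}$.'' In fact you have been more careful than the paper itself: the paper does not spell out how the per-facility $1/m^2$ tail bounds yield a bound on $\E[\max_i \text{load}_i]$, and your invocation of the full exponential Chernoff tail (integrated via the layer-cake formula, with a union bound over facilities) is exactly what is needed to fill that gap.
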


A standard doubling argument accounts for all the phases as follows. Let
$M^\star$ denote the final value of the parameter $M$ achieved by the
algorithm. By Claim~\ref{cl:ccfl-LB} we have $OPT>M^\star/2$. On the other
hand, the expected cost in any phase corresponding to $M$ is at most
$O(\alpha\log mn)\cdot M$ by Lemma~\ref{lem:i-ccfl}; this gives a
geometric sum with total cost at most $O(\alpha\log mn)\cdot
(M^\star+\frac{M^\star}{2}+\cdots )\le O(\alpha\log mn)\cdot OPT$. This proves
the following theorem.
\begin{thm}
  There is a randomized $O(\log^2m \log mn)$-competitive ratio for CCFL.
\end{thm}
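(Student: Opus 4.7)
The plan is to combine Lemma~\ref{lem:i-ccfl} with a standard geometric-doubling argument on the guess $M$. The only remaining work is (i) to show that the final value $M^\star$ of the parameter provides a valid lower bound on $\mathrm{OPT}$, and (ii) to sum the per-phase expected costs into a geometric series.

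For (i), let $M^\star$ denote the value of $M$ at the end of the algorithm, and let $M'<M^\star$ be any earlier value that the algorithm tried and then doubled. The doubling was triggered because at least one of the inequalities \eqref{eq:ccfl:4}--\eqref{eq:ccfl:6} was violated for the current fractional solution under parameter $M'$. Taking the contrapositive of Claim~\ref{cl:ccfl-LB}, no integral solution to the CCFL instance seen so far has cost at most $M'$, so $\mathrm{OPT}>M'$. Applying this to $M'=M^\star/2$ gives $\mathrm{OPT}> M^\star/2$. (To make the argument complete at the low end, we initialize $M_0$ to the obvious lower bound $\min_{i,j}(c_i+a_{ij}+p_{ij})$ for the first client, which is at most $\mathrm{OPT}$.)

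For (ii), by Lemma~\ref{lem:i-ccfl} the expected cost contributed by the phase with parameter $M$ is at most $C\cdot \alpha \log(mn)\cdot M$ for some constant $C$. Since $M$ doubles between successive phases and ends at $M^\star$, the total expected cost is
\[
\sum_{\text{phases}} C\,\alpha\log(mn)\cdot M \;\le\; C\,\alpha\log(mn)\cdot\bigl(M^\star + M^\star/2 + M^\star/4 + \cdots\bigr) \;\le\; 2C\,\alpha\log(mn)\cdot M^\star.
\]
Combining with $M^\star < 2\mathrm{OPT}$, the total expected cost is $O(\alpha\log mn)\cdot \mathrm{OPT}$. Plugging in $\alpha=O(\log^2m)$ from Theorem~\ref{thm:f-CCFL} yields the desired $O(\log^2m\,\log mn)$-competitive ratio.

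The only subtlety worth double-checking is the interface between phases: a client $h$ whose arrival triggers the doubling of $M$ should be processed under the new parameter, and the decisions (facility openings, assignments) committed in prior phases are irrevocable but already accounted for in the per-phase bound of Lemma~\ref{lem:i-ccfl}. No new probabilistic argument is required beyond what is already in Claims above; the remaining work is purely the deterministic amortization described here.
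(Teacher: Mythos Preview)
Your proposal is correct and follows essentially the same argument as the paper: use the contrapositive of Claim~\ref{cl:ccfl-LB} to conclude $\mathrm{OPT}>M^\star/2$, then sum the per-phase bounds from Lemma~\ref{lem:i-ccfl} as a geometric series. Your write-up is in fact more careful than the paper's terse paragraph, spelling out the initialization of $M_0$ and the handling of the phase boundary, but the underlying idea is identical.
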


\begin{rem}
  We can use randomized rounding with alteration, as in~\cite{GN12-mor},
  to obtain a more nuanced $O(\log^2 m\cdot \log m\ell)$-competitive
  ratio, where $\ell\le n$ is the ``machine degree'' i.e. $\max_{i\in
    [m]} |\{j : p_{ij}<\infty\}|$. We omit the details.
\end{rem}


\subsection{Capacitated Multicast Problem}
\label{sec:multicast}

We consider the online multicast problem~\cite{AAABN-talg06} in the
presence of capacities, which we call the \emph{Capacitated Multicast}
(CMC) problem. In this problem, there are $m$ edge-disjoint rooted trees
$T_1,\cdots,T_m$ corresponding to multicast trees in some network. Each
tree $T_i$ has a {\em capacity} $u_i$ which is the maximum load that can
be assigned to it. Each edge $e\in \cup_{i=1}^m T_i$ has an opening cost
$c_e$. A sequence of $n$ clients arrive online, and each must be
assigned to one of these trees. Each client $j$ has a tree-dependent
load of $p_{ij}$ for tree $T_i$, and is connected to vertex $\pi_{ij}$
in tree $T_i$. Thus, if client $j$ is assigned to tree $T_i$ then the
load of $T_i$ increases by $p_{ij}$, and all edges on the path in $T_i$
from $\pi_{ij}$ to its root must be opened. The objective is to minimize
the total cost of opening the edges, subject to the capacity constraints
that the total load on tree $T_i$ is at most $u_i$.

The capacitated multicast problem generalizes the CCFL problem. Indeed,
let each machine $i\in [m]$ correspond to a two-level tree $T_i$ with
capacity $u_i$, where tree $T_i$ has a single edge $r_i$ incident to the
root, and $n$ leaves corresponding to the clients. Edge $r_i$ has
opening cost $c_i$, and the leaf edge corresponding to client $j$ has
opening cost $a_{ij}$. The load of client $j$ in tree $T_i$ is $p_{ij}$.
It is easy to check that a feasible solution to this CMC problem
instance corresponds precisely to a CCFL solution with precisely the
same cost.

In this section, we generalize the solution from the previous section to
give the following result:

\begin{thm}
  \label{thm:cmp}
  There is a randomized online algorithm that given any instance of the
  capacitated multicast problem on $d$-level trees, and a bound $C$ on
  its optimal cost, computes a solution of cost $O(\log^2m \cdot \log
  mn)\cdot C$ with congestion $O((d+\log^2m) \cdot \log mn)$.
\end{thm}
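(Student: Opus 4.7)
The plan is to generalize the CCFL rounding of \S\ref{sec:CCFL} to handle tree paths instead of single facility edges. First I would set up a convex relaxation with variables $x_e$ (whether edge $e$ is open) and $y_{ij}$ (whether client $j$ is assigned to tree $T_i$), with objective
\begin{align*}
f(x,y) \,=\, \Big(\sum_e c_e\, x_e\Big)^{p} \,+\, \sum_{i=1}^{m} \Big(\sum_{j} p_{ij}\, y_{ij}\Big)^{p}, \qquad p = \Theta(\log m),
\end{align*}
and covering constraints generalizing those of CCFL: for each client $j$, each subset $S\sse[m]$, and each selector $g:S\to E$ with $g(i)\in P_{ij}$ (the $\pi_{ij}$-to-root path in $T_i$),
\begin{align*}
\sum_{i\in S} x_{g(i)} \,+\, \sum_{i\notin S} y_{ij} \,\ge\, 1.
\end{align*}
In the integer case these encode both $\sum_i y_{ij}\ge 1$ and the coupling $y_{ij}\le x_e$ for every $e\in P_{ij}$. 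The CCFL separation oracle extends: for each tree $i$ compute $m_i=\min_{e\in P_{ij}} x_e$, set $S=\{i:m_i<y_{ij}\}$ and $g(i)=\arg\min$, and compare $\sum_{i\in S}m_i+\sum_{i\notin S}y_{ij}$ to $\frac12$. Plugging into Theorem~\ref{thm:p-norm} (as in Theorem~\ref{thm:f-CCFL}) yields an $O(\log^{2} m)$-competitive fractional solution.

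Given the bound $C$ on OPT, I restrict to edges with $c_e\le C$, loads $p_{ij}\le C$, and root-paths of total cost $\le C$, all of which still contain an optimal solution. The randomized rounding mirrors \S\ref{sec:ccfl-round}: open each edge $e$ independently with probability $\min\{4\log(mn)\cdot \ox_e,\,1\}$; and for each pair $(i,j)$ sample $Z_{ij}\in\{0,1\}$ with probability $\min\{4\log(mn)\cdot \oy_{ij},\,1\}$ when every edge of $P_{ij}$ is \emph{fixed} ($\ox_e\ge 1/(4\log mn)$), and with probability $\oy_{ij}/\min_{e\in P_{ij}}\ox_e$ otherwise. Client $j$ is successfully assigned to $T_i$ when $Z_{ij}=1$ and every edge of $P_{ij}$ is open; otherwise a fallback opens the cheapest $\pi_{\cdot j}$-to-root path (cost at most $C$) and assigns $j$ there.

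The analysis mirrors Claim~\ref{cl:ccfl-LB} and the subsequent claims of \S\ref{sec:ccfl-round}. The LP-level coupling $\ox_e\ge \oy_{ij}$ on $P_{ij}$ implies that the joint probability of a successful $(i,j)$-assignment is $\Theta(\log(mn)\cdot \oy_{ij})$, so $\Pr[j\text{ unassigned}]<1/n^{2}$ via $\sum_i \oy_{ij}\ge \frac12$. Chernoff bounds on the edge cost $\sum_e c_e X_e$ and on each tree's non-fallback load $\sum_j p_{ij}\cdot [\text{success}_{ij}]$, combined with the $O(\log^{2}m)$ fractional ratio, yield opening cost $O(\log^{2}m\,\log mn)\cdot C$ and a main congestion term of $O(\log^{2}m\,\log mn)$. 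The extra $O(d\,\log mn)$ term in the congestion comes from the fallback step: each fallback opens a $d$-edge path carrying load at most $C$, and a Chernoff bound on the number of fallbacks per tree (which has mean below $1/n$) controls their contribution, producing the $O((d+\log^{2}m)\log mn)$ overall congestion.

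The main obstacle is controlling the joint rounding of path edges. Unlike CCFL, a successful $(i,j)$-assignment requires an entire path of up to $d$ edges to be open simultaneously, and the naive bound $\prod_{e\in P_{ij}}\Pr[X_e=1]$ is far too loose. The fix is to exploit the LP inequality $\oy_{ij}\le \min_{e\in P_{ij}}\ox_e$ implicit in the cut-style covering constraints: when the bottleneck edge of $P_{ij}$ is unfixed, the (correlated) event that $P_{ij}$ is open already dominates the intended $Z_{ij}$-event, so conditioning the rounding on the bottleneck edge delivers the right success probability; when every edge of $P_{ij}$ is fixed, the path is deterministically open and the CCFL conditional rounding applies unchanged. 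This is why the cut-style covering constraints are essential---they enforce the coupling $\oy_{ij}\le \ox_e$ along each $e\in P_{ij}$ without introducing non-covering constraints that our framework cannot handle.
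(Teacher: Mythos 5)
Your overall architecture (convex $\ell_p$-style relaxation plus randomized rounding, following the CCFL pattern) matches the paper, and your cut-style covering constraints are equivalent to the paper's min-cut formulation. But there are several genuine gaps that break the proof.

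The central one is the one you flag yourself: you open edges independently, which makes the event ``all of $P_{ij}$ is open'' a product of up to $d$ near-independent factors, and $\prod_{e\in P_{ij}} \Pr[X_e = 1] \approx (4\log(mn)\cdot\oy_{ij})^d$ is hopeless when $\oy_{ij}$ is small. Your proposed ``fix'' --- that conditioning on the bottleneck edge recovers the right success probability --- does not cohere with your independent-rounding scheme, and it is not a description of any concrete rounding procedure. What the paper actually uses here is GKR-style \emph{dependent} rounding along each tree: edge $e$ is included with probability $\ox_e/\ox_{\tau(e)}$ conditional on its parent edge being included, so by the monotonicity of $\ox$ along root-to-leaf paths, the probability that an entire root-to-leaf path is selected telescopes to $\min_e \ox_e = \oy_{ij}$. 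Without this, you do not get the $\Theta(\oy_{ij})$ per-tree assignment probability you are relying on.

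You also misattribute the $d$ factor in the congestion. It does \emph{not} come from the fallback step --- a fallback assigns one client to one tree, contributing at most $u_i$ load, and occurs with probability $O(1/(mn^2))$, so its contribution is negligible. The $d$ appears in the \emph{conditional} load analysis of the GKR rounding: conditioning on client $j$ being routed through $P_{ij}$ in tree $T_i$, the expected load from clients hanging off each of the $\le d$ edges of $P_{ij}$ contributes $O(u_i)$ per edge, totaling $O(d\cdot u_i)$; adding the unconditional $O(\alpha u_i)$ load gives $O((d+\alpha)u_i)$. Your analysis also omits the alteration step (declaring failure when a tree's load exceeds $O((d+\alpha)u_i)$ in one trial, via Markov) and the $\beta=\Theta(\log mn)$ independent repetitions that together drive the failure probability down to $1/(mn^2)$ while keeping the final congestion at $O(\beta(d+\alpha))=O((d+\log^2 m)\log mn)$.

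Finally, your objective is missing the cost--load coupling term $\frac{2}{u_i}\sum_{j\in T^e} p_{ij} y_{ij}$ inside the cost summand and the $\frac{2C}{u_i}$ normalization in the load summand. These are needed so that the modified variables $\ox_e = \max\{f_e,\ \frac{2}{u_i}\sum_{j\in T^e}p_{ij} y_{ij}\}$ simultaneously satisfy the cut feasibility ($\ox_e \ge f_e$), the per-subtree capacity bound $\sum_{j\in T^e}p_{ij}\oy_{ij} \le u_i\cdot\ox_e$, and the cost bound $\sum_e c_e\ox_e \le O(\alpha)C$. With your plainer objective, Claim~\ref{cl:ccfl-LB}-style bounds on $\ox$ do not all go through.
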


The \emph{congestion} of a solution is the maximum (over all
facilities) of the multiplicative factor by which the
capacity is violated.

The proof of this theorem will occupy the rest of this section. The main
idea is similar: we solve a convex programming relaxation of this
problem in an online fashion, and show how to round the solution online
as well. However, these will require some ideas over and above those
used in the previous section.

First, the convex relaxation. It will be convenient to augment each tree
$T_i$ as follows. For each client $j$ with $p_{ij}\le u_i$ (i.e., that
can be feasibly assigned to $T_i$), we introduce a new leaf vertex
$v_{ij}$ connected to vertex $\pi_{ij}\in T_i$ via an edge of zero cost.
These new leaf vertices $v_{ij}$ are assigned a \emph{vertex weight}
$p_{v_{ij}} := p_{ij}$, whereas all the original vertices of the trees
are given zero weight.  To minimize extra notation, we refer to these
augmented trees also as $T_i$. Finally we merge the roots of these trees
$T_i$ into a single root vertex $r$ to get a new tree $T = (V,E)$. For
client $j$, let $V_j = \{ v_{ij} \mid i \in [m] ~s.t.~ p_{ij}\le u_i\}$
denote the leaves in $T$ corresponding to client $j$.

For any edge $e\in E$, denote the subtree of $T$ below edge $e$ by
$T^e$. Observe that if $e$ was in $T_i$ then $T^e$ is a subtree of the
$i^{th}$ tree $T_i$. In this case, we use the notation $j \in T^e$ to
denote that $v_{ij} \in T^e$. For each vertex $v\in
V\setminus \{r\}$, its parent in $T$ is denoted $\tau(v)$.

Our fractional relaxation has a variable $x_e$ for each edge $e \in
E$. For brevity, we use $y_{ij} := x_{(v_{ij},\tau(v_{ij}))}$ to denote
the variable for the edge connecting the leaf-node corresponding to
client $j$ in tree $T_i$ to its parent. The $x_e$ variables naturally
denote the ``opening'' of edges, and the $y_{ij}$ variables denote the
assignment of clients to trees.  The objective is the following convex
function:
\begin{gather}
  g(x) \,\,=\,\, \left(\sum_{i=1}^m \sum_{e\in T_i} c_e \left( x_e +
      \frac{2}{u_i} \sum_{v\in T^e} p_{v} \cdot x_{v,\tau(v)}
    \right)\right)^p \, + \, \sum_{i=1}^{m}\left(\frac{2C}{u_i}\cdot
    \sum_{v\in T_i} p_{v} \cdot x_{v,\tau(v)}\right)^p.
\end{gather}
In the above expression, we choose $p=\Theta(\log m)$. Using the facts
that the weights $p_v$ are defined only for the new leaf nodes, and that
leaf edges are denoted by the $y_{ij}$ variables, we can write the above
expression equivalently as follows:
\begin{gather}
  g(x) \,\,=\,\, g(x,y) \,\,=\,\, \left(\sum_{i=1}^m \sum_{e\in T_i} c_e
    \left( x_e + \frac{2}{u_i} \sum_{j\in T^e} p_{ij} \cdot y_{ij}
    \right)\right)^p \, + \, \sum_{i=1}^{m}\left(\frac{2C}{u_i}\cdot
    \sum_{j\in T^e} p_{ij} \cdot y_{ij}\right)^p.
\end{gather}

We will solve the following convex covering program:
\begin{alignat*}{2}
  \min & \quad g(x) \\
  s.t. & \quad \textstyle \sum_{e\in \delta(S)} x_e \geq 1, \qquad
  \forall \, \, V_j\subseteq S \subseteq V\setminus \{r\},~~\forall j\in [n]\\
  &\quad x \geq 0.
\end{alignat*}
The constraints say that the min-cut between the root and the nodes in
the set $V_j$, which contains all the nodes corresponding to client $j$
in the various trees, is at least~$1$---i.e., $j$ is (fractionally)
connected at least to unit extent. Much as in Section~\ref{sec:CCFL}, we
deal with the exponential number of covering constraints as follows: we
relax the covering constraints to $\frac12$ (instead of one). Upon
arrival of client $j$, we add covering constraints based on the
following procedure.

\begin{quote}
  While there is some $V_j \subseteq S\subseteq V(G)\setminus \{r\}$
  with $\big(\sum_{e\in \delta(S)} x_e < \frac12\big)$, do: \\
  $~~~~~~$ Add constraint $\sum_{e\in \delta(S)} x_e \ge 1$, and update
  $(x,y)$ according to Theorem~\ref{thm:p-norm}.
\end{quote}

Note that given a current solution $x$, one can find such a ``violated
constraint'' (if there is one) by a minimum-cut subroutine, which takes
polynomial time. The number of iterations of the above procedure is at
most $2|E|$, because $\sum_{e\in E} \min\{x_e,1\}$ increases by at least
$\frac12$ in each iteration, but it starts at $0$ and stays at most
$|E|$. Moreover, twice the solution is always a feasible solution, which
implies an $O(\log^2 m)$-competitive online algorithm for the fractional
problem.

\subsubsection{Rounding the Solution Online}
\label{sec:cmc-round}

For the online rounding, define some modified variables. For each client
$j\in[n]$, compute a unit-flow ${\cal F}_j$ from the set $V_j$ to the
root $r$ in the tree $T$ with edge-capacities $2x$; note that the
fractional solution guarantees this flow exists. Let $f_{e}^j$ be the
amount of flow on edge $e$ in ${\cal F}_j$, and define $f_e :=\max_{j}
f^j_e$. Note that the $f_e$ values are monotone non-decreasing as we go
up the tree $T$. Now set:
\begin{gather}
  \ox_e = \max\bigg\{ f_{e} ~~,~~ \frac{2}{u_i}\sum_{j\in T^e} p_{ij}
  \cdot y_{ij} \bigg\},\quad \forall e\in T_i,~ \forall i\in[m].
\end{gather}
Also define $\oy_{ij} = \ox_{(v_{ij},\tau(v_{ij}))}$ for any client $j$ and
tree $T_i$, to capture the assignment of clients of trees.

\begin{claim}
  The variables $\ox_e$ are monotone non-decreasing up the tree $T$.
\end{claim}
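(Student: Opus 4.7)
The plan is to show the claim by a direct edge-by-edge comparison: for every non-root edge $e$ with parent edge $e'$ in $T$, I will argue $\ox_e \leq \ox_{e'}$ by showing both arguments of the ``max'' defining $\ox_e$ are dominated by the corresponding arguments defining $\ox_{e'}$.

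First, I would observe a structural fact: since $T$ is obtained by merging the roots of $T_1,\dots,T_m$ into a single root $r$, the only place where different trees meet is $r$ itself. Hence whenever $e$ is a non-root edge with parent $e'$, both $e$ and $e'$ lie inside the same original tree $T_i$. Consequently, $\ox_e$ and $\ox_{e'}$ are defined using the same capacity $u_i$, which lets me compare the second arguments of the $\max$ directly. Moreover, since $e'$ is the parent of $e$, the subtree below $e$ is contained in the subtree below $e'$, i.e., $T^e \subseteq T^{e'}$.

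Next I would handle the two terms. For the weighted load term, since $T^e \subseteq T^{e'}$ and all the $p_{ij}$ and $y_{ij}$ are non-negative, the sum $\sum_{j\in T^e} p_{ij}y_{ij}$ is clearly at most $\sum_{j\in T^{e'}} p_{ij}y_{ij}$, giving the monotonicity of that argument. For the flow term, I would use flow conservation: for every client $j$, the flow $\mathcal{F}_j$ is a unit flow from $V_j$ to the root, and at the endpoint shared by $e$ and $e'$, the flow entering on $e'$ equals the sum of flows leaving on the children edges (one of which is $e$). Since all flows are non-negative, $f_{e'}^j \geq f_e^j$. Taking the maximum over $j$ on both sides yields $f_{e'} \geq f_e$.

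Combining the two bounds,
\[
\ox_e = \max\!\left\{f_e,\ \tfrac{2}{u_i}\!\sum_{j\in T^e} p_{ij}y_{ij}\right\} \leq \max\!\left\{f_{e'},\ \tfrac{2}{u_i}\!\sum_{j\in T^{e'}} p_{ij}y_{ij}\right\} = \ox_{e'},
\]
which is the desired monotonicity. I do not expect any serious obstacle here; the only subtlety is remembering that the trees share only the root vertex, so that ``parent edge'' always stays within a single $T_i$ and both terms of the max use the same $u_i$.
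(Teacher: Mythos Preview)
Your proof is correct and follows essentially the same approach as the paper: both show that each of the two arguments of the $\max$ defining $\ox_e$ is monotone up the tree (the flow term and the load-sum term), and hence so is the max. Your version is slightly more explicit about flow conservation and about why $e$ and its parent $e'$ lie in the same $T_i$, but the argument is the same.
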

\begin{proof}
  The flow values $f_e$ are monotone non-decreasing up the tree. Also,
  for any tree $T_i$ and any edge $e \in T_i$, the quantity $f'_e :=
  \frac{2}{u_i}\sum_{j\in T^e} p_{ij} \cdot y_{ij}$ is also monotone
  non-decreasing up the tree, since it is the sum of non-negative
  quantities over larger subtrees. Since tree $T$ is obtained by merging
  the trees $T_i$ at the root, the monotonicity of
  $\ox_e=\max\{f_e,f'_e\}$ maintained.
\end{proof}

Note that $(\ox,\oy)$ clearly satisfies:
\begin{alignat}{2}
  \oy_{ij} &\leq \ox_i & \qquad\qquad
  & \forall j\in T^e, ~ \forall
  e\in T_i, ~ \forall i\in [m].
  \label{eq:cmp:0} \\
  \sum_{j\in T^e}\, p_{ij}\cdot \oy_{ij}  = \sum_{v\in
    T^e}\, p_{v}\cdot \ox_{v,\tau(v)}  &\le  u_i \cdot \ox_e
  & & \forall e\in T_i,\,\, \forall i\in [m].
  \label{eq:cmp:1} \\
  \ox_e &\le \ox_{\tau(e)} & & \forall \, e \in T.
  \label{eq:cmp:2} \\
  \sum_{i=1}^m \oy_{ij} &\ge 1 && \forall j\in [n].
  \label{eq:cmp:3}
\end{alignat}

\begin{claim}
  \label{cl:cmp-LB}
  Assuming there exists an integral solution to the CMC problem instance
  having cost at most $C$, the following inequalities hold with
  $\alpha=O(\log^2m)$:
  \begin{alignat}{2}
    \sum_{e\in T} c_e\cdot \ox_e &\le  4\alpha \cdot C
    && \label{eq:cmp:4} \\
    \sum_{j\in T_i}\, p_{ij}\cdot \oy_{ij} = \sum_{v\in
      T_i}\, p_{v}\cdot \ox_{v,\tau(v)} &\le 4\alpha \cdot
    u_i, & \qquad\qquad &\forall i\in[m]. \label{eq:cmp:5}
  \end{alignat}
\end{claim}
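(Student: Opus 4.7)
The plan is to follow the template of Claim~\ref{cl:ccfl-LB} applied to the CMC-specific convex relaxation: evaluate the objective $g$ at the assumed integral solution $(x^\star,y^\star)$ of cost at most $C$, then use the $\alpha$-competitiveness of the fractional algorithm (Theorem~\ref{thm:p-norm}, with $\alpha=O(\log^2 m)$) to bound $g(x,y)^{1/p}$, and finally ``unpack'' this bound through the definition $\ox_e=\max\{f_e,\,\tfrac{2}{u_i}\sum_{j\in T^e} p_{ij}y_{ij}\}$ and $\oy_{ij}=\ox_{(v_{ij},\tau(v_{ij}))}$.

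For the first step, consider the first term of $g(x^\star,y^\star)$. The piece $\sum_{i,e\in T_i} c_e x^\star_e$ is just the total opening cost, hence at most $C$. For the cross piece $\sum_{i,e\in T_i} c_e\cdot\tfrac{2}{u_i}\sum_{j\in T^e} p_{ij}y^\star_{ij}$, I swap the order of summation: for each tree $T_i$ and each client $j$ assigned to it, the contribution is $\tfrac{2}{u_i}p_{ij}\cdot(\text{cost of the root-to-}\pi_{ij}\text{ path in }T_i)$, and the path cost is at most the opening cost $C_i$ of tree $T_i$. Using $\sum_j p_{ij}y^\star_{ij}\le u_i$, this yields $\le 2C_i$ per tree, so at most $2C$ overall. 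Thus the first term is at most $(3C)^p$. For the second term, the capacity constraint gives each summand at most $(2C)^p$, so the total is at most $m(2C)^p$. Combining, $g(x^\star,y^\star)\le 2m(3C)^p$. Competitiveness gives $g(x,y)\le \alpha^p\cdot 2m(3C)^p$, and choosing $p=\Theta(\log m)$ large enough so that $2m\le(4/3)^p$ (exactly as in Claim~\ref{cl:ccfl-LB}) yields $g(x,y)^{1/p}\le 4\alpha C$.

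To derive \eqref{eq:cmp:4}, combine $\ox_e\le f_e+\tfrac{2}{u_i}\sum_{j\in T^e}p_{ij}y_{ij}$ (from $\max\{a,b\}\le a+b$) with $f_e\le 2x_e$ (each unit flow ${\cal F}_j$ is constrained by capacities $2x$, so $f_e^j\le 2x_e$, hence $f_e=\max_j f_e^j\le 2x_e$). This gives $\sum_e c_e\ox_e \le 2\sum_e c_e\bigl(x_e+\tfrac{2}{u_i}\sum_{j\in T^e}p_{ij}y_{ij}\bigr)\le 2\,g(x,y)^{1/p}=O(\alpha C)$. To derive \eqref{eq:cmp:5}, the stated equality is immediate from the definition of $\oy_{ij}$ and the fact that only the leaves $v_{ij}$ carry weight $p_v=p_{ij}$. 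For the inequality, specialize $\ox$ to a leaf edge: $T^{(v_{ij},\tau(v_{ij}))}=\{v_{ij}\}$, so $\ox_{(v_{ij},\tau(v_{ij}))}\le f_{(v_{ij},\tau(v_{ij}))}+\tfrac{2 p_{ij}y_{ij}}{u_i}\le 2y_{ij}+\tfrac{2p_{ij}y_{ij}}{u_i}$. Multiplying by $p_{ij}$ and summing, and using $p_{ij}\le u_i$ (which holds by construction of the augmented tree) to absorb the $p_{ij}^2$ term into $u_i p_{ij}$, yields $\sum_j p_{ij}\oy_{ij}\le 4\sum_j p_{ij}y_{ij}$. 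Meanwhile, the second term of $g$ directly gives $\tfrac{2C}{u_i}\sum_j p_{ij}y_{ij}\le g(x,y)^{1/p}\le 4\alpha C$, so $\sum_j p_{ij}y_{ij}\le 2\alpha u_i$, completing the bound $O(\alpha u_i)$.

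The main obstacle I anticipate is the quadratic $p_{ij}^2$ factor that appears from the capacity side of $\ox_{(v_{ij},\tau(v_{ij}))}$ when deriving \eqref{eq:cmp:5}; the resolution is precisely the augmentation convention that only admits leaves $v_{ij}$ with $p_{ij}\le u_i$, which lets us replace $p_{ij}^2$ by $u_i\cdot p_{ij}$ and reduce to a linear load bound. A secondary bookkeeping issue is matching the explicit constant $4\alpha$ claimed in the statement: this is achieved (as in Claim~\ref{cl:ccfl-LB}) by taking the hidden constant in $p=\Theta(\log m)$ large enough that the geometric factor $2m\cdot 3^p\le 4^p$, and by absorbing the $O(1)$ slack from the $f_e\le 2x_e$ and $p_{ij}\le u_i$ steps into $\alpha$.
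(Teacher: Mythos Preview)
Your proposal is correct and follows essentially the same route as the paper: bound $g$ at the integral optimum by $(3C)^p+m(2C)^p$, invoke $\alpha$-competitiveness and $p=\Theta(\log m)$ to get $g(x,y)^{1/p}\le 4\alpha C$, then read off~\eqref{eq:cmp:4} from the first term of $g$ (via $f_e\le 2x_e$) and~\eqref{eq:cmp:5} from the second term (via $p_{ij}\le u_i$). The only cosmetic difference is that you bound $\max\{a,b\}\le a+b$, whereas the paper bounds each branch of the $\max$ separately to get $\oy_{ij}\le 2y_{ij}$ directly; this costs you an extra constant which, as you note, is absorbed into $\alpha$.
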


\begin{proof}
  The optimal integral solution of the current CMC problem instance has
  cost most $C$, hence the optimal fractional value of our convex
  covering problem is at most $(3C)^p+ m\cdot (2C)^p\le (m+1)(3C)^p$, and
  our $\alpha$-competitive algorithm ensures that $g(x,y)\le
  \alpha^p\cdot (m+1)(3C)^p \leq (4\alpha C)^p$ for $p \geq \log_{4/3}
  (m+1) = \Theta(\log m)$. This, in turn, implies that
  \begin{gather}
    \sum_{e\in T} c_e\cdot \ox_e \le \sum_{e\in T} c_e \bigg(x_e +
      \frac{2}{u_i}\sum_{j\in T^e} p_{ij} \cdot y_{ij} \bigg) \le
    g(x,y)^{1/p} \\
    \frac{C}{u_i}\sum_{j\in T_i} p_{ij} \cdot \oy_{ij} \le \frac{2C}{u_i}
    \sum_{j\in T_i} p_{ij} \cdot y_{ij} \le g(x,y)^{1/p}
  \end{gather}
  which proves the claim.
\end{proof}

Having defined these convenient modified variables, the rounding
proceeds as follows. For each tree $T_i$, the edges $F_i = \{e\in T_i
\mid \ox_e\ge 1\}$ form a rooted subtree, by the monotonicity of the
$\ox$ values. We include the edges in $F_i$ in the solution
deterministically. For the rest of the edges, we perform the following
experiment $\beta := \Theta(d\cdot \log mn)$ times independently, and
take the union of the edges picked.
\begin{quote}
  For each tree $T_i$, independently:
  \begin{enumerate}
  \item[(i)] For each edge $e\in T_i \setminus F_i$, pick it independently
    with probability $\frac{\ox_e}{\ox_{\tau(e)}}$, where we use
    $\tau(e)$ to denote the parent edge of $e$. An edge $e$ whose parent
    edge does not lie in $T_i\setminus F_i$ is chosen with probability
    $\ox_e$.
  \item[(ii)] If the load for tree $T_i$ exceeds $8(d+4\alpha)\cdot u_i$,
    declare failure for all clients assigned to $T_i$.
  \end{enumerate}
\end{quote}
The rounding in step~(i) is from Garg et al.~\cite{GKR98} and hence is
often called the GKR-rounding; it can be implemented online using ideas
from~\cite{AAABN-talg06}. Note that there is some probability that for
some client $j$ we may declare failure for all $\beta$ experiments. In
that case we can choose the path in that tree $T_i$ for client $j$ which
is cheapest subject to $p_{ij} \leq u_i$.

A client $j$ is \emph{assigned in tree $T_i$} if all edges on the path
from $v_{ij}$ to $r$ are picked in $T_i$ during Step~1, and if we don't
declare failure in Step~2; a client is \emph{assigned} if it is assigned
in at least one tree.  We first show that there is a good probability of
any client being assigned in one run of the random experiment above.
\begin{claim}
  \label{cl:cmp-alteration}
  For any client $j$, $\Pr[j\mbox{ assigned in one run}] \ge \frac12$.
\end{claim}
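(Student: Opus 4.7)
Plan: Fix client $j$. For each tree $T_i$, let $A_i$ denote the event that every edge on the path from $v_{ij}$ to $r$ in $T_i$ is picked in Step~(i), and let $G_i$ denote the event that $T_i$'s load satisfies $L_i \le 8(d+4\alpha) u_i$ (so Step~(ii) does not declare failure for $T_i$). Then client $j$ is assigned in one run iff $\bigcup_i(A_i\cap G_i)$ occurs. Since the trees are edge-disjoint and the picking within each tree uses independent Bernoullis, the events $\{A_i\cap G_i\}_i$ are mutually independent, so
\[
\Pr[j\text{ assigned in one run}] \;=\; 1 - \prod_i\bigl(1 - \Pr[A_i\cap G_i]\bigr).
\]
I will establish (a) $\sum_i \Pr[A_i]\ge 1$ and (b) $\Pr[G_i\mid A_i]\ge 7/8$ for every $i$. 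Combined with $1-x\le e^{-x}$, this yields the desired bound $\ge 1-e^{-7/8}>1/2$.

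Step (a) is a direct telescoping. Define $\widehat{x}_e := \min\{\ox_e,1\}$ for every edge $e$, and set $\widehat{x}_r := 1$. Then under the randomized rule, every edge $e\in T_i$ is picked independently with probability $\widehat{x}_e/\widehat{x}_{\tau(e)}$: this uniformly captures both $F_i$-edges (always picked) and the random edges below, whose probability is $\ox_e/\ox_{\tau(e)}$ or $\ox_e$ depending on whether $\tau(e)\in F_i$. Multiplying along $v_{ij}$'s root-path telescopes to $\Pr[A_i]=\widehat{x}_{v_{ij}}=\min\{\oy_{ij},1\}$, and \eqref{eq:cmp:3} then gives $\sum_i\Pr[A_i]=\sum_i\min\{\oy_{ij},1\}\ge 1$.

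The heart of step (b) is bounding $E[L_i\mid A_i]$. By the analogous telescoping applied to both paths, for any leaf $v\in T_i$, $v\ne v_{ij}$, with $u:=\mathrm{LCA}(v,v_{ij})$ in $T_i$, the indicator $Y_v$ that $v$ is connected to $r$ satisfies $\Pr[Y_v\mid A_i]=\widehat{x}_v/\widehat{x}_u$. I would partition the leaves $v\ne v_{ij}$ by the level $l\in\{0,1,\ldots,d-1\}$ of their LCA $u_l$ on the path from $r$ down to $v_{ij}$. For levels $l$ with $\widehat{x}_{u_l}<1$ (so $\widehat{x}_{u_l}=\ox_{u_l}$), apply~\eqref{eq:cmp:1} to the edge above $u_l$ to get $\sum_{v:\mathrm{LCA}=u_l} p_v\ox_v\le u_i\ox_{u_l}$, so the level's contribution is at most $u_i$; summed over at most $d-1$ such levels this is $\le(d-1)u_i$. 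For the levels with $\widehat{x}_{u_l}=1$ (including $l=0$, and any top level with $\ox_{u_l}\ge 1$), the formula collapses to $\Pr[Y_v\mid A_i]=\widehat{x}_v=\Pr[Y_v]$, so since the leaf sets are disjoint, \emph{all} such levels together contribute at most $\sum_v p_v\Pr[Y_v]=E[L_i]\le 4\alpha u_i$ by~\eqref{eq:cmp:5}. Adding $v_{ij}$'s own contribution $p_{ij}\le u_i$, we obtain $E[L_i\mid A_i]\le(d+4\alpha)u_i$, and Markov gives $\Pr[G_i\mid A_i]\ge 7/8$.

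The main obstacle is precisely the level-by-level bookkeeping above. Naively, a top level with $\ox_{u_l}\gg 1$ would contribute on the order of $u_i\ox_{u_l}\gg u_i$ to $E[L_i\mid A_i]$, which would blow up the bound. The key observation is that for such levels the edge above $u_l$ lies in $F_i$ and is picked deterministically, so conditioning on $A_i$ reveals no new information about the subtrees below $u_l$; this forces all such top levels to share a single budget of $E[L_i]\le 4\alpha u_i$ rather than each contribute $u_i\ox_{u_l}$.
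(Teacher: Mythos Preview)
Your proof is correct and follows essentially the same strategy as the paper. Both proofs establish $\Pr[A_i]=\min\{\oy_{ij},1\}$ via telescoping, and both bound $\E[L_i\mid A_i]\le (d+4\alpha)u_i$ by splitting the other leaves according to whether their LCA with $v_{ij}$ lies below an $F_i$-edge (where conditioning is irrelevant, so~\eqref{eq:cmp:5} applies) or below a non-$F_i$ edge on $P_{ij}$ (where~\eqref{eq:cmp:1} gives a per-level budget of $u_i$); your ``levels with $\widehat{x}_{u_l}<1$'' are exactly the paper's edges $e_1,\ldots,e_k\in T_i\setminus F_i$.

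The only real difference is in how the two pieces are combined at the end. The paper argues somewhat informally that $j$ is assigned to \emph{some} tree in Step~(i) with probability $\ge 1-1/\mathrm{e}$ and then survives Step~(ii) in ``that tree'' with probability $\ge 7/8$, yielding $(1-1/\mathrm{e})\cdot 7/8\ge 1/2$. Your computation is cleaner: using independence of $\{A_i\cap G_i\}_i$ across trees directly, you get $\Pr[\text{unassigned}]\le\prod_i(1-\tfrac78\Pr[A_i])\le e^{-7/8}<1/2$. This avoids any ambiguity about which tree ``that tree'' is when several $A_i$ occur.
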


\begin{proof}
  It is easy to check that for any tree $T_i$, $\Pr[j\text{ assigned to
    $T_i$ in Step~1}]=\min\{\oy_{ij},1\}$.  Since the random choices in
  different trees $T_i$ are independent,
  \begin{gather*}
    \Pr[j\mbox{ not assigned to any tree in step 1}] = \prod_{i=1}^m
    \left( 1- \min\{\oy_{ij},1\} \right) \le \mathrm{e}^{-\sum_{i=1}^m \oy_{ij}}
    \le_{\eqref{eq:cmp:3}} \frac1{\mathrm{e}}.
  \end{gather*}
  Next, we claim that conditioned on $j$ being assigned in tree $T_i$ in
  Step~1 (i.e., on all edges on the path $P_{ij}$ from the root of $T_i$
  to $v_{ij}$ being chosen in the solution), the conditional probability
  it is rejected in Step~2 is at most $1/8$, i.e.,
  \begin{equation}\label{eq:cmp-alteration}
    \Pr[j\mbox{ rejected in step 2 } | \, j \mbox{ assigned to $T_i$ in step 1}]\quad \le \quad \frac18.
  \end{equation}
  This would imply that $j$ is assigned in at least one tree with
  probability $(1-1/\mathrm{e})$, and survives rejection in that tree
  with probability $7/8$, giving $(1-\frac{1}{\mathrm{e}})\frac78 \ge
  \frac12$.

  To prove~\eqref{eq:cmp-alteration}, let edges $e_1,\cdots, e_k$ be the
  edges of $T_i\setminus F_i$ on path $P_{ij}$ at increasing distance
  from the root; hence $e_k = (\tau(v_{ij}), v_{ij})$. For
  $h=1,\cdots,k$, define subtree $S_h:=T^{e_h}\setminus T^{e_{h+1}}$
  which consists of all nodes whose path to the root first intersects
  with $P_{ij}$ at the edge $e_h$. By the properties of the GKR
  rounding, we have in Step~1:
  \begin{gather*}
    E[\mbox{load from }S_h \mid  j\mbox{ assigned to $T_i$ in step~1} ]  =
    \sum_{\ell\in S_h} p_{i\ell}\cdot
    \frac{\oy_{i\ell}}{\ox_{e_h}}\quad \le \quad \frac1{\ox_{e_h}} \cdot
    \sum_{\ell\in T^{e_h}} p_{i\ell}\cdot \oy_{i\ell} \quad
    \le_{\eqref{eq:cmp:1}} \quad u_i.
  \end{gather*}
  Summing the expression above for all $h=1,\cdots,k$,
  \begin{gather*}
    E[\mbox{load from }\cup_{h=1}^k S_h \mid j\mbox{ assigned to
    $T_i$ in step~1}] \quad \le \quad k\cdot u_i \quad \le \quad d\cdot u_i,
  \end{gather*}
  since the tree has depth at most $d$.

  This bounds the expected load of those clients whose paths to the root
  share an edge with $P_{ij}$ in tree $T_i$. For any other client
  $\ell$, the conditioning does not matter, and hence
  \begin{gather*}
    \Pr[\ell\mbox{ assigned to }T_i\mid j\mbox{ assigned to
    $T_i$ in step~1} ]=\Pr[\ell\mbox{ assigned to }T_i] = \oy_{i\ell}.
  \end{gather*}
  So using~\eqref{eq:cmp:5},
  \begin{gather*}
    E[\mbox{load from }[n]\setminus j\setminus \cup_{h=1}^k S_h \, | \,
    j\mbox{ assigned to $T_i$ in step~1}] \quad \le \quad \sum_{\ell\in T_i}
    p_{i\ell}\cdot \oy_{i\ell} \quad \le \quad 4\alpha\cdot u_i
  \end{gather*}
  Thus the total expected load from $[n]\setminus j$ conditioned on $j$
  being assigned to $T_i$ in Step~1 is at most $(d+4\alpha)\cdot
  u_i$. Markov's inequality now implies~\eqref{eq:cmp-alteration}, and
  hence the claim.
\end{proof}

By Step~2 of the algorithm, we immediately have:
\begin{claim}
  \label{cl:cmp-load}
  The load assigned to tree $T_i$ is at most $8\beta(d+4\alpha)\cdot
  u_i$, for each $i\in[m]$.
\end{claim}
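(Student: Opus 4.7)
The statement is essentially a direct consequence of the definition of Step~2, so the plan is very short. The key observation is that Step~2 acts as a hard per-run cap on the contribution of each experiment to the load of any individual tree $T_i$: if in a single run the total load assigned to $T_i$ by the GKR rounding in Step~1 exceeds $8(d+4\alpha)\cdot u_i$, then Step~2 declares failure for every client assigned to $T_i$ in that run, which erases their contribution to $T_i$'s load from that run. Thus in each single run, by construction, the realized load on $T_i$ attributable to that run is at most $8(d+4\alpha)\cdot u_i$.

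The second ingredient is simply that the final assignment is the union over the $\beta = \Theta(d\log mn)$ independent repetitions of the experiment. Loads add across these repetitions, so the total load on $T_i$ is bounded by the sum of the per-run bounds, giving
\[
\text{load}(T_i) \;\le\; \beta \cdot 8(d+4\alpha)\cdot u_i \;=\; 8\beta(d+4\alpha)\cdot u_i,
\]
which is exactly the statement of the claim. There is no real obstacle here; the only thing to keep straight is that the bound only accounts for clients assigned via Steps~1--2 (clients for whom all $\beta$ experiments fail are handled by the fallback path and are argued about separately in the cost analysis via the low failure probability from Claim~\ref{cl:cmp-alteration}).
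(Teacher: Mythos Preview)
Your proposal is correct and matches the paper's own reasoning: the paper simply states that the claim follows immediately from Step~2, and you have spelled out exactly why---each of the $\beta$ runs contributes at most $8(d+4\alpha)\cdot u_i$ to the load of $T_i$ because Step~2 discards all assignments to $T_i$ in that run whenever this threshold is exceeded, and summing over the $\beta$ runs gives the bound. Your remark that the fallback assignments (for clients failing in all $\beta$ runs) are handled separately is also consistent with how the paper treats them in the proof of Theorem~\ref{thm:cmp}.
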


\begin{claim}
  \label{cl:cmp-open}
  The expected opening cost is at most $4\alpha\beta \cdot C$.
\end{claim}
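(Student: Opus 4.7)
The plan is to split the expected opening cost into the deterministic contribution from the edges in $\bigcup_{i=1}^{m} F_i$ and the random contribution from the $\beta$ independent runs of the rounding on the remaining edges. Since $\ox_e \geq 1$ for every $e \in F_i$ by definition of $F_i$, the deterministic part is bounded by $\sum_{i} \sum_{e \in F_i} c_e \leq \sum_{i} \sum_{e \in F_i} c_e \cdot \ox_e$.

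For the random part, I would invoke the standard telescoping property of GKR rounding. Fix a tree $T_i$ and an edge $e \in T_i \setminus F_i$, and let $e_0, e_1, \ldots, e_k = e$ be the edges on the path from the topmost $T_i \setminus F_i$ ancestor of $e$ down to $e$, so that either $\tau(e_0) \in F_i$ or $e_0$ is incident to the root of $T_i$. By the rounding rule, $e_0$ is picked independently with probability $\ox_{e_0}$, and each $e_h$ with $h \geq 1$ is picked independently with probability $\ox_{e_h}/\ox_{e_{h-1}}$. Multiplying these independent probabilities yields a telescoping product equal to $\ox_e$, so the marginal probability that $e$ is opened in any single run is exactly $\ox_e$.

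By linearity of expectation, the expected cost of non-$F_i$ edges opened in one run equals $\sum_{i} \sum_{e \in T_i \setminus F_i} c_e \cdot \ox_e$, and a union bound over the $\beta$ independent runs multiplies this by at most $\beta$. Combining with the deterministic part and invoking inequality~\eqref{eq:cmp:4}, I obtain
\[
\E[\mbox{opening cost}] \;\leq\; \sum_{i} \sum_{e \in F_i} c_e \cdot \ox_e \;+\; \beta \sum_{i} \sum_{e \in T_i \setminus F_i} c_e \cdot \ox_e \;\leq\; \beta \sum_{e \in T} c_e \cdot \ox_e \;\leq\; 4 \alpha \beta \cdot C.
\]
The only subtlety is verifying the telescoping identity, which reflects the defining marginal property of GKR-style rounding on trees with monotone $\ox$ values; once this is in hand, the rest is routine bookkeeping and there is no essential obstacle.
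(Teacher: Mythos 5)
Your proof is correct and takes essentially the same approach as the paper's (which simply states that the per-trial expected cost is at most $\sum_e c_e\cdot \ox_e$ and multiplies by $\beta$); you are merely spelling out the details the paper leaves implicit, namely that the deterministic $F_i$ edges are covered since $\ox_e\ge 1$ there, and that the GKR-rounding telescoping yields marginal inclusion probability $\ox_e$ for the remaining edges.
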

\begin{proof}
  The expected cost of edges chosen in each of the $\beta$ independent
  trials is at most $\sum_e c_e\cdot \ox_e \le 4\alpha C$
  using~\eqref{eq:cmp:4}. Summing the cost over all trials gives the
  claim.
\end{proof}

\begin{claim}
  For any client $j$, the probability that $j$ is unassigned is at most
  $\frac{1}{mn^2}$.
\end{claim}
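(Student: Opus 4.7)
The plan is to combine Claim \ref{cl:cmp-alteration} with the independence of the $\beta = \Theta(d\log mn)$ repetitions of the random experiment performed in the rounding step. Claim \ref{cl:cmp-alteration} shows that in any single run of the random experiment (GKR-rounding in Step~1 followed by the Step~2 failure check), client $j$ is successfully assigned with probability at least $1/2$. Since the different runs use fresh independent random bits across all trees $T_i$, the ``not assigned'' events across the $\beta$ trials are mutually independent, so the probability $j$ is unassigned after all $\beta$ trials is at most $(1/2)^\beta$.

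Given $\beta = \Theta(d\log mn)$, we can choose the hidden constant large enough so that $\beta \ge \log_2(mn^2)$, and then $(1/2)^\beta \le 1/(mn^2)$. This immediately yields the claim. The only minor check is that independence holds at the level of the ``assigned'' event: each of the $\beta$ experiments independently samples the edges of every $T_i \setminus F_i$ and independently checks the load condition in Step~2, so whether $j$ succeeds in one experiment has no bearing on whether it succeeds in another.

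There is essentially no obstacle here: the work has all been done in Claim \ref{cl:cmp-alteration} (the per-trial success bound) and in the definition of $\beta$. The proof is a one-line application of independence plus the union/product bound, and merely requires verifying that $\beta$ is large enough to drive the geometric decay below $1/(mn^2)$.
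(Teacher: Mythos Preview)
Your proposal is correct and follows essentially the same approach as the paper: invoke Claim~\ref{cl:cmp-alteration} for the per-trial bound of $1/2$, then use independence across the $\beta=\Theta(d\log mn)$ trials to get $(1/2)^\beta\le 1/(mn^2)$.
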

\begin{proof}
  By Claim~\ref{cl:cmp-alteration}, the probability of $j$ being
  unassigned in one trial is at most $\frac12$. Since there are
  $\beta=\Theta(\log mn)$ independent trials, the claim follows.
\end{proof}

\begin{proofof}{Theorem~\ref{thm:cmp}}
  By Claims~\ref{cl:cmp-load} and~\ref{cl:cmp-open}, we know that the
  cost and load of the solution is at most the claimed bounds. Moreover,
  we know that the probability of the client not being assigned to any
  of the trees is at most $\frac{1}{mn^2}$. Since this will increase the
 load of some tree $i$ by at most $u_i$ and the cost by at most $OPT$,
  and happens with probability at most $\frac{1}{mn^2}$, this increases
  the expected cost and congestion by a negligible factor.
\end{proofof}


\subsection{Set Cover with Set Requests}
\label{sec:sc-set-requests}

We consider here the {\em online set cover with set requests} (SCSR)
problem first consideed by Bhawalkar et al.~\cite{BGP14}, which is
defined as follows. We are given a universe $U$ of $n$ \emph{resources},
and a collection of $m$ \emph{facilities}, where each facility $i\in[m]$
is specified by (i) a subset $S_i\sse U$ of resources (ii) opening cost
$c_i$ and (iii) capacity $u_i$. The resources and facilities are given
up-front. Now, a sequence of $k$ {\em requests} arrive over time. Each
request $j\in[k]$ requires some subset $R_j\sse U$ of resources. The
request has to be served by assigning it to some collection $F_j \sse
[m]$ of facilities whose sets collectively cover $R_j$, i.e., $R_j\sse
\cup_{i\in F_j} S_i$. Note that these facilities have to be open, and we
incur the cost of these facilities. Moreover, if a facility $i$ is used
to serve client $j$, this contributes to the load of facility $i$, and
this total load must be at most the capacity $u_i$.

As in previous sections, we give an algorithm to compute a solution
online which violates the capacity constraint by some factor. Our main result for this
problem is the following:
\begin{thm}\label{thm:scsr}
  There is a randomized online algorithm that given any instance of the
  \emph{set cover with set requests} problem and a bound $C$ on its
  optimal cost, computes a solution of cost $O(\log^2m \cdot \log
  mnk)\cdot C$ with congestion $O(\log^2m \log mnk)$.
\end{thm}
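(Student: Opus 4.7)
The plan is to follow essentially the same template as the CCFL proof in Section~\ref{sec:CCFL}: formulate a convex covering relaxation whose objective is a sum of $p$-th powers with $p=\Theta(\log m)$, solve it online using Theorem~\ref{thm:p-norm}, and then round online with a standard ``guess-and-double'' on the cost bound $C$. Let $x_i$ be the fractional opening of facility $i$ and $y_{ij}$ the extent to which request $j$ is served by facility $i$. For each request $j$ and each resource $r\in R_j$ we impose the covering constraint $\sum_{i\,:\,r\in S_i} y_{ij}\ge 1$, together with $y_{ij}\le x_i$. We then minimize
\[
g(x,y)\,=\,\bigg(\sum_{i=1}^m c_i x_i\bigg)^{\!p}\,+\,\sum_{i=1}^m\bigg(\frac{C}{u_i}\sum_{j=1}^k y_{ij}\bigg)^{\!p},
\]
with $p=\Theta(\log m)$. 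As in CCFL, $g^{1/p}$ is (up to a constant) the sum of total opening cost and the maximum relative load, so the optimal fractional value is $O(m\,C^p)=O((4C)^p)$. Applying Theorem~\ref{thm:p-norm} (observing that its row-sparsity parameter $d\le m$) yields an $\alpha=O(\log^2 m)$-competitive online algorithm for the fractional program; the number of covering constraints here is at most $nk\cdot m$, so no separation-oracle trick is required.

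The next step is the online rounding, which parallels Section~\ref{sec:ccfl-round}. In each phase of the doubling, whenever a new request arrives I first update $(x,y)$ fractionally. I then define
\[
\bar y_{ij}=\min\{y_{ij},x_i\}\qquad\text{and}\qquad \bar x_i=\max\Big\{x_i,\,\tfrac{1}{u_i}\textstyle\sum_j \bar y_{ij}\Big\},
\]
so that $\bar y_{ij}\le \bar x_i$, $\sum_j \bar y_{ij}\le u_i\bar x_i$, and (using $g(x,y)\le(4\alpha C)^p$ exactly as in Claim~\ref{cl:ccfl-LB}) both $\sum_i c_i\bar x_i\le 4\alpha C$ and $\sum_j \bar y_{ij}\le 4\alpha\,u_i$ for each $i$; if any of these fail the phase terminates and $C$ is doubled, which by the argument of Claim~\ref{cl:ccfl-LB} costs only a constant factor overall. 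The covering property I carry forward is that, for every resource $r\in R_j$, $\sum_{i\,:\,r\in S_i}\bar y_{ij}\ge \tfrac12$, which follows from fractional feasibility together with $\bar y_{ij}\ge \tfrac12 y_{ij}$ on the tight direction (obtained via the same ``$\min$'' construction as in CCFL).

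For the actual rounding I set $\gamma=4\log(mnk)$ and, with $F_f=\{i:\bar x_i\ge 1/\gamma\}$ the fixed facilities, independently sample $X_i=1$ with probability $\min\{\gamma \bar x_i,1\}$ and, conditionally, $Z_{ij}=1$ with probability $\min\{\gamma \bar y_{ij},1\}$ if $i\in F_f$ and with probability $\bar y_{ij}/\bar x_i$ otherwise. For every request $j$ and every resource $r\in R_j$ I then try to assign $r$ to some facility $i$ with $r\in S_i$ and $X_iZ_{ij}=1$. A resource $r$ of request $j$ is unassigned with probability at most $\exp(-\gamma/2)\le (mnk)^{-2}$ by the usual product-form calculation combined with $\sum_{i\,:\,r\in S_i}\bar y_{ij}\ge \tfrac12$. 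A union bound over the at most $nk\cdot m$ (resource, request) pairs handled in the phase shows that with high probability every resource of every request is covered; whenever a resource $r\in R_j$ is not covered I pay at most $C$ to open a backup facility covering $r$, contributing only a negligible expected term. Chernoff bounds applied separately to $\sum_i c_i X_i$ and to $\sum_j Z_{ij}$ for each $i$ (using the three inequalities above) then give opening cost $O(\alpha\gamma)\cdot C$ and load on each facility at most $O(\alpha\gamma)\cdot u_i$ with probability at least $1-1/m^2$, exactly as in Claims of Section~\ref{sec:ccfl-round}.

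The main obstacle is the covering requirement: unlike CCFL, each request demands coverage of \emph{all} of $R_j$ simultaneously, so the $\log(mnk)$ blow-up in the rounding probability is needed to union-bound over all $(j,r)$ pairs across the whole phase. Once that is in place, the geometric-sum argument for the doubling (using the lower bound on $\mathrm{OPT}$ implied by a phase terminating) yields total expected cost $O(\alpha\gamma)\cdot\mathrm{OPT}=O(\log^2 m\cdot \log mnk)\cdot\mathrm{OPT}$ and the same bound on congestion, proving Theorem~\ref{thm:scsr}.
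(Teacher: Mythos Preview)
Your proposal has a genuine gap in the convex relaxation. You write the covering constraints as $\sum_{i\,:\,r\in S_i} y_{ij}\ge 1$ ``together with $y_{ij}\le x_i$''. But the online framework of Section~\ref{sec:general} handles only constraints of the form $a^\trans z\ge 1$ with \emph{non-negative} $a$; the inequality $x_i-y_{ij}\ge 0$ is not of this type and cannot be fed to the algorithm. Since none of your covering constraints involve $x$, the primal update rule~\eqref{eq:1} never increases any $x_i$, and you end up with $x\equiv 0$. Consequently $\bar y_{ij}=\min\{y_{ij},x_i\}=0$, and the key covering property $\sum_{i:r\in S_i}\bar y_{ij}\ge \tfrac12$ that you need for the rounding is simply false. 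The sentence ``$\bar y_{ij}\ge \tfrac12 y_{ij}$ on the tight direction (obtained via the same $\min$ construction as in CCFL)'' is not justified and does not hold in your formulation.

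What the paper does (and what you are missing) is to encode the coupling $y_{ij}\le x_i$ \emph{inside} the covering constraints, exactly as in the CCFL relaxation: for every request $j$, every resource $\ell\in R_j$, and every subset $T\subseteq F(\ell):=\{i:\ell\in S_i\}$, it imposes
\[
\sum_{i\in T} x_i \;+\; \sum_{i\in F(\ell)\setminus T} y_{ij}\;\ge\;1.
\]
Taking $T=\{i\in F(\ell):x_i<y_{ij}\}$ then gives $\sum_{i\in F(\ell)}\min\{x_i,y_{ij}\}\ge \tfrac12$ (after the usual factor-$\tfrac12$ relaxation), which is precisely~\eqref{eq:scsr:2}. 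These are now pure covering constraints with non-negative coefficients and row sparsity at most $m$, so the framework applies; the exponential number of constraints is handled by the same separation-oracle trick as in Theorem~\ref{thm:f-CCFL}, contrary to your claim that ``no separation-oracle trick is required''. Once you use this relaxation (and the paper's objective, which has $x_i+\frac{1}{u_i}\sum_j y_{ij}$ inside both $p$-th powers so that $\bar x_i$ can be bounded), the rest of your rounding argument goes through as you outlined.
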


The ideas --- for both the convex relaxation and the rounding --- are
very similar to that for CCFL; hence we only sketch the main ideas
here. For the fractional relaxation, there is a variable $x_i$ for each
facility $i\in[m]$ denoting if the facility is opened. For each request
$j\in[k]$ and facility $i$ there is a variable $y_{ij}$ that denotes if
request $j$ is connected to facility $i$. We set $p=\Theta(\log m)$ and
the objective is:
\begin{gather*}
  g(x,y) \,\,=\,\, \left(\sum_{i}^{}c_i \left(x_i + \frac{\sum_j
        y_{ij}}{u_i}\right)\right)^p \,+\, C^p \cdot \sum_{i}^{} \left(
    x_i + \frac{1}{u_i} \sum_{j}^{ }y_{ij}\right)^p.
\end{gather*}
We define the following convex covering program, where we use
$F(\ell):=\{i\in[m] \mid \ell\in S_i\}$ for each resource $\ell\in U$.
\begin{alignat*}{2}
  \min & \quad g(x,y)\\
  s.t. & \quad \sum_{i\in T} x_i + \sum_{i\in F(\ell)\setminus T}y_{ij}
  \geq 1, \qquad \forall T\sse F(\ell),\,\, \forall \ell\in R_j,\,\,
  \forall j\in [k],\\
  &\quad y, x \geq 0.
\end{alignat*}
We can solve this convex program in an online fashion, much as in
Theorem~\ref{thm:p-norm}. Now for the rounding: we maintain the
following modified variables:
\begin{gather*}
  \overline{y}_{ij}=\min\{y_{ij},x_i\}, \quad \forall i,j. \\
\ox_i = \max\left\{ x_i, \frac{\sum_j y_{ij}}{u_i} \right\},\quad \forall i.
\end{gather*}
Note that $(\ox,\oy)$ clearly satisfy:
\begin{alignat}{2}
  \sum_j \oy_{ij} &\le u_i\cdot \ox_i & \qquad\qquad & \forall i.
  \label{eq:scsr:1} \\
  \sum_{i\in F(\ell)} \oy_{ij} &\ge \frac12 && \forall \ell\in R_j,\,\,
  \forall j\in [k].
  \label{eq:scsr:2} \\
  \oy_{ij} &\le \ox_i && \forall i, j.
  \label{eq:scsr:3}
\end{alignat}

\begin{claim}\label{cl:scsr-LB}
  Assuming that there is an integral solution to the SCSR instance
  having cost at most $C$, the following inequalities hold for
  $\alpha=O(\log^2m)$:
  \begin{gather}
    \sum_i c_i\cdot \ox_i \le 4\alpha \cdot C.
    \label{eq:scsr:4}\\
    \ox_{i} \le 4\alpha,\quad \forall i.
    \label{eq:scsr:6}
  \end{gather}
\end{claim}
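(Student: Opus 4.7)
The plan is to mirror the proof of Claim~\ref{cl:ccfl-LB} from the CCFL section, since the convex objective $g(x,y)$ for SCSR is structurally analogous. The two desired bounds will both be obtained by extracting $p$-th roots of the overall objective value, which we first upper bound by combining a feasibility argument with the $\alpha$-competitiveness of the online fractional algorithm.

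\textbf{Step 1: Upper-bound the optimal fractional objective.} Take the hypothesized integral SCSR solution of cost at most $C$. Interpreting its facility-opening and assignment vectors as a fractional feasible point $(x^*,y^*)$, the first term of $g$ is bounded by $(2C)^p$ (opening cost plus total per-facility load, each at most $C$ after scaling by capacities), and for the second term each summand satisfies $x^*_i + \frac{1}{u_i}\sum_j y^*_{ij} \le 2$, so the second term is at most $m \cdot (2C)^p$. Hence the optimal fractional value is at most $(m+1)(3C)^p$.

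\textbf{Step 2: Apply competitiveness.} By Theorem~\ref{thm:p-norm} (as used for CCFL and CMC), the online fractional algorithm achieves $g(x,y) \le \alpha^p \cdot (m+1)(3C)^p$, which is at most $(4\alpha C)^p$ for $p = \Theta(\log m)$ chosen so that $m+1 \le (4/3)^p$.

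\textbf{Step 3: Derive the two claimed inequalities.} By definition of $\ox_i$, we have $\ox_i \le x_i + \frac{1}{u_i}\sum_j y_{ij}$. Therefore
\[
\sum_i c_i \ox_i \,\le\, \sum_i c_i\left(x_i + \frac{\sum_j y_{ij}}{u_i}\right) \,\le\, g(x,y)^{1/p} \,\le\, 4\alpha C,
\]
giving~\eqref{eq:scsr:4}. For~\eqref{eq:scsr:6}, fix any facility $i$ and isolate its contribution to the second term of $g$:
\[
C \cdot \ox_i \,\le\, C\left(x_i + \frac{1}{u_i}\sum_j y_{ij}\right) \,\le\, g(x,y)^{1/p} \,\le\, 4\alpha C,
\]
so $\ox_i \le 4\alpha$.

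No step here is really an obstacle; the only subtlety is confirming the bound in Step~1, i.e., that a feasible integral SCSR solution of cost $\le C$ actually yields both terms of $g$ at most a constant times $C^p$ and $m C^p$ respectively. The second term of $g$ is specifically engineered (with the prefactor $C^p$ and the coordinate-wise structure) to enforce the per-facility bound~\eqref{eq:scsr:6}, so this step justifies why $C$ appears inside $g$ rather than outside it. Once the objective bound $(4\alpha C)^p$ is in hand, the two inequalities follow by monotonicity of the $p$-norm term and by reading off single summands of the second term.
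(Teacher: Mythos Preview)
Your proposal is correct and follows exactly the approach the paper intends: the paper omits the proof entirely, merely stating that it is similar to Claim~\ref{cl:ccfl-LB}, and your three steps faithfully reproduce that argument adapted to the SCSR objective. The only quibble is that in Step~1 the integral solution actually gives the sharper bound $(m+1)(2C)^p$ (since $x_i^*+\tfrac{1}{u_i}\sum_j y_{ij}^*\le 2$ for each open facility), but your $(m+1)(3C)^p$ is of course sufficient.
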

The proof is similar to Claim~\ref{cl:ccfl-LB} and omitted.

The final randomized rounding is the same as for CCFL. For each facility
$i$, set $X_i$ to one with probability $\min\{4\log (mnk) \cdot \ox_i,
1\}$. Let $\oF$ denote the set of {\em fixed} facilities,
i.e. $\Pr[X_i=1]=1$. So $\oF=\{i : \ox_{i} > \frac{1}{4\log
  (mnk)}\}$. For each request $i$ and facility $j$, set $Z_{ij}$ to one
with probability:
\begin{gather*}
  Pr[ Z_{ij}=1] \,\,=\,\, \left\{
    \begin{array}{ll}
      \min\{4\log (mnk)\cdot \oy_{ij}, 1\} & \mbox{ if }i\in \oF, \\
      \frac{\oy_{ij}}{\ox_i} & \mbox{ otherwise.}
    \end{array}
  \right.
\end{gather*}

All the above random variables are independent. Each request $i$ gets
connected to all facilities $j$ with $X_i\cdot Z_{ij}=1$.  The analysis
of the rounding is also identical to that of CCFL, and is omitted. This
completes the proof of Theorem~\ref{thm:scsr}.


\section{Profit maximization with non-separable production costs}
\label{sec:profit-max}

In this section we consider a profit maximization problem (called PMPC) for a single
seller with production costs for items. There are $m$ items that the
seller can produce and sell. The production levels are given by a vector
$\mu \in \RR_+^m$; the total cost incurred by the seller to produce $\mu_j$
units of every item $j\in [m]$ is $g(\mu)$ for some \emph{production cost
  function} $g:\RR_+^m\rightarrow \RR_+$. In this work we allow for
functions $g$ which are convex and monotone in a certain
sense\footnote{The formal conditions on $g$ appear in
  Assumption~\ref{asm:g-mono}.}. There are $n$ buyers who arrive
online. Each buyer $i\in[n]$ is interested in certain subsets of items
(a.k.a.\ \emph{bundles}) which belong to some set family $\cS_i \sse
2^{[m]}$. The extent of interest of buyer $i$ for subset $S \in \cS_i$
is given by $v_i(S)$, where $v_i:\cS_i\rightarrow \RR_+$ is her
\emph{valuation function}.

If buyer $i$ is allocated a subset $T\in \cS_i$ of items, he pays the
seller his valuation $v_i(T)$. Consider the optimization problem for the
seller: he must produce some items and allocate bundles to buyers so as
to maximize the profit $\sum_{i=1}^n v_i(T_i) - g(\mu)$, where $T_i\in
\cS_i$ denotes the bundle allocated to buyer $i$ and $\mu= \sum_{i=1}^n
\chi_{T_i}\in \RR^m$ is the total quantity of all items produced. (Here
$\chi_S \in \{0,1\}^m$ is the characteristic function of the set $S$.)
Observe that in this paper we consider a non-strategic setting, where
the valuation of each buyer is known to the seller; this differs from an
auction setting, where the seller has to allocate items to buyers
without knowledge of the true valuation, and the buyers may have an
incentive to mis-report their true valuations.

This class of maximization problems with production costs was introduced
by Blum et al.~\cite{BGMS11} and more recently studied by Huang and Kim
\cite{HK15}. Both these works dealt with the online \emph{auction}
setting, but in both works they considered a special case where the
production costs were separable over items; i.e, where $g(\mu) = \sum_j
g_j(\mu_j)$ for some convex functions $g_j(\cdot)$. In contrast, we can
handle general production costs $g(\cdot)$, but we do not consider the
auction setting. Our main result is for the \emph{fractional version} of
the problem where the allocation to each buyer $i$ is allowed to be any
point in the convex hull of the $\cS_i$. In particular, we want to solve
following convex program in an online fashion:
\begin{alignat}{2}
  \mbox{maximize } \sum_{i=1}^{n} \sum_{T\in \cS_i} &v_i(T)\cdot y_{iT} & \,\,-\,\, g(\mu)   &
  \tag{$D$} \label{lp:prod-D}\\
  \sum_{T\in \cS_i} y_{iT} & \le 1 & & \forall \,
  i \in [n],   \label{lp:prod-D1} \\
\sum_{i=1}^n \sum_{T\in \cS_i} \mathbf{1}_{j\in T}\cdot y_{iT} - \mu_j & \le 0& \quad & \forall j\in [m],  \label{lp:prod-D2}  \\
  y,\mu &\geq 0. & \qquad &
\end{alignat}

\def\ou{u}

Note that this problem looks like the dual of the covering problems we
have been studying in previous sections, and hence is suggestively
called~\eqref{lp:prod-D}. Consider the following ``dual'' program that
gives an upper bound on the value of~(\ref{lp:prod-D}).
\begin{alignat}{2}
  \mbox{minimize } \sum_{i=1}^{n} \ou_i \,\,+\,\, & g^\star(x) &  &
  \tag{$P$} \label{lp:prod-P}\\
  \ou_i + \sum_{j\in T} x_j & \ge v_i(T) & & \forall \,
  i \in [n], \,\,\forall T\in \cS_i,  \label{lp:prod-P1} \\
  \ou,x &\geq 0. & \qquad &
\end{alignat}
Again, to be consistent with our general framework, we refer to this
minimization (covering) problem as the ``primal''~\eqref{lp:prod-P}.


Notice that this primal-dual pair falls into the general framework of
Section~\ref{sec:general} if we set
\begin{gather*}
  f(\ou,x)\quad :=\quad \sum_{i=1}^{n} \ou_i \,+\, g^\star(x).
\end{gather*}
Indeed, if we were to construct the Fenchel dual of~\eqref{lp:prod-P} as in
Section~\ref{sec:general}, we would again arrive at~\eqref{lp:prod-D}
after some simplification (using the fact that $g^{**}=g$ for any convex
function $g$ with subgradients\footnote{A \emph{subgradient} of
  $g:\RR^m\rightarrow \RR$ at $u$ is a vector $V_u\in \RR^m$ such that
  $g(w)\ge g(u)+V_u^T (w-u)$ for all $w\in \RR^m$.}~\cite{Rock}). In
order to apply now our framework, we assume that $f$ is continuous,
differentiable and satisfies $\nabla f(z) \ge \nabla f(z')$ for all
$z\ge z'$. This translates to the following assumptions on the
production function $g$:
\begin{assumption}
  \label{asm:g-mono}
  Function $g^\star:\RR^m_+\rightarrow \RR_+$ (recall $g^\star(x)=\sup_\mu
  \{x^T\mu-g(\mu)\}$) is monotone, convex, continuous, differentiable
  and has $\nabla g^\star(x) \ge \nabla g^\star(x')$ for all $x\ge x'$.
\end{assumption}

Since we require irrevocable allocations, we cannot use the primal-dual
algorithm from Section~\ref{sec:alg-description}, since that algorithm could
decrease the dual variables $y_{iT}$. Instead, we use the algorithm from
Section~\ref{sec:monotone} which ensures both primal and dual
variables are monotonically raised. We can now use the competitive ratio
from~(\ref{eq:mono-cr})---when $g^\star(0)=0$ this ratio is at least
\begin{equation}
\max_{c > 0}\left\{ \min_{z} \left(\frac{\min_{\ell=1}^{n}\left\{\frac{\nabla_\ell g^\star(z)}{\nabla_\ell g^\star(cz)}\right\}}{2\ln(1+\rho d)}\right) - \max_{z}
\left(\frac{ z^\trans \nabla g^\star(z) - g^\star(z)
  }{g^\star(c z)}\right) \right\} \label{eq:prod-comp-ratio}
\end{equation}
In this expression, recall that $d$ is the row-sparsity of the covering
constraints in~\eqref{lp:prod-P}, i.e. $d=1+\max_{T\in \cup \cS_i} |T|$. And the term $\rho$ is the ratio
between the maximum and minimum (non-zero) valuations any player $i$ has
for any set in $\cS_i$. In other words,
\begin{gather}
  \rho \quad \le \quad R\,:=\, \,\, \frac{ \max \left\{ v_i(T) : T\in
      \cS_i,\, i\in [n]\right\}}{\min \left\{v_i(T) : T\in \cS_i,
      v_i(T)>0,\, i\in [n]\right\}}. \label{eq:def-R}
\end{gather}

\subsection{An Efficient Algorithm for~\eqref{lp:prod-D}}

To solve the primal-dual convex programs using our general framework in
polynomial time, we need access to the following oracle:
\begin{quote}
  \emph{Oracle:} Given vectors $(\ou, x)$, and an index $i$, find a
  set $T\in \cS_i$ such that
  \begin{gather}
    \textstyle \ou_i + \left(\sum_{j\in T} x_j -
      v_i(T)\right) < 0, \label{eq:sep-oracle-prod}
  \end{gather}
 or else report that no such set exists.
\end{quote}
Given such an oracle, we maintain a $(\ou,x)$ such that $(2\ou, 2x)$ is
feasible for~(\ref{lp:prod-P}) as follows. When a new buyer $i$ arrives,
we use the oracle on $(2\ou, 2x)$. While it returns a set $T \in \cS_i$,
we update $(\ou, x)$ to satisfy the constraint~\eqref{lp:prod-P1}. Else
we know that $(2\ou, 2x)$ is a feasible solution for~(\ref{lp:prod-P}).
This scaling by a factor of~$2$ allows us to bound the number of
iterations as follows: when buyer $i$ arrives, define $Q_i =
\min\{\ou_i, V^i_{max}\}+ \sum_{j=1}^m \min\{x_j, V^i_{max}\}$ where
$V^i_{max}=\max \left\{ v_i(T) \mid T\in \cS_i\right\}$. Note that
$Q_i\le (m+1)V^i_{max}$ 
and $Q_i$ increases by at least $V^i_{min}/2$ in each
iteration where $V^i_{min}=\min \left\{v_i(T) \mid T\in \cS_i,
  v_i(T)>0\right\}$. So the number of iterations is at most $O(mR)$
where $R$ is defined in~(\ref{eq:def-R}). This gives us a
polynomial-time online algorithm if $R$ is polynomially bounded.

What properties do we need from the collection $\cS_i$ and valuation
functions $v_i$ such that can we implement the oracle efficiently? Here
are some cases when this is possible.
\begin{itemize}
\item \emph{Small $\cS_i$.} If each $|\cS_i|$ is polynomially bounded then we can
  solve~\eqref{eq:sep-oracle-prod} just by enumeration. An example is when each buyer is ``single-minded'' i.e. she wants exactly one bundle.

\item {\em Supermodular valuations.} Here, buyer $i$ has $\cS_i=2^{[m]}$
  and $v_i:2^{[m]}\rightarrow \RR_+$ is supermodular,
  i.e. $v_i(T_1)+v_i(T_2) \le v_i(T_1\cup T_2)+v_i(T_1\cap T_2)$ for all
  $T_1,T_2\sse [m]$. In this case, we can
  solve~\eqref{eq:sep-oracle-prod} using polynomial-time algorithms for
  submodular minimization~\cite{Schrijver-book}, since the expression inside the
  minimum is a linear function minus a supermodular function.

\item {\em Matroid constrained valuations.}  In this setting, each buyer
  $i$ has some value $v_{ij}$ for each item $j\in [m]$ and the feasible
  bundles $\cS_i$ are independent sets of some matroid.\footnote{An
    alternative description of such valuation functions is to have
    $\cS'_i=2^{[m]}$ and $v'_i(T) = $ maximum weight independent subset
    of $T$ (where each item $j$ has weight $v_{ij}$). Viewed this way,
    the buyer's valuation is a weighted matroid rank function which is a
    special submodular function.}  Here we can
  solve~\eqref{eq:sep-oracle-prod} by maximizing a linear function over
  a matroid. This is because the minimization
  \begin{gather*}
    \min_{T\in \cS_i} \, \bigg(\sum_{j\in T} x_j - v_i(T)\bigg) \quad =
    \quad \min_{T\in \cS_i} \, \sum_{j\in T} (x_j - v_{ij}) \quad =
    \quad -\max_{T\in \cS_i} (v_{ij}-x_j),
  \end{gather*}
  can be done in polynomial time~\cite{Schrijver-book}.
\end{itemize}


\subsection{Online Rounding}

We now have a deterministic online algorithm for~\eqref{lp:prod-D} with competitive ratio as given in~\eqref{eq:prod-comp-ratio}. Moreover, this algorithm runs in polynomial time for many special cases. Here we show how the fractional online solution can be rounded to give integral allocations. We make the following additional assumption on the production costs.

\begin{assumption}   \label{asm:g-rnd}
There is a constant $\beta>1$ such that $g(a \mu)\le a^{\beta}\cdot g(\mu)$ $\forall\, 0<a<1,\, \mu\in \RR^m_+$.
\end{assumption}

\begin{thm}\label{thm:prod-integral}
For any $\epsilon\in (0,1)$ there is a randomized online algorithm for PMPC under Assumptions~\ref{asm:g-mono} and~\ref{asm:g-rnd}   that achieves expected profit at least $(1+\epsilon)^{-2-\frac{2}{\beta-1}}\cdot \frac{\OPT}{\alpha} - \frac{\OPT}{m\alpha} - g(L\cdot \mathbf{1})$, where $\OPT$ is the offline optimal profit, $\alpha$ is the fractional competitive ratio and $L=O(\frac{\log m}{\epsilon^2})$.
\end{thm}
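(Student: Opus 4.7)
The strategy is to round the online fractional solution $(y^f,\mu^f)$ produced in Section~\ref{sec:profit-max}, whose profit $P^f := R^f - g(\mu^f) \ge \OPT/\alpha$, where $R^f := \sum_i \sum_{T\in \cS_i} v_i(T)\, y^f_{iT}$. The main obstacle is that $g$ is super-linear (Assumption~\ref{asm:g-rnd} gives $g(a\mu)\le a^\beta g(\mu)$ for $a<1$ with $\beta>1$), so the expected post-rounding cost could far exceed $g(\mu^f)$; we must simultaneously (i) obtain Chernoff concentration of each integer load $X_j = |\{i: j\in T_i\}|$ around its mean, and (ii) exploit the homogeneity of $g$ to balance the revenue loss against the cost growth so that only a $(1+\epsilon)^{O(1)}$ multiplicative loss and an additive $g(L\mathbf{1})$ loss remain.

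The key parameter is $h := (1+\epsilon)^{-1/(\beta-1)}\in (0,1)$, chosen to satisfy the identity $(1+\epsilon)\,h^\beta = h$. The rounding step is: for each buyer $i$ independently, allocate bundle $T\in \cS_i$ with probability $\tfrac{h}{(1+\epsilon)^2}\, y^f_{iT}$, leaving $i$ unallocated otherwise. Then $E[X_j] = \tfrac{h}{(1+\epsilon)^2}\,\mu^f_j$ and $E[\text{revenue}] = \tfrac{h}{(1+\epsilon)^2}\,R^f$. Since each $X_j$ is a sum of independent indicators, a Chernoff bound with deviation $L_0 = \Theta(\log m/\epsilon^2)$, followed by a union bound over $j$, yields the concentration event $\mathcal{E} := \{\forall j:\ X_j \le (1+\epsilon)E[X_j] + L_0\}$ with probability at least $1 - m^{-\Omega(1)}$; equivalently, on $\mathcal{E}$, $X_j \le \tfrac{h}{1+\epsilon}\mu^f_j + L_0$ for every $j$.

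Conditioning on $\mathcal{E}$, monotonicity of $g$ (Assumption~\ref{asm:g-mono}) gives $g(X)\le g\bigl(\tfrac{h}{1+\epsilon}\mu^f + L_0\mathbf{1}\bigr)$. Splitting by convexity with weight $\lambda = 1/(1+\epsilon)$ and then invoking the homogeneity bound $g(h\mu^f)\le h^\beta g(\mu^f)$,
\[ g\bigl(\tfrac{h}{1+\epsilon}\mu^f + L_0\mathbf{1}\bigr) \le \tfrac{1}{1+\epsilon}\, g(h\mu^f) + \tfrac{\epsilon}{1+\epsilon}\, g\bigl(\tfrac{(1+\epsilon)L_0}{\epsilon}\mathbf{1}\bigr) \le \tfrac{h^\beta}{1+\epsilon}\, g(\mu^f) + g(L\mathbf{1}), \]
where $L = O(\log m /\epsilon^2)$ absorbs the additive piece. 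The defining identity $(1+\epsilon)h^\beta = h$ then yields $\tfrac{h^\beta}{1+\epsilon} = \tfrac{h}{(1+\epsilon)^2}$, so on $\mathcal{E}$,
\[ g(X) \le \tfrac{h}{(1+\epsilon)^2}\, g(\mu^f) + g(L\mathbf{1}). \]

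Combining with the expected revenue,
\[ E[\text{profit}] \ge \tfrac{h}{(1+\epsilon)^2}\bigl(R^f - g(\mu^f)\bigr) - g(L\mathbf{1}) - \Pr[\overline{\mathcal{E}}]\, g(n\mathbf{1}) = \tfrac{h}{(1+\epsilon)^2}\, P^f - g(L\mathbf{1}) - \text{(fail)}, \]
and $\tfrac{h}{(1+\epsilon)^2} = (1+\epsilon)^{-2-1/(\beta-1)}$, which in fact slightly improves on the stated exponent $-2-2/(\beta-1)$. The failure contribution $\Pr[\overline{\mathcal{E}}]\,g(n\mathbf{1})$ is controlled by choosing the hidden constant in $L_0$ large enough (still keeping $L=O(\log m /\epsilon^2)$), making it at most $\OPT/(m\alpha)$---this accounts for the $-\OPT/(m\alpha)$ term in the statement. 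The single most delicate point, and what I expect to be the main obstacle, is the choice of $h$: it is what aligns the cost coefficient $\tfrac{h^\beta}{1+\epsilon}$ with the revenue coefficient $\tfrac{h}{(1+\epsilon)^2}$ so that Revenue $-$ Cost factors cleanly as $\tfrac{h}{(1+\epsilon)^2}\,P^f$; without this alignment, one is left with a residual $\Theta(\epsilon)\,g(\mu^f)$ term that cannot be bounded in terms of the fractional profit alone.
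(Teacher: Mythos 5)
Your conditional cost analysis on the concentration event $\mathcal{E}$ is correct, and in fact keeping the convex weights $\tfrac{1}{1+\epsilon}$ and $\tfrac{\epsilon}{1+\epsilon}$ in the split (rather than discarding them) yields the slightly sharper scaling factor $(1+\epsilon)^{-2-1/(\beta-1)}$; the paper throws the weights away and gets the stated $(1+\epsilon)^{-2-2/(\beta-1)}$. So that part of the argument is fine and even a mild improvement.

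The gap is in the failure event. You bound the bad-event contribution by $\Pr[\overline{\mathcal{E}}]\cdot g(n\mathbf{1})$ and assert this can be driven below $\OPT/(m\alpha)$ by inflating the constant in $L_0$. This cannot work while keeping $L=O(\log m/\epsilon^2)$: the Chernoff tail with additive slack $L_0 = \Theta(\log m/\epsilon^2)$ gives $\Pr[\overline{\mathcal{E}}] = m^{-O(1)}$, a quantity independent of $n$, whereas Assumption~\ref{asm:g-rnd} forces $g$ to be super-linear, so $g(n\mathbf{1}) \ge n^\beta g(\mathbf{1})$. Since $\OPT$ can be $O(1)$ even when $n$ is huge (e.g.\ all buyers value every bundle at $1$ and $g(\mu)=\sum_j \mu_j^2$), the product $m^{-O(1)}\cdot n^\beta\, g(\mathbf{1})$ is not bounded by $\OPT/(m\alpha)$. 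To kill this term via the failure probability alone you would need $L_0$ to grow with $n$, contradicting the theorem's requirement that $L$ depend only on $m$. The paper avoids this by making the rounding \emph{alteration-based}: Step~2 of its rounding procedure simply \emph{skips} a buyer whenever the current integral load $M_j$ threatens to exceed $(1+\epsilon)a\mu_j + \tfrac6\epsilon\log m$ for some $j$. This caps $M_j \le (1+\epsilon)a\mu_j + \ell$ \emph{deterministically}, so the production cost bound $a\, g(\mu) + g(L\mathbf{1})$ holds with probability one and no failure-event cost term arises at all. The $-\OPT/(m\alpha)$ loss then comes from the \emph{revenue} side: by Claim~\ref{clm:prod-rnd}, the probability that the skip condition ever triggers is at most $1/m$, giving expected revenue at least $(1-1/m)a\sum_{i,T} v_i(T)y_{iT}$. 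Your argument needs this skip mechanism (or an equivalent device ensuring the cost is bounded almost surely rather than merely with high probability) to be made rigorous.
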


Note that the additive error term $g(L\cdot \mathbf{1})$ is independent of the number $n$ of buyers: it depends only on the number $m$ of  items and the production function $g$. We also give an example below which shows that any rounding algorithm for~\eqref{lp:prod-D} must incur some such additive error.

We now describe the rounding algorithm. Let $\epsilon\in(0,1)$ be any value; set $a=(1+\epsilon)^{-2-\frac{2}{\beta-1}}$. The rounding algorithm scales the fractional allocation $y$ by factor $a<1$ and performs randomized rounding. Let $M\in \mathbb{Z}_+^m$ denote the (integral) quantities of different items produced at any point in the online rounding.  Upon arrival of buyer $i$, the algorithm does the following.
\begin{enumerate}
\item Update fractional solution $(y,\mu)$ according to the fractional online algorithm.
\item If $M_j > (1+\epsilon) a\mu_j + \frac6\epsilon \log m$ for any $j\in[m]$ then skip.
\item Else, allocate set $T \in \cS_i$ to buyer $i$ with probability $a\cdot y_{iT}$.
\end{enumerate}

\begin{claim}\label{clm:prod-rnd}
$\Pr[M_j>(1+\epsilon) a\mu_j + \frac6\epsilon \log m] \le \frac{1}{m^2}$ for all items $j\in[m]$ and $\epsilon\in (0,1)$.
\end{claim}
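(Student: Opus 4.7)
The plan is to reduce the claim to a standard Chernoff tail bound via a coupling argument. The main difficulty is that the skip step in the rounding algorithm introduces history-dependence: whether buyer $i$ actually receives an allocation depends on the random outcomes of earlier buyers (through the values $M_1, \ldots, M_m$), so the per-buyer indicators are \emph{not} independent. The fractional quantities $\mu_j$ and $y_{iT}$, however, are produced by the deterministic fractional algorithm of Section~\ref{sec:monotone}, so they are nonrandom.

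First I would define a ``no-skip'' coupling. Draw independent uniform $u_i \in [0,1]$ for each buyer $i$, partition $[0,1]$ into intervals of length $a\, y_{iT}$, one per $T \in \cS_i$, plus a ``no allocation'' leftover. Let $Y_i = 1$ iff $u_i$ falls in the union of intervals corresponding to sets $T \ni j$, so $Y_i$ is Bernoulli with success probability $a\sum_{T\in\cS_i,\,T\ni j} y_{iT}$, and the $Y_i$'s are independent across $i$. In the actual algorithm, use the same $u_i$ to choose the set when step~2 does not trigger, and allocate nothing otherwise; the indicator $X_i$ that buyer $i$ receives a bundle containing $j$ then satisfies $X_i \leq Y_i$ pointwise. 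Hence $M_j \leq M_j^\star := \sum_i Y_i$ always, and it suffices to bound $\Pr[M_j^\star > (1+\epsilon)\,a\mu_j + \tfrac{6}{\epsilon}\log m]$.

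Next I would compute the mean. By the packing constraint~\eqref{lp:prod-D2} applied to the final fractional solution,
\[
\nu \;:=\; \mathbb{E}[M_j^\star] \;=\; a\sum_{i=1}^n \sum_{T\in\cS_i,\,T\ni j} y_{iT} \;\leq\; a\mu_j.
\]
Setting $s := (1+\epsilon)\,a\mu_j + \tfrac{6}{\epsilon}\log m$ and writing $s = (1+\delta)\nu$, the inequality $\nu \leq a\mu_j$ yields $\delta \geq \epsilon$, and also $\delta\nu = s - \nu \geq \epsilon\,a\mu_j + \tfrac{6}{\epsilon}\log m \geq \tfrac{6}{\epsilon}\log m$.

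Finally I would apply the multiplicative Chernoff bound for a sum of independent $\{0,1\}$-valued random variables, in the form
\[
\Pr[M_j^\star \geq (1+\delta)\nu] \;\leq\; \exp\!\Big(-\tfrac{\delta^2\nu}{2+\delta}\Big),
\]
and verify the exponent is at least $2\log m$ in both regimes. When $\delta \leq 1$, the exponent is at least $\delta^2\nu/3 = \delta(\delta\nu)/3 \geq \epsilon \cdot \tfrac{6}{\epsilon}\log m /3 = 2\log m$. When $\delta \geq 1$, the exponent is at least $\delta\nu/3 \geq \tfrac{6}{\epsilon}\log m / 3 \geq 2\log m$ (using $\epsilon \leq 1$). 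Either way $\Pr[M_j^\star \geq s] \leq m^{-2}$, which gives the claim. The coupling is the conceptual step; the Chernoff calculation is routine, and the constant $6/\epsilon$ in the additive slack is precisely chosen so that, combined with $\delta \geq \epsilon$, the exponent clears the $2\log m$ threshold.
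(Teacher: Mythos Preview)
Your proof is correct and follows the same approach as the paper---a Chernoff bound applied to per-buyer indicators with mean at most $a\mu_j$. The paper's own proof is a two-line assertion that $M_j$ is a sum of independent $0$--$1$ random variables; your coupling to the no-skip process $M_j^\star$ rigorously handles the history-dependence introduced by the skip step, a point the paper simply glosses over.
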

\begin{proof}
Fix $j\in[m]$ and $\epsilon\in (0,1)$. Note that $M_j$ is the sum of independent $0-1$ random variables with $\E[M_j]\le a\cdot \mu_j$. The claim now follows by Chernoff bound.
\end{proof}

Below $\ell:=\frac6\epsilon \log m +1$ and $L:=(1+\frac1\epsilon)\cdot \ell=O(\frac{\log m}{\epsilon^2})$.

\begin{lemma}\label{lem:prod-rnd} The expected objective of the integral allocation is
at least
$$a(1-\frac1m)\cdot \sum_{i=1}^{n} \sum_{T\in \cS_i} v_i(T)\cdot y_{iT} - a \cdot g(\mu) - g(L\cdot \mathbf{1}).$$
\end{lemma}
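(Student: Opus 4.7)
My plan is to lower bound the expected profit by separately bounding the expected revenue and the expected production cost, and then combine. For revenue: conditional on reaching step~3, the expected contribution from buyer $i$ is exactly $a\sum_{T\in\cS_i} v_i(T)\,y_{iT}$, so
\begin{equation*}
\E[\mathrm{revenue}] \;=\; \sum_i \Pr[\text{buyer } i \text{ not skipped}] \cdot a\sum_{T} v_i(T)\,y_{iT}.
\end{equation*}
I would show the skip probability at every arrival is at most $1/m$. Fix an item $j$ and define $Y_{i'j}=\mathbf{1}[j\in T^*_{i'}]$ where $T^*_{i'}$ is the (independently drawn) random bundle for buyer $i'$, whether or not it is later discarded by skipping. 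These $Y_{i'j}$ are independent Bernoullis with total mean at most $a\sum_{i'<i}\sum_{T\ni j} y_{i'T}\le a\mu_j$ by constraint~\eqref{lp:prod-D2}, and the current $M_j$ is stochastically dominated by $\sum_{i'<i} Y_{i'j}$. The same Chernoff bound as in Claim~\ref{clm:prod-rnd} gives $\Pr[M_j>(1+\epsilon)a\mu_j+\tfrac{6}{\epsilon}\log m]\le 1/m^2$ at every intermediate time, and a union bound over the $m$ items yields $\Pr[\mathrm{skip}]\le 1/m$. Summing over buyers gives $\E[\mathrm{revenue}]\ge a(1-1/m)\sum_{i,T} v_i(T)\,y_{iT}$.

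For the production cost I would first establish the deterministic invariant $M_j\le (1+\epsilon)a\,\mu_j + \ell$ with $\ell:=\tfrac{6}{\epsilon}\log m+1$ by induction on arrivals. In a skipped round $M$ is unchanged while $\mu$ only grows, preserving the invariant. In a non-skipped round the step~2 test guarantees $M_j\le (1+\epsilon)a\,\mu_j+\ell-1$ just before step~3, and step~3 raises each $M_j$ by at most one. Monotonicity of $g$ from Assumption~\ref{asm:g-mono} then gives the deterministic upper bound $g(M)\le g\bigl((1+\epsilon)a\,\mu+\ell\,\mathbf{1}\bigr)$ at termination.

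It remains to convert this into $a\,g(\mu)+g(L\mathbf{1})$ using convexity of $g$ and Assumption~\ref{asm:g-rnd}. For any $\lambda\in(0,1)$, Jensen's inequality gives
\begin{equation*}
g\bigl((1+\epsilon)a\,\mu+\ell\,\mathbf{1}\bigr)\;\le\;\lambda\,g\!\left(\tfrac{(1+\epsilon)a}{\lambda}\mu\right)+(1-\lambda)\,g\!\left(\tfrac{\ell}{1-\lambda}\mathbf{1}\right).
\end{equation*}
I would pick $\lambda:=(1+\epsilon)^{\beta/(\beta-1)}\,a$; since $a=(1+\epsilon)^{-2\beta/(\beta-1)}$, this $\lambda$ lies in $(0,1)$ and makes $\tfrac{(1+\epsilon)a}{\lambda}=(1+\epsilon)^{-1/(\beta-1)}<1$. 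Applying Assumption~\ref{asm:g-rnd} to the first term and a short algebraic check gives $\lambda\bigl(\tfrac{(1+\epsilon)a}{\lambda}\bigr)^\beta=a$, collapsing the first term to exactly $a\,g(\mu)$. Setting $L:=\tfrac{\ell}{1-\lambda}$, I bound the second term by $g(L\mathbf{1})$ using $(1-\lambda)\le 1$, and Bernoulli's inequality applied to $1-(1+\epsilon)^{-\beta/(\beta-1)}=\Theta(\epsilon)$ confirms $L=O(\log m/\epsilon^2)$. Combining the revenue lower bound with this upper bound on $\E[g(M)]$ yields the lemma. The most delicate point is the parameter tuning: the exponent $-2-2/(\beta-1)$ in the definition of $a$ is exactly what lets $\lambda$ absorb the $(1+\epsilon)$-slack via Assumption~\ref{asm:g-rnd} while still giving $\lambda<1$ and $L=O(\log m/\epsilon^2)$.
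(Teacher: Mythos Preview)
Your argument is correct and follows the same overall structure as the paper: bound the skip probability by $1/m$ via Claim~\ref{clm:prod-rnd} and a union bound, establish the deterministic invariant $M\le (1+\epsilon)a\,\mu+\ell\,\mathbf{1}$, and then split $g\bigl((1+\epsilon)a\,\mu+\ell\,\mathbf{1}\bigr)$ using convexity together with Assumption~\ref{asm:g-rnd}.

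The only noteworthy difference is in the convex-combination step. The paper simply takes weights $\tfrac{1}{1+\epsilon}$ and $\tfrac{\epsilon}{1+\epsilon}$, drops both coefficients (using $g\ge 0$), and observes directly that $((1+\epsilon)^2 a)^\beta=a$ and $L=(1+\tfrac{1}{\epsilon})\ell$. Your choice $\lambda=(1+\epsilon)^{\beta/(\beta-1)}a=(1+\epsilon)^{-\beta/(\beta-1)}$ also works: since $\beta/(\beta-1)\ge 1$ one has $1-\lambda\ge \epsilon/(1+\epsilon)$, so your $L=\ell/(1-\lambda)\le (1+\tfrac{1}{\epsilon})\ell$, i.e.\ your bound is at least as good as (and recovers) the paper's stated $L$. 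The paper's splitting is a bit cleaner because it avoids the extra parameter tuning, while yours makes explicit that the exponent in $a$ is the unique choice that collapses the first term to exactly $a\,g(\mu)$. One minor labeling point: the monotonicity of $g$ you invoke is part of the standing hypotheses on the production cost, not the content of Assumption~\ref{asm:g-mono}, which concerns $g^\star$.
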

\begin{proof} Note that the algorithm ensures (in step~2 above) that $M\le (1+\epsilon)a\cdot \mu + \ell\cdot \mathbf{1}$. So with probability one, the production cost  is at most:
\begin{eqnarray*}
g((1+\epsilon)a\cdot \mu + \ell\cdot \mathbf{1}) & = & g\left(\frac{1}{1+\epsilon}\cdot (1+\epsilon)^2a\cdot \mu + \frac{\epsilon}{1+\epsilon}\cdot (1+\frac1\epsilon) \ell\cdot \mathbf{1}\right)\\
&\le &  g\left((1+\epsilon)^2a\cdot \mu\right) +  g\left( (1+\frac1\epsilon) \ell\cdot \mathbf{1}\right)\\
&\le & ((1+\epsilon)^2a)^\beta\cdot g(\mu) + g\left( L\cdot \mathbf{1}\right)\quad = \quad a \cdot g(\mu) + g\left( L\cdot \mathbf{1}\right).
\end{eqnarray*}
The first inequality is by convexity of $g$; the second inequality uses Assumption~\ref{asm:g-rnd} and $(1+\epsilon)^2a<1$; the last equality is by definition of $a$.

By Claim~\ref{clm:prod-rnd}, the probability that we skip some buyer $i$ is at most $\frac{1}{m}$. Thus the expected total value is at least $(1-\frac{1}{m})a \cdot \sum_{i=1}^{n} \sum_{T\in \cS_i} v_i(T) y_{iT}$. Subtracting the upper bound on the cost from the expected value, we obtain the lemma. \end{proof}

This completes the proof of Theorem~\ref{thm:prod-integral}.

\paragraph{Integrality Gap.} We note that the additive error term is necessary for any algorithm based on the convex relaxation~\eqref{lp:prod-D}. Consider a single buyer with $\cS_1=2^{[m]}$ and $v_1(T)=|T|$. Let $g(\mu) = \sum_{j=1}^m \mu_j^2$. The optimal integral allocation clearly has profit zero. However the fractional optimum is $\Omega(m)$ due to the feasible solution with $y_{1T}=2^{-m}$ for all $T\sse [m]$ and $\mu_j=\frac12$ for all $j\in[m]$. Thus any algorithm using this relaxation incurs an additive error depending on $m$.

\subsection{Examples of Production Costs}
Here we give two examples of production costs (satisfying Assumptions~\ref{asm:g-mono} and \ref{asm:g-rnd}) to which our results apply. In each case, we first show the competitive ratio obtained for the fractional convex program, and then use the rounding algorithm to obtain an integral solution.

\paragraph{Example 1.} Consider a seller who can produce items in $K$ different factories, where the $k$'th factory produces in one hour of work $p_{kj}$ units of item $j$. The production cost is the sum of $q^{th}$ powers of the work hours of the $K$ factories (specifically, we get a linear production cost for $q=1$ and the $q^{th}$ power of makespan when $q\geq\log K$).
This corresponds to the following function:
\begin{equation} \label{eq:prod-cost-eg1} g(\mu) \quad =\quad \min \left\{ \frac{1}{q}\sum_{k=1}^K z_k^q\,\, :\,\, \sum_{k=1}^K p_{kj}\cdot z_k \ge \mu_j,\, \forall j\in[m],\,\, z\ge 0\right\}.
\end{equation}
We scale the objective by $1/q$ to get a more convenient form. The dual function is:
$$g^\star(x) \quad =\quad \frac{1}{p}\sum_{k=1}^K \left(\sum_{j=1}^m p_{kj}\cdot x_j\right)^p, \qquad \mbox{where }\frac{1}{p}+\frac1q=1.$$
Applying our framework (as Assumption~\ref{asm:g-mono} is satisfied), as in Section~\ref{sec:Lp-norm-packing}, we obtain an $\alpha= O(p\log\rho d)^p$-competitive fractional online algorithm, where $\rho =R$ 
the maximum-to-minimum ratio of valuations and row-sparsity $d\le m+1$. Combined with Theorem~\ref{thm:prod-integral} (note that Assumption~\ref{asm:g-rnd} is satisfied with $\beta=q$), setting $\epsilon=\frac12$, we obtain:
\begin{cor}
There is a randomized online algorithm for PMPC with cost function~\eqref{eq:prod-cost-eg1} for $q>1$ that achieves expected profit at least $(1-\frac1m) \frac{\OPT}{O(p\log R d)^p} - g(O(\log m)\cdot \mathbf{1})$.
\end{cor}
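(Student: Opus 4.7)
The plan is to combine the monotone fractional framework of Section~\ref{sec:monotone} with the rounding scheme of Theorem~\ref{thm:prod-integral}. Both ingredients are already established, so the proof amounts to (i) verifying that the production cost function satisfies Assumptions~\ref{asm:g-mono} and~\ref{asm:g-rnd}, and (ii) substituting the $\ell_p$-specific gradient identities into the generic competitive ratio~\eqref{eq:prod-comp-ratio}.

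I would first confirm the stated form of $g^\star$ and verify Assumption~\ref{asm:g-mono}. By definition $g^\star(x)=\sup_\mu\{x^\top\mu-g(\mu)\}$. Since $x\geq 0$ and $P=(p_{kj})\geq 0$, the inner maximization over $\mu$ is attained by saturating $\mu_j=\sum_k p_{kj}z_k$, which reduces the calculation to the standard Fenchel conjugate of $\frac{1}{q}\|\cdot\|_q^q$ applied to $P^\top x$. This yields exactly $g^\star(x)=\frac{1}{p}\sum_k(\sum_j p_{kj}x_j)^p$ with $\tfrac{1}{p}+\tfrac{1}{q}=1$. Non-negativity of $P$ and convexity of $t\mapsto t^p$ immediately give monotonicity of $g^\star$ and of $\nabla g^\star$.

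Next I would run the calculation of Theorem~\ref{thm:p-norm} verbatim on $f=g^\star$. Direct differentiation gives $\nabla_\ell g^\star(z)/\nabla_\ell g^\star(cz)=(1/c)^{p-1}$, the Euler identity $z^\top\nabla g^\star(z)=p\,g^\star(z)$, and $g^\star(cz)=c^p\,g^\star(z)$. Writing $\delta=1/c$, expression~\eqref{eq:mono-cr} becomes $\delta^{p-1}/L-(p-1)\delta^p$ with $L=2\ln(1+\rho d)$; this is maximized at $\delta=1/(pL)$ with value $(pL)^{-p}$. Since $\rho\leq R$ and $d\leq m+1$, I obtain a fractional competitive ratio $\alpha=O(p\log(Rd))^p$.

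Finally I would verify Assumption~\ref{asm:g-rnd} with $\beta=q$: if $z$ is feasible in the definition of $g(\mu)$, then $az$ is feasible for $a\mu$ by linearity of the defining constraints, so $g(a\mu)\leq \tfrac{1}{q}\|az\|_q^q=a^q\cdot\tfrac{1}{q}\|z\|_q^q$; taking the infimum over $z$ gives the required $g(a\mu)\leq a^q g(\mu)$. Applying Theorem~\ref{thm:prod-integral} with $\epsilon=\tfrac12$ then delivers expected profit at least $a\,\OPT/\alpha-\OPT/(m\alpha)-g(L\mathbf{1})$ with $a=(3/2)^{-2-2/(q-1)}$ a $q$-dependent constant and $L=O(\log m)$. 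Absorbing the constant $1/a$ into the base of $\alpha=O(p\log Rd)^p$ yields the stated bound. The only technically delicate step is the Fenchel-dual identification of $g^\star$, involving swapping sup and inf through non-negative linear constraints; the remainder is a one-variable calculus exercise and a direct invocation of the rounding theorem.
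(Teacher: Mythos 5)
Your proposal is correct and follows essentially the same route as the paper: identify $g^\star$ as a (scaled) sum of $p$-th powers of linear forms, verify Assumptions~\ref{asm:g-mono} and~\ref{asm:g-rnd}, plug the $\ell_p$ gradient identities into~\eqref{eq:mono-cr} exactly as in the $\ell_p$-norm packing calculation to get $\alpha = O(p\log(\rho d))^p$ with $\rho = R$ and $d \le m+1$, and then invoke Theorem~\ref{thm:prod-integral} with $\epsilon = 1/2$. You spell out the Fenchel-conjugate computation and the one-variable optimization that the paper leaves implicit by citing Section~\ref{sec:Lp-norm-packing}; the final absorption step (noting $1/a = (3/2)^{2/(q-1)+2} = (9/4)^p$ since $q-1 = 1/(p-1)$, so $1/a$ disappears into the $O(\cdot)^p$ base) is also correct and matches what the paper does tacitly.
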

Note that $g(O(\log m)\cdot \mathbf{1})\le K\cdot O\left(\frac{\log m}{p_{min}}\right)^q$ where $p_{min}>0$ is the minimum positive entry in $p_{kj}s$.

\paragraph{Example 2.} This deals with the dual of the above production cost. Suppose there are $K$ different  linear  cost functions: for $k\in [K]$ the $k^{th}$ cost function is given by $(c_{k1},\cdots,c_{km})$ where $c_{kj}$ is the cost per unit of item $j\in[m]$. The production cost $g$ is defined to be the (scaled) sum of $p^{th}$ powers of these $K$ different costs:
\begin{equation}\label{eq:prod-cost-eg2} g (\mu) \quad =\quad \frac{1}{p}\sum_{k=1}^K \left(\sum_{j=1}^m c_{kj}\cdot \mu_j\right)^p.
\end{equation}
This has dual:
$$g^\star(x) \quad =\quad \min \left\{ \frac{1}{q}\sum_{k=1}^K z_k^q\,\, :\,\, \sum_{k=1}^K c_{kj}\cdot z_k \ge x_j,\, \forall j\in[m],\,\, z\ge 0\right\}, \qquad \mbox{where }\frac{1}{p}+\frac1q=1.$$

The primal program~\eqref{lp:prod-P} after eliminating variables $\{x_j\}_{j=1}^m$ is given below with its dual:

\begin{minipage}[t]{0.5\textwidth}
{\small \begin{alignat}{2}
  \mbox{minimize } \sum_{i=1}^{n} \ou_i \,\,+\,\, & \frac1q \sum_{k=1}^K z_k^q  \qquad \qquad \qquad \mathbf{(P')}
  \notag & \\ 
  \ou_i + \sum_{j\in T} \sum_{k=1}^K c_{kj}\cdot z_k & \ge v_i(T),\quad  \forall \,
  i \in [n], \,\,\forall T\in \cS_i,  & \notag \\
  \ou,z &\geq 0. & \notag
\end{alignat}}
\end{minipage}
\begin{minipage}[t]{0.5\textwidth}
{\small \begin{alignat}{2}
  \mbox{maximize } \sum_{i=1}^{n} \sum_{T\in \cS_i} &v_i(T)\cdot y_{iT}  \,\,-\,\, \frac1p \sum_{k=1}^K \lambda_k^p   \qquad \mathbf{(D')} \notag &\\
  \sum_{T\in \cS_i} y_{iT}  \le 1,   & \quad \forall \, i \in [n],   \notag \\
\sum_{i=1}^n \sum_{T\in \cS_i} (\sum_{j\in T} c_{kj} ) & \cdot y_{iT} - \lambda_k  \le 0, \quad  \forall k\in [K],  \notag  \\
  y,\lambda &\geq 0. \notag
\end{alignat}}
\end{minipage}

Note that the row-sparsity in $(P')$ is $d=K+1$ which is incomparable to $m$.  We obtain a solution to~\eqref{lp:prod-P} by setting $x=Cz$ from any solution $(\ou,z)$ to $(P')$ where $C_{m\times K}$ has $k^{th}$ column $(c_{k1},\cdots,c_{km})$. We can apply our algorithm to the convex covering problem~$(P')$ as $g'(\lambda)=\frac1p \sum_{k=1}^K \lambda_k^p$ satisfies Assumption~\ref{asm:g-mono}. This algorithm maintains monotone feasible solutions $(\ou,z)$ to $(P')$ and $(y,\lambda)$ to $(D')$. However, to solve~\eqref{lp:prod-D} online we need to maintain variables $(y,\mu)$ which is different from the variables $(y,\lambda)$ in~$(D')$. We maintain $y$ in~\eqref{lp:prod-D} to be the same as that in~$(D')$. We set the production quantities $\mu_j=\sum_{i=1}^n \sum_{T\in \cS_i} \mathbf{1}_{j\in T}\cdot y_{iT}$ so that all constraints in~\eqref{lp:prod-D} are satisfied. Note that the dual variables $y$ (allocations) and $\mu$ (production quantities) are monotone increasing- so this is a valid online algorithm.  In order to bound the objective in~\eqref{lp:prod-D} we use the feasible solution $(y,\lambda)$ to $(D')$. Note that for all $k\in [K]$:
$$c_k^T \mu \quad = \quad \sum_{j=1}^m c_{kj} \cdot \mu_j \quad = \quad \sum_{j=1}^m c_{kj} \sum_{i=1}^n \sum_{T\in \cS_i} \mathbf{1}_{j\in T}\cdot y_{iT} \quad = \quad \sum_{i=1}^n \sum_{T\in \cS_i} y_{iT} \sum_{j\in T} c_{kj} \quad \le \quad \lambda_k.$$
So the objective of $(y,\mu)$ in~\eqref{lp:prod-D} is at least that of $(y,\lambda)$ in $(D')$. Our general framework then implies a competitive ratio for the fractional problem of $\alpha=O(q\log\rho d)^q = O(q\log\rho)^q$ where
$$\rho \,\, = \,\, R\cdot \frac{K\cdot \max\{c_{kj} : k\in[K],j\in[m]\} }{\min\{c_{kj}: k\in[K],j\in[m]\}}.$$
Above $R$ is the maximum-to-minimum ratio of valuations, and recall $d\le K+1$.

Combined with Theorem~\ref{thm:prod-integral} ($\epsilon=\frac12$), we obtain:
\begin{cor}
There is a randomized online algorithm for PMPC with cost function~\eqref{eq:prod-cost-eg2} for $p>1$ that achieves expected profit at least $(1-\frac1m) \frac{\OPT}{O(q\log \rho)^q} - g(O(\log m)\cdot \mathbf{1})$.
\end{cor}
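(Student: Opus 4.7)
The plan is to leverage the reformulation of the primal-dual pair as $(P')$ and $(D')$ described immediately before the corollary, apply our monotone framework from Section~\ref{sec:monotone} to $(P')$, translate the resulting dual back to a fractional solution of~\eqref{lp:prod-D}, and finally round via Theorem~\ref{thm:prod-integral}.

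First, I would observe that $(P')$ is a convex covering program whose objective $f(\ou,z) = \sum_i \ou_i + \frac{1}{q}\sum_k z_k^q$ is continuous, differentiable, and has a coordinate-wise monotone gradient, so Assumption~\eqref{prop-grad} holds and the monotone algorithm of Section~\ref{sec:monotone} applies. I would then compute the competitive ratio by plugging $f$ into~\eqref{eq:mono-cr}, following exactly the calculation in the proof of Theorem~\ref{thm:p-norm} (with the exponent $q$ in place of $p$). That calculation used only the scaling behavior of a sum of $q$-powers, which is identical here; the only difference is that the denominator is $\log(1+\rho d)$ rather than $\log(1+2d^2)$. Optimizing over $\delta=1/c$ yields fractional competitive ratio $O(q\log(\rho d))^q$. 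Since $d\le K+1$ and the stated bound on $\rho$ already contains a factor of $K$ (times the max-to-min ratio of the $c_{kj}$'s times $R$), we can absorb $d$ into $\log\rho$ and write the ratio as $\alpha = O(q\log\rho)^q$.

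Next, I would convert the $(P')/(D')$ solution into one for~\eqref{lp:prod-P}/\eqref{lp:prod-D}. Set $\mu_j := \sum_i\sum_{T\in\cS_i}\mathbf{1}_{j\in T}\cdot y_{iT}$, which makes~\eqref{lp:prod-D2} hold with equality, while~\eqref{lp:prod-D1} is inherited from $(D')$. Since the monotone algorithm keeps $y$ non-decreasing, $\mu$ is also non-decreasing, which is essential since allocations must be irrevocable. The key computation, as in the paper's preceding paragraph, is
$$c_k^\trans\mu \;=\; \sum_j c_{kj}\sum_i\sum_{T\in\cS_i}\mathbf{1}_{j\in T}\, y_{iT} \;=\; \sum_i\sum_{T\in\cS_i} y_{iT}\!\!\sum_{j\in T} c_{kj} \;\le\; \lambda_k,$$
so $g(\mu)=\frac{1}{p}\sum_k (c_k^\trans\mu)^p \le \frac{1}{p}\sum_k\lambda_k^p$ and the objective of $(y,\mu)$ in~\eqref{lp:prod-D} dominates that of $(y,\lambda)$ in $(D')$. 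Combined with weak duality for $(P')/(D')$ and the fact that the value of $(P')$ upper-bounds that of~\eqref{lp:prod-P} (take $x=Cz$), this gives an online $\alpha$-competitive fractional algorithm for~\eqref{lp:prod-D}.

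Finally, for the integral guarantee I would verify the hypotheses of Theorem~\ref{thm:prod-integral}. Assumption~\ref{asm:g-rnd} holds with $\beta=p$ since $g(a\mu)=\frac{1}{p}\sum_k a^p(c_k^\trans\mu)^p = a^p g(\mu)$ for $a\in(0,1)$. Assumption~\ref{asm:g-mono} on $g^\star$ follows from the representation of $g^\star$ as the minimum of a monotone, coordinate-wise convex program indexed by non-negative data. Invoking Theorem~\ref{thm:prod-integral} with $\epsilon=\tfrac12$ then yields expected profit at least $(1+\tfrac12)^{-2-2/(p-1)}\cdot\OPT/\alpha - \OPT/(m\alpha) - g(L\cdot\mathbf{1})$; the leading multiplicative constant is absorbed into the $O(\cdot)$ inside $\alpha$, the first two terms combine to $(1-\frac{1}{m})\OPT/O(q\log\rho)^q$, and $L=O(\log m)$ delivers the additive term $g(O(\log m)\cdot\mathbf{1})$.

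The main obstacle I anticipate is the competitive-ratio calculation for $(P')$ under the monotone variant: one must pin down precisely what $\rho$ means in~\eqref{eq:mono-cr} (namely the max/min ratio of positive entries in the constraint matrix of $(P')$, which accounts for both the valuations $v_i(T)$ on the right-hand sides and the $c_{kj}$'s in the rows indexed by $z$), and check that this $\rho$ is bounded by the quantity displayed just before the corollary, so that $K$ and $d$ can be absorbed into $\log\rho$. A secondary subtlety is justifying the gradient-monotonicity of $f$ for the $z$-variables (trivial here since $\nabla_{z_k}f = z_k^{q-1}$) and handling the reduction from~\eqref{lp:prod-P} to $(P')$ cleanly; beyond these points, the argument is bookkeeping.
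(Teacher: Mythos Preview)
Your proposal is correct and follows essentially the same approach as the paper: reformulate via $(P')/(D')$, apply the monotone framework of Section~\ref{sec:monotone} to $(P')$ to obtain the $O(q\log\rho)^q$ fractional ratio, translate the monotone dual $(y,\lambda)$ back to a feasible monotone $(y,\mu)$ for~\eqref{lp:prod-D} via the inequality $c_k^\trans\mu\le\lambda_k$, and then invoke Theorem~\ref{thm:prod-integral} with $\epsilon=\tfrac12$ and $\beta=p$. The only minor point is that you need not verify Assumption~\ref{asm:g-mono} for the original $g^\star$---the rounding in Theorem~\ref{thm:prod-integral} uses only Assumption~\ref{asm:g-rnd} once a fractional $\alpha$-competitive algorithm is in hand, and that algorithm is supplied here by the $(P')$ route rather than by applying the framework directly to~\eqref{lp:prod-P}.
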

Here $g(O(\log m)\cdot \mathbf{1})\le K\cdot O\left(m\log m\cdot c_{max}\right)^p$ where $c_{max}$ is the maximum entry in $c_{kj}s$.

\bibliographystyle{alpha}
{\small \bibliography{covering-pd}}

\newcommand{\etalchar}[1]{$^{#1}$}
\begin{thebibliography}{AAA{\etalchar{+}}09}

\bibitem[AAA{\etalchar{+}}06]{AAABN-talg06}
Noga Alon, Baruch Awerbuch, Yossi Azar, Niv Buchbinder, and Joseph~(Seffi)
  Naor.
\newblock A general approach to online network optimization problems.
\newblock {\em ACM Trans. Algorithms}, 2(4):640--660, 2006.

\bibitem[AAA{\etalchar{+}}09]{AAABN03}
Noga Alon, Baruch Awerbuch, Yossi Azar, Niv Buchbinder, and Joseph Naor.
\newblock The online set cover problem.
\newblock {\em SIAM J. Comput.}, 39(2):361--370, 2009.

\bibitem[ABFP13]{ABFP13}
Yossi Azar, Umang Bhaskar, Lisa Fleischer, and Debmalya Panigrahi.
\newblock Online mixed packing and covering.
\newblock In {\em Proceedings of the Twenty-Fourth Annual {ACM-SIAM} Symposium
  on Discrete Algorithms, {SODA} 2013, New Orleans, Louisiana, USA, January
  6-8, 2013}, pages 85--100, 2013.

\bibitem[ACP14]{ACP14}
Yossi Azar, Ilan~Reuven Cohen, and Debmalya Panigrahi.
\newblock Online covering with convex objectives and applications.
\newblock {\em CoRR}, abs/1412.3507, 2014.

\bibitem[BBN12]{BBN-focs07-paging}
Nikhil Bansal, Niv Buchbinder, and Joseph Naor.
\newblock A primal-dual randomized algorithm for weighted paging.
\newblock {\em J. ACM}, 59(4):19, 2012.

\bibitem[BGMS11]{BGMS11}
Avrim Blum, Anupam Gupta, Yishay Mansour, and Ankit Sharma.
\newblock Welfare and profit maximization with production costs.
\newblock In {\em FOCS}, pages 77--86, Nov 2011.

\bibitem[BGP14]{BGP14}
Kshipra Bhawalkar, Sreenivas Gollapudi, and Debmalya Panigrahi.
\newblock Online set cover with set requests.
\newblock In {\em {APPROX/RANDOM}}, pages 64--79, 2014.

\bibitem[BN07]{BN-mono}
Niv Buchbinder and Joseph~(Seffi) Naor.
\newblock The design of competitive online algorithms via a primal-dual
  approach.
\newblock {\em Found. Trends Theor. Comput. Sci.}, 3(2-3):93--263, 2007.

\bibitem[BN09]{BN-MOR}
Niv Buchbinder and Joseph~(Seffi) Naor.
\newblock Online primal-dual algorithms for covering and packing.
\newblock {\em Math. Oper. Res.}, 34(2):270--286, 2009.

\bibitem[DH14]{DH14}
Nikhil~R. Devanur and Zhiyi Huang.
\newblock Primal dual gives almost optimal energy efficient online algorithms.
\newblock In {\em Proceedings of the Twenty-Fifth Annual {ACM-SIAM} Symposium
  on Discrete Algorithms, {SODA} 2014, Portland, Oregon, USA, January 5-7,
  2014}, pages 1123--1140, 2014.

\bibitem[GKP12]{GKP11-waoa}
Anupam Gupta, Ravishankar Krishnaswamy, and Kirk Pruhs.
\newblock Online primal-dual for non-linear optimization with applications to
  speed scaling.
\newblock In Thomas Erlebach and Giuseppe Persiano, editors, {\em Workshop on
  Approximation and Online Algorithms}, volume 7846 of {\em Lecture Notes in
  Computer Science}, pages 173--186, Sep 2012.

\bibitem[GKR00]{GKR98}
Naveen Garg, Goran Konjevod, and R.~Ravi.
\newblock A polylogarithmic approximation algorithm for the group {Steiner}
  tree problem.
\newblock {\em Journal of Algorithms}, 37(1):66--84, 2000.
\newblock (Preliminary version in {\em 9th SODA}, pages 253--259, 1998).

\bibitem[GN14]{GN12-mor}
Anupam Gupta and Viswanath Nagarajan.
\newblock Approximating sparse covering integer programs online.
\newblock {\em Mathematics of Operations Research}, 39(4):998--1011, 2014.

\bibitem[HK14]{HK15}
Zhiyi Huang and Anthony Kim.
\newblock Welfare maximization with production costs: {A} primal dual approach.
\newblock {\em CoRR}, abs/1411.4384, 2014.
\newblock To appear in SODA 2015.

\bibitem[Kor05]{Korman05}
Simon Korman.
\newblock On the use of randomness in the online set cover problem.
\newblock {\em M.Sc. thesis, Weizmann Institute of Science}, 2005.

\bibitem[Lat97]{Latala}
Rafa{\l} Lata{\l}a.
\newblock Estimation of moments of sums of independent real random variables.
\newblock {\em Ann. Probab.}, 25(3):1502--1513, 1997.

\bibitem[Roc70]{Rock}
R.~Tyrrell Rockafellar.
\newblock {\em Convex analysis}.
\newblock Princeton University Press, 1970.

\bibitem[Sch03]{Schrijver-book}
Alexander Schrijver.
\newblock {\em Combinatorial optimization. {P}olyhedra and efficiency.},
  volume~24 of {\em Algorithms and Combinatorics}.
\newblock Springer-Verlag, Berlin, 2003.

\end{thebibliography}

\end{document}